\DeclarePairedDelimiter\ceil{\lceil}{\rceil}
\theoremstyle{plain}
\newtheorem{theorem}{Theorem}
\newtheorem{proposition}[theorem]{Proposition}
\newtheorem{lemma}[theorem]{Lemma}
\newtheorem{construction}{Construction}
\theoremstyle{definition}
\newtheorem{definition}[theorem]{Definition}
\newtheorem{example}[theorem]{Example}
\newtheorem{remark}[theorem]{Remark}
\newcommand{\B}{{\mathcal B}}
\newcommand{\C}{{\mathcal C}}
\newcommand{\D}{{\mathcal D}}
\DeclareMathAlphabet{\mathbfsl}{OT1}{ppl}{b}{it} 
\newcommand{\bM}{{\mathbfsl{M}}}
\newcommand{\bL}{\mathbfsl{L}}
\newcommand{\bU}{\mathbfsl{U}}
\newcommand{\bW}{\mathbfsl{W}}
\newcommand{\bR}{\mathbfsl{R}}
\newcommand{\bu}{{\mathbfsl u}}
\newcommand{\by}{{\mathbfsl y}}
\newcommand{\bd}{{\mathbfsl{d}}}
\newcommand{\bk}{{\mathbfsl{k}}}
\newcommand{\bc}{{\mathbfsl c}}
\newcommand{\bx}{{\mathbfsl{x}}}
\newcommand{\bz}{{\mathbfsl{z}}}
\newcommand{\bsg}{{\boldsymbol{\sigma}}}
\newcommand{\bbJ}{{\mathbb J}}
\newcommand{\bbZ}{{\mathbb Z}}
\newcommand{\ppmod}[1]{~({\rm mod~}#1)}
\renewcommand{\ge}{\geqslant}
\renewcommand{\le}{\leqslant}
\newcommand{\et}{{\emph{et al.}}}
\newcommand{\tA}{\mathtt{A}}
\newcommand{\tT}{\mathtt{T}}
\newcommand{\tC}{\mathtt{C}}
\newcommand{\tG}{\mathtt{G}}
\newcommand{\GC}{$\mathtt{GC}$}
\newcommand{\enc}{\textsc{Enc}}
\newcommand{\dec}{\textsc{Dec}}
\newcommand{\Lburst}{{\rm L}^{\rm burst}}
\newcommand{\Bindel}{{\cal B}^{\rm indel}}
\newcommand{\Bedit}{{\cal B}^{\rm edit}}
\newcommand{\bburst}{b{\rm -burst}}
\newcommand{\rsb}[1]{#1{\rm -rsb}}
\begin{document}

\pagestyle{empty}

\title{Optimal Codes Correcting a Single Indel / Edit \\ for DNA-Based Data Storage} 
\author{
   \IEEEauthorblockN{
   	Kui Cai,
	Yeow Meng Chee, 
	Ryan Gabrys,
	Han Mao Kiah,
	and Tuan Thanh Nguyen}
\thanks{Yeow Meng Chee is with the Department of Industrial Systems Engineering and Management, National University of Singapore, Singapore 119077 (email: pvocym@nus.edu.sg).}
\thanks{Ryan Gabrys is with the Spawar Systems Center, San Diego, CA 92152 USA (email: ryan.gabrys@gmail.com).}
\thanks{Han Mao Kiah is with the School of Physical and Mathematical Sciences, Nanyang Technological University, Singapore 637371 (email: hmkiah@ntu.edu.sg).}
\thanks{Kui Cai and Tuan Thanh Nguyen are with the Singapore University of Technology and Design, Singapore 487372 (email: \{cai\_kui, tuanthanh\_nguyen\}@sutd.edu.sg).}
\thanks{This paper was presented in part at 2019 IEEE International Symposium on Information Theory \cite{chee2019edit}.}
}

\maketitle


\begin{abstract}
An indel refers to a single insertion  or deletion, 
while an edit refers to a single insertion, deletion or substitution.
In this paper, we investigate codes that combat either a single indel or a single edit and 
provide linear-time algorithms that encode binary messages into these codes of length $n$.
Over the quaternary alphabet, we provide two linear-time encoders.
One corrects a single edit with $\ceil{\log n}+O(\log \log n)$ redundancy bits, while
the other corrects a single indel with $\ceil{\log n}+2$ redundant bits. 
These two encoders are {\em order-optimal}.
The former encoder is the first known order-optimal encoder that corrects a single edit, 
while the latter encoder (that corrects a single indel) reduces the redundancy of the best known encoder of Tenengolts (1984) by at least four bits.
Over the DNA alphabet, we impose an additional constraint:  the {\em \GC-balanced constraint} and 
require that exactly half of the symbols of any DNA codeword to be either $\tC$ or $\tG$.
In particular, via a modification of Knuth's balancing technique, we provide a linear-time map that translates binary messages into \GC-balanced codewords and the resulting codebook is able to correct a single indel or a single edit.
These are the first known constructions of \GC-balanced codes that correct a single indel or a single edit.
\end{abstract}


\section{Introduction}

Advances in synthesis and sequencing technologies 
have made DNA macromolecules an attractive medium for digital information storage.
Besides being biochemically robust, DNA strands offer ultrahigh storage densities of $10^{15}$-$10^{20}$ bytes per gram of DNA, as demonstrated in recent experiments (see \cite[Table 1]{Yazdi.2017}). 
These synthetic DNA strands may be stored {\em ex vivo} or {\em in vivo}. 
When the DNA strands are stored {\em ex vivo} or in a non-biological environment, 
code design takes into account the synthesising and sequencing platforms being used 
(see \cite{Yazdi.etal:2015b} for a survey).
In contrast, when the DNA strands are stored {\em in vivo} or recombined with the DNA of a living organism,
we design codes to correct errors due to the biological mutations \cite{Jain:2017}. 

Common to both environments are errors due to {\em insertion}, {\em deletion} and {\em substitution}.
For example, Organick \et{} recently stored 200MB of data in 13 million DNA strands and 
reported insertion, deletion and substitution rates to be $5.4\times 10^{-4}$, $1.5\times 10^{-3}$ and $4.5\times 10^{-3}$, respectively \cite{Organick:2018}.
When DNA strands are stored {\em in vivo}, these errors are collectively termed as {\em point mutations} and occur during the process of DNA replication \cite{Clancy:2008}.

For convenience, we refer to a single insertion  or deletion as an {\em indel}, 
and a single insertion, deletion or substitution as an {\em edit}. 
In this work, we investigate codes that combat either a single indel or a single edit and 
provide efficient methods of encoding binary messages into these codes.

To correct a single indel, we have the celebrated class of Varshamov-Tenengolts (VT) codes. 
While Varshamov and Tenengolts introduced the binary version to correct asymmetric errors \cite{var1965},
Levenshtein later provided a linear-time decoding algorithm to correct a single indel for the VT codes \cite{le1965}.
In the same paper, Levenshtein modified the VT construction to correct a single edit.
In both constructions, the number of redundant bits is $\log n+O(1)$, where $n$ is the length of a codeword.
Unless otherwise stated, all logarithms are taken base two.

The VT construction partitions all binary words of length $n$ into certain $n+1$ classes, where each class is a VT code.
Curiously, even though efficient decoding  was known since 1965,
a linear-time algorithm to encode into a VT code was only proposed by Abdel-Ghaffar and Ferriera in 1998
\cite{kas1998}.

A nonbinary version of the VT codes was proposed by Tenengolts \cite{te1984},
who also provided a linear-time method to correct a single indel.
In the same paper, Tenengolts also provided an efficient encoder that corrects a single indel.
For the quaternary alphabet, this encoder requires at least $\log n + 7$ bits for words of length $n\ge 20$.
However, the codewords obtained from this encoder are not contained in a single non-binary VT code 
(see Section~\ref{sec:previous} for a discussion).
Hence, recently, Abroshan \et{} presented a systematic encoder that maps words into a single non-binary VT code \cite{abroshan}. 
Unfortunately,  the redundancy of this encoder is $\lceil\log n\rceil (\log q +1)+2(\log q-1)$, and when $q=4$, 
the redundancy is $3\lceil\log n\rceil+2$.
To the best of our knowledge, 
there is no known efficient construction for $q$-ary codes (or even quaternary codes) that can correct a single edit.

To further reduce errors, we also impose certain weight constraints on the individual codewords.
Specifically, the {\em \GC-content} of a DNA string refers to the percentage of nucleotides that corresponds to $\tG$ or $\tC$,
and DNA strings with \GC-content that are too high or too low are more prone to both synthesis and sequencing errors (see for example, \cite{Yakovchuk:2006, ross2013}).
Therefore, most work use DNA strings whose \GC-content is close to 50\% as codewords
and use randomizing techniques to encode binary message into the latter \cite{Organick:2018}.
Recently, in addition to the \GC-content constraints, Immink and Cai studied the homopolymer runlength constraint for DNA codewords \cite{kee2018}.

In this paper, in addition to correcting either a single indel or a single edit, 
we provide linear-time encoders that map binary messages into codewords that have \GC-content exactly 50\%.
To the best of our knowledge, no such codebooks are known prior to this work. 
We summarize our contributions.
\begin{enumerate}[(A)]
\item In Section~\ref{sec:indel}, we present a linear-time quaternary encoder that corrects a single indel with $\ceil{\log n}+2$ bits of redundancy. This construction improves the encoder of Tenengolts \cite{te1984} by reducing the redundancy by {\em at least four bits}. We then proceed to extend and generalize this construction so as to design efficient encoder for codes capable of correcting a burst of indels with $\log n+O(\log \log n)$ bits of redundancy. 

\item {
In Section~\ref{sec:edit}, we construct two classes of quaternary codes that corrects a single edit.
The first class of codes incurs $2\ceil{\log n}+2$ bits of redundancy, 
while the second class of codes incurs only $\log n+O(\log \log n)$ bits of redundancy and is thus order-optimal. 
Even though the former is not order-optimal, it outperforms the latter class when $n\le 512$.
In Section~\ref{sec:ntedit}, we study a type of edits specific to the DNA storage channel and 
provide an efficient encoder that correct such edits with $\log n+\log \log n + O(1)$ bits of redundancy.
}

\item In Section~\ref{sec:balanced}, we encode binary messages to \GC-balanced codewords.
Via a modification of Knuth's balancing method, we obtain \GC-balanced single indel/edit-correcting encoders.
\end{enumerate}

We first go through certain notation and define the problem. 
{For the convenience of the reader, relevant notation and terminology referred to throughout the paper is summarized in Table I.}

\begin{table}
\renewcommand{\arraystretch}{1.3}
\begin{tabular}{ p{5.8cm} p{11.8cm} }
 \hline
 Notation    & Description \\
 \hline
 $\Sigma$& alphabet of size $q$ \\
 $\Sigma_4$ &quaternary alphabet, i.e. $q=4, \Sigma_4=\{0,1,2,3\}$\\
 $\D$& DNA alphabet, $\D=\{\tA, \tT, \tC, \tG\}$\\
 $\bx \by$& the concatenation of two sequences \\
 $\bx || \by$ & the interleaved sequence \\
 $\bsg, \bU_\bsg, \bL_\bsg$ & a DNA sequence $\bsg$, the upper sequence of $\bsg$, and the lower sequence of $\bsg$ \\
  $\Psi$  & the one-to-one map that converts a DNA sequence to a binary sequence \\
  ${\rm Syn}(\bx)$& the syndrome of a sequence $\bx$\\
  ${\rm Rsyn}(\bx)$& the run-syndrome of a sequence $\bx$\\
  ${\rm indel} $ & single insertion or single deletion \\
  ${\rm edit} $ & single insertion, or single deletion, or single substitution\\
  $b$-burst-indel & $b$ consecutive insertions or $b$ consecutive deletions \\
  $b$-burst-edit & $b$ consecutive insertions, or $b$ consecutive deletions, or $b$ consecutive substitutions\\
  $\Bindel(\bx)$   & the set of words that can be obtained from $\bx$ via at most a single indel \\
  $\Bedit(\bx)$    &  the set of words that can be obtained from $\bx$ via at most a single edit\\
 $\Bindel_{\bburst}(\bx)$      & the set of words that can be obtained from $\bx$ via a $b$-burst-indel \\
  $\Bedit_{\bburst}(\bx)$      & the set of words that can be obtained from $\bx$ via a $b$-burst-edit\\
  ${\rm Bal}_k(n)$& the set of quaternary words of length $n$ that is $k$-sum-balanced\\
\hline
\end{tabular}
\vspace{5mm}

\begin{tabular}{ p{3cm} p{9.3cm} p{3cm} p{1.5cm}}
\hline
  Code and \newline Encoder / Decoder   & Description & Redundancy & Remark \\
 \hline
  ${\rm VT}_a(n)$& 
  binary Varshamov-Tenengolts code that corrects a single indel & &Section II-B\\
  
  ${\rm L}_a(n)$& binary Levenshtein code that corrects a single edit & &Section II-B\\
  
  $\enc_{\mathbb{L}}, \dec_{\mathbb{L}}$& encoder and decoder for ${\rm L}_a(n)$ &$\ceil{\log n}+1$ bits &Section II-C\\
  
  ${\rm T}_{a,b}({n;q})$& nonbinary VT code that corrects a single indel & &Section II-B\\
  \hline
  
  $\Lburst_a(n)$ & binary Levenshtein code that corrects a 2-burst-indel & &Section III-A\\
  
  $\C_a(n)$& DNA code that corrects a single indel & &Section III-A\\
  
  $\enc_{\mathbb{I}}, \dec_{\mathbb{I}}$& encoder and decoder for $\C_a(n)$ & $\ceil{\log n}+2$ bits &Section III-B\\
    ${\rm SVT}_{c,d,P}(n)$& Shifted-VT code that correct a single indel/edit provided that the error is located within $P$ consecutive positions & &Section III-C\\
    
  $\enc_{SVT}$& encoder for ${\rm SVT}_{c,d,P}(n)$ & $\ceil{\log P}+2$ bits &Section III-C\\
  $\enc_{RLL}$& encoder that outputs a binary sequence that the longest run is at most $\ceil{\log n}+3$ & one bit &Section III-C\\
  $\C_{\bburst}^{\rm indel}(n,P,r;a,c,d)$& DNA code that corrects a $b$-burst-indel & &Section III-C\\
  $\enc_{\bburst}^{\rm indel}$& encoder for $\C_{\bburst}^{indel}(n,P,r;a,c,d)$ & $\log n+O(\log \log n)$ bits &Section III-C\\
  \hline
  
  $\enc_{\mathbb{E}}^{A}$& encoder for DNA codes that corrects a single edit &$2\ceil{\log n}+2$ bits  &Section IV-A\\
  $\C^B(n;a,b,c,d)$& order-optimal DNA code that corrects a single general edit & &Section IV-B\\
  $\C_{a,b,c}^{\rm nt}(n,r,P)$& order-optimal DNA code that corrects a single nucleotide edit & &Section IV-C\\
  $\enc_{\mathbb{E}}^{{\rm nt}}, \dec_{\mathbb{E}}^{{\rm nt}}$& the encoder and decoder for $\C_{a,b,c}^{\rm nt}(n,r,P)$ & $\log n+O(\log \log n)$ bits &Section IV-C\\
  \hline
  $\enc_{\tG\tC}, \dec_{\tG\tC}$
  & encoder and decoder for GC-balanced DNA code that corrects a single edit & $3\ceil{\log n}+2$ bits &Section V\\
  \hline
 \end{tabular}
\caption{Notation Summary}\label{tab.notation}
\end{table}


\section{Preliminary}\label{sec:prelim}

Let $\Sigma$ denote an {\em alphabet} of size $q$.
For any positive integer $m<n$, we let $[m,n]$ denote the set $\{m,m+1,\ldots,n\}$ and $[n]=[1,n]$.

Given two sequences $\bx$ and $\by$, we let $\bx\by$ denote the {\em concatenation} of the two sequences.
In the special case where $\bx,\by \in \Sigma^n$, we use $\bx || \by$ to denote their {\em interleaved sequence} $x_1y_1x_2y_2\ldots x_ny_n$.
For a subset $I=\{i_1,i_2,\ldots, i_k\}$ of coordinates, we use $\bx|_I$ to denote the vector $x_{i_1}x_{i_2}\ldots x_{i_k}$.

Let $\bx\in \Sigma^n$. We are interested in the following {\em error balls}:
\begin{align*}
\Bindel(\bx) & \triangleq \{\bx\}\cup\{ \by : \by \mbox{ is obtained from $\bx$ via a single insertion or deletion}\},\\
\Bedit(\bx)   & \triangleq \{\bx\}\cup\{ \by : \by \mbox{ is obtained from $\bx$ via a single insertion, deletion or substitution}\}.
\end{align*}
Observe that when $\bx\in\Sigma^n$, both $\Bindel(\bx)$ and $\Bedit(\bx)$ are subsets of $\Sigma^{n-1}\cup \Sigma^n \cup \Sigma^{n+1}$.
Hence, for convenience, we use $\Sigma^{n*}$ to denote the set $\Sigma^{n-1}\cup \Sigma^n \cup\Sigma^{n+1}$.

Let $\C\subseteq \Sigma^n$. We say that {\em $\C$ corrects a single indel} if 
$\Bindel(\bx)\cap \Bindel(\by)=\varnothing$ for all distinct $\bx,\by \in \C$.
Similarly, {\em $\C$ corrects a single edit} if 
$\Bedit(\bx)\cap \Bedit(\by)=\varnothing$ for all distinct $\bx,\by \in \C$.
In this work, not only are we interested in constructing large codes that correct a single indel or edit,
we desire efficient encoders that map binary messages into these codes.

\begin{definition}
The map $\enc: \{0,1\}^m\to \Sigma^n$
is a {\em single-indel-correcting encoder} if there exists a {\em decoder} map $\dec:\Sigma^{n*} \to \{0,1\}^m \cup \{?\}$ such that the following hold.
\begin{enumerate}[(i)]
\item For all $\bx\in\{0,1\}^m$, we have $\dec\circ\enc(\bx)=\bx$.
\item If $\bc=\enc(\bx)$ and $\by\in \Bindel(\bc)$, then $\dec(\by)=\bx$.
\end{enumerate}
Hence, we have that the code $\C=\{\bc : \bc=\enc(\bx),\, \bx\in\{0,1\}^m\}$ corrects a single indel and $|\C|=2^m$.
The {\em redundancy of the encoder} is measured by the value $n \log q -  m$ (in bits).
A  {\em single-edit-correcting encoder} is similarly defined.
\end{definition}

Therefore, our design objectives for a single-indel-correcting or single-edit-correcting encoder are as follow.
\begin{itemize}
\item The redundancy is $K \log n + o(\log n)$, where $K$ is a constant to be minimized.
When $K=1$, we say that the encoder is {\em order-optimal}.
\item The encoder $\enc$ can be computed in time $O(n)$.
\item The decoder $\dec$ can be computed in time $O(n)$.
\end{itemize}


\subsection{DNA Alphabet}

When $q=4$, we denote the alphabet by $\D=\{\tA, \tT, \tC, \tG\}$
and consider the following one-to-one correspondence between $\D$ and two-bit sequences:
\[ \tA \leftrightarrow 00,\quad \tT \leftrightarrow 01,\quad\tC \leftrightarrow 10,\quad\tG \leftrightarrow 11.\]
Therefore, given a sequence $\bsg\in \D^n$, we have a corresponding a binary sequence $\bx\in\{0,1\}^{2n}$
and we write $\bx=\Psi(\bsg)$.

Let $n$ be even. 
We say that $\bsg\in \D^n$ is \emph{ \GC-balanced} if 
the number of symbols in $\bsg$ that correspond to $\tC$ and $\tG$ is $n/2$.
On the other hand, we that that $\bx\in \{0,1\}^n$ is {\em balanced}
if the number of ones in $\bx$ is $n/2$.
For DNA-based storage, we are interested in codewords that are \GC-balanced.

\begin{definition}
A single-indel-correcting encoder $\enc:\{0,1\}^m \to\D^n$ is a {\em \GC-balanced single-indel-correcting encoder} 
if $\enc(\bx)$ is \GC-balanced for all $\bx\in\{0,1\}^m$.
A {\em \GC-balanced single-edit-correcting encoder} is similarly defined.
\end{definition}

Given $\bsg \in \D^n$, let $\bx=\Psi(\bsg)\in \{0,1\}^{2n}$
and we set $\bU_\bsg=x_1x_3\cdots x_{2n-1}$ and $\bL_\bsg=x_2x_4\cdots x_{2n}$.
In other words, $\bsg=\Psi^{-1}(\bU_\bsg || \bL_\bsg)$.
We refer to $\bU_{\sigma}$ and $\bL_\bsg$ as the {\em upper sequence} and {\em lower sequence} of $\bsg$, respectively.

The following example demonstrates the relation between $\bsg$, $\bU_{\bsg}$ and $\bL_{\sigma}$.

\begin{example}\label{example1}
Suppose that $\bsg=\tA\tC\tA\tG\tT\tG$ and we check that $\bsg$ is \GC-balanced. 
Now, $\bx\triangleq \Psi(\bsg)=001000110111$ and we write $\bU_\bsg$ and $\bL_\bsg$ as follow.

\begin{center}
\begin{tabular}{|c|| c | c| c| c| c|c|}
 \hline
 $\bsg$ & $\tA$ & $\tC$ & $\tA$  & $\tG$ & $\tT$ & $\tG$ \\ \hline
 $\bU_{\bsg}$ & 0 & 1 & 0 & 1 & 0 & 1 \\\hline
 $\bL_{\bsg}$ & 0 & 0 & 0 & 1 & 1 & 1\\\hline
 \end{tabular}
 \end{center}
\end{example}

We make certain observations on $\bsg$, $\bU_\bsg$ and $\bL_\bsg$.

\newpage

\begin{proposition}\label{prop:obs}
Let $\bsg\in\D^n$. Then the following are true.
\begin{enumerate}[(a)]
\item $\bsg$ is \GC-balanced if and only if $\bU_\bsg$ is balanced.
\item $\bsg'\in \Bindel(\bsg)$ implies that $\bU_{\bsg'}\in \Bindel(\bU_\bsg)$ and $\bL_{\bsg'}\in \Bindel(\bL_\bsg)$.
\item $\bsg'\in \Bedit(\bsg)$ implies that $\bU_{\bsg'}\in \Bedit(\bU_\bsg)$ and $\bL_{\bsg'}\in \Bedit(\bL_\bsg)$.
\end{enumerate}
\end{proposition}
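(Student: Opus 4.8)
The plan is to exploit the fact that $\Psi$ encodes each DNA symbol as a \emph{fixed-length} two-bit block: writing $\bsg=\si_1\si_2\cdots\si_n$, we have $\bx=\Psi(\bsg)=\Psi(\si_1)\Psi(\si_2)\cdots\Psi(\si_n)$, and by definition the $i$-th entry of $\bU_\bsg$ (resp.\ $\bL_\bsg$) is the first (resp.\ second) bit of the block $\Psi(\si_i)$. For part (a), I would read off from the table $\tA\leftrightarrow 00$, $\tT\leftrightarrow 01$, $\tC\leftrightarrow 10$, $\tG\leftrightarrow 11$ that the first bit of $\Psi(\si_i)$ equals $1$ precisely when $\si_i\in\{\tC,\tG\}$. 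Hence $\weight(\bU_\bsg)$ equals the number of indices $i$ with $\si_i\in\{\tC,\tG\}$, so $\bU_\bsg$ is balanced (has $n/2$ ones) if and only if $\bsg$ is \GC-balanced.

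For part (b), suppose first $\bsg'$ is obtained from $\bsg$ by deleting $\si_i$; then $\Psi(\bsg')$ is $\bx$ with its $i$-th two-bit block removed, so $\bU_{\bsg'}$ is $\bU_\bsg$ with entry $i$ deleted and $\bL_{\bsg'}$ is $\bL_\bsg$ with entry $i$ deleted. If instead $\bsg'$ is obtained by inserting a symbol $\tau$ between two positions of $\bsg$, then $\Psi(\bsg')$ has the block $\Psi(\tau)$ inserted at the corresponding place, so $\bU_{\bsg'}$ (resp.\ $\bL_{\bsg'}$) is $\bU_\bsg$ (resp.\ $\bL_\bsg$) with a single entry inserted. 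Together with the trivial case $\bsg'=\bsg$, this shows $\bU_{\bsg'}\in\Bindel(\bU_\bsg)$ and $\bL_{\bsg'}\in\Bindel(\bL_\bsg)$.

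For part (c), the insertion and deletion cases follow immediately from part (b) together with $\Bindel(\cdot)\subseteq\Bedit(\cdot)$, so the only new case is a substitution replacing $\si_i$ by some $\si_i'\neq\si_i$. Then the $i$-th block of $\bx$ changes from $\Psi(\si_i)$ to $\Psi(\si_i')$, and these two blocks differ in the first bit, the second bit, or both; accordingly $\bU_{\bsg'}$ differs from $\bU_\bsg$ in at most the single position $i$, so $\bU_{\bsg'}$ is either $\bU_\bsg$ itself or the result of a single substitution in $\bU_\bsg$, and in either case $\bU_{\bsg'}\in\Bedit(\bU_\bsg)$; the identical argument gives $\bL_{\bsg'}\in\Bedit(\bL_\bsg)$. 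I do not expect a genuine obstacle here, as the statement is a bookkeeping verification; the only point needing care is the alignment, namely that it is precisely the fixed two-bit block length of $\Psi$ which prevents a single symbol-level error from spilling over into neighbouring positions of $\bU_\bsg$ or $\bL_\bsg$, so that the clean position-by-position correspondence goes through.
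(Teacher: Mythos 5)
Your proof is correct and matches the reasoning the paper intends: the paper states Proposition~\ref{prop:obs} without proof as an observation, and its Remark~\ref{rem:observation} relies on exactly the position-by-position correspondence through the fixed two-bit blocks of $\Psi$ that you spell out. Your verification of (a), (b), (c) is the routine bookkeeping the authors left implicit, so there is nothing further to add.
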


\begin{remark}\label{rem:observation}
The statement in Proposition~\ref{prop:obs} can be made stronger.
Suppose that there is an indel at position $i$ of $\bsg$.
Then there is exactly one indel at the same position $i$ in both upper and lower sequences of $\sigma$. 
For example, consider $\bsg=\tA\tC\tA\tG\tT\tG$ as in Example~\ref{example1}. 
If the third nucleotide $\tA$ is deleted, we obtain $\bsg'=\tA\tC\tG\tT\tG$ and hence,
$\bU'_{\bsg'}=01101$ and $\bL_{\bsg'}=00111$. 
Furthermore, we observe that $\Psi(\bsg)=\bU_\bsg||\bL_\bsg$ suffers a {\em burst of deletions of length two}. 
In our example, $\Psi(\bsg)=0010{\color{red}{00}}110111$ while 
$\Psi(\bsg')=0010110111$.
In Section~\ref{sec:edit}, we make use of this observation to reduce the redundancy of our encoders.
\end{remark}


\subsection{Previous Works}
\label{sec:previous}

The {\em binary VT syndrome} of a binary sequence $\bx\in\{0,1\}^n$ is defined to be
${\rm Syn}(\bx)=\sum_{i=1}^n i x_i$.

For $a \in  \bbZ_{n+1}$, let
\begin{equation}\label{VTcodes}
{\rm VT}_a(n)=\left\{\bx\in \{0,1\}^n: {\rm Syn}(\bx) = a \ppmod{n+1}\right\}.
\end{equation}   
Then ${\rm VT}_a(n)$ form the family of binary codes known as the {\em Varshamov-Tenengolts codes} \cite{le1965}. 
These codes can correct a single indel and Levenshtein later provided a linear-time decoding algorithm \cite{le1965}. 
For any $n$, we know that there exists $a\in\bbZ_{n+1}$ such that ${\rm VT}_a(n)$ has at least $2^n/(n+1)$ codewords.
However, the first known linear-time encoder that maps binary messages into ${\rm VT}_a(n)$ was only described in 1998,
when Abdel-Ghaffar and Ferriera gave a linear-time systematic encoder with redundancy $\ceil{\log (n+1)}$. 

To also correct a substitution, Levenshtein \cite{le1965} constructed the following code 
\begin{equation}\label{VTsub}
{\rm L}_a(n)=\left\{\bx\in \{0,1\}^n: {\rm Syn}(\bx) = a \ppmod{2n}\right\},
\end{equation}  
and provided a decoder that corrects a single edit. However, an efficient encoder that maps binary messages into ${\rm L}_{a}(n)$ was not mentioned in the paper. For completeness, we describe the idea to design such encoder in Subsection~\ref{template}, and we refer it as the {\bf Levenshtein-encoder}, or {\bf Encoder $\mathbb{L}$}.

\begin{theorem}[Levenshtein \cite{le1965}]\label{thm:lev}
Let ${\rm L}_a(n)$ be as defined in \eqref{VTsub}. 
There exists a linear-time decoding algorithm $\dec^L_a:\{0,1\}^{n*}\to {\rm L}_a(n)\cup \{?\}$ such that the following hold.
If $\bc\in {\rm L}_a(n)$ and $\by\in\Bedit(\bc)$, 
then $\dec^L_a(\by)=\bc$.
\end{theorem}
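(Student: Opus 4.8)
The plan is to reconstruct Levenshtein's classical argument, distinguishing the two cases that a received word $\by$ can present: either $|\by| = n$ (so the error, if any, was a substitution), or $|\by| = n\pm 1$ (so the error was an indel). The decoder first reads $|\by|$, which immediately tells it which case it is in.

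\emph{Indel case ($|\by| = n-1$ or $|\by| = n+1$).} Here I would argue that an indel changes ${\rm Syn}(\bx)$ in a controlled way. Suppose $\bc\in{\rm L}_a(n)$ and a single symbol is deleted from position $p$ (the insertion case is symmetric). Writing $\by$ for the result, one computes that ${\rm Syn}(\bc) - {\rm Syn}(\by)$ equals $p\cdot c_p$ minus the shift contribution $\sum_{i>p} c_i$, i.e.\ the deleted bit's weighted position plus the number of $1$'s to its right. The key observation is that this deficiency, call it $\Delta$, lies in a range of size roughly $n$: more precisely, if $w_1$ is the number of $1$'s in the suffix after the deleted position and $w_0$ the number of $0$'s in that suffix, then $\Delta$ determines $w_1$ (as $\Delta \bmod$ something) and the location up to a run, exactly as in the VT decoding argument. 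Since ${\rm Syn}$ is taken mod $2n$ and the total ambiguity from a single indel is at most $n$, the residue $a - {\rm Syn}(\by) \pmod{2n}$ pins down $\Delta$ uniquely, and then the standard VT localization (find the unique position by counting $0$'s/$1$'s in a suffix) recovers $\bc$. This is essentially Levenshtein's 1965 VT decoder applied verbatim; I would cite that mechanism and just check the bookkeeping for the modulus $2n$ rather than $n+1$.

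\emph{Substitution case ($|\by| = n$).} If $\by = \bc$ then ${\rm Syn}(\by) = a \pmod{2n}$ and we output $\by$. Otherwise a single bit flipped at some position $p$: if $0\to1$ then ${\rm Syn}(\by) - {\rm Syn}(\bc) = p$, and if $1\to 0$ then ${\rm Syn}(\by) - {\rm Syn}(\bc) = -p$. So the syndrome discrepancy $\delta \triangleq {\rm Syn}(\by) - a \pmod{2n}$ is either $p$ or $2n - p$ with $p\in[1,n]$; in the first subcase $\delta\in[1,n]$ and we know $y_p = 1$ (so flip it to $0$), in the second subcase $\delta\in[n+1,2n-1]$, we set $p = 2n-\delta$ and we know $y_p = 0$ (so flip it to $1$). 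The ranges $[1,n]$ and $[n+1,2n-1]$ are disjoint, so the two subcases never collide, and the modulus $2n$ is exactly what makes this work. The one boundary subtlety to verify is the case $\delta = 0$ versus $\delta = n$ (and whether $p = n$ can be confused with an error-free word); I would check that $\delta = 0$ forces no error, while $\delta = n$ unambiguously indicates a $0\to 1$ flip at position $n$.

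\emph{Main obstacle and remaining items.} The genuinely delicate point is to confirm that the indel case and the substitution case cannot be confused and, within the indel case, that the modulus $2n$ still suffices for unique VT-style localization — i.e.\ that passing from modulus $n+1$ (pure indel) to modulus $2n$ (indel+substitution) does not destroy the single-indel-correcting property. Since the decoder branches on $|\by|$, cross-case confusion is automatic; within the indel branch, $2n > n+1$ so the VT argument is only strengthened. Linear-time computability of $\dec^L_a$ is immediate: computing ${\rm Syn}(\by) \bmod 2n$ is a single pass, and the VT localization is another pass counting symbols in a suffix. I would close by remarking that correctness of the encoder side is not claimed here — only the existence of the decoder — so nothing further is needed.
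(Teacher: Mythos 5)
The paper itself offers no proof of this theorem---it is quoted directly from Levenshtein (1965)---and your reconstruction is precisely the classical argument (branch on the received length; VT-style localization of the deficiency, which lies in $[0,n+1]$ and is pinned down exactly since the modulus $2n$ exceeds it; syndrome discrepancy $\pm p$ for a substitution), so it is essentially the same approach and is correct. The one slip is in the boundary case you flagged: $\delta=n$ does \emph{not} force a $0\to1$ flip, since a $1\to0$ flip at position $n$ also gives $\delta\equiv -n\equiv n\pmod{2n}$; however both interpretations place the error at position $n$, so flipping $y_n$ (in the direction dictated by its received value) still recovers $\bc$ uniquely, and the theorem is unaffected.
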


In 1984, Tenengolts \cite{te1984} generalized the binary VT codes to nonbinary ones. 
Tenengolts defined the {\em signature} of a $q$-ary vector $\bx$ of length $n$ to be 
the binary vector $\alpha(\bx)$ of length $n-1$, 
where $\alpha(x)_i=1$ if $x_{i+1}\geq x_i$, and $0$ otherwise, for $i\in[n-1]$.
For $a\in \bbZ_n$ and $b\in \bbZ_q$, set 
  
{
\begin{equation*}
\label{qaryVT}
    {\rm T}_{a,b}({n;q}) \triangleq \left\{ \bx \in \bbZ_q^n : \alpha(\bx)\in{\rm VT}_a(n-1) \text{ and }
    \sum_{i=1}^n x_i = b\ppmod{q} \right\}.
\end{equation*}  
}
  
Then Tenengolts showed that $T_{a,b}(n;q)$ corrects a single indel and 
there exists $a$ and $b$ such that the size of ${\rm T}_{a,b}({n;q})$ is at least $q^n/(qn)$. 
These codes are known to be asymptotically optimal. 

In the same paper, Tenengolts also provided a systematic $q$-ary single-indel-encoder 
with redundancy $\log n +C_q$, where $n$ is the length of a codeword and $C_q$ is independent of $n$.
When $q=4$, we have $7\le C_4\le 10$ for $n\ge 20$.
However, for this encoder, the codewords do not belong to ${\rm T}_{a,b}({n;q})$ for some fixed values of $a$ and $b$.
Recently, Abroshan \et{} provided a method to systematically encode $q$-ary messages into ${\rm T}_{a,b}({n;q})$ \cite{abroshan}. 
However, the redundancy of such encoder is much larger compared with Tenengolts' work. 
Specifically, the encoder of Abroshan \et{} \cite{abroshan} uses $(\log q+1)\lceil\log n\rceil + 2(\log q-1)$ bits of redundancy and particularly, when $q=4$, the redundancy is $3\lceil\log n\rceil+2$.


\subsection{Levenshtein Encoder}\label{template}

Recall the definition of ${\rm L}_a(n)$ in \eqref{VTsub}
and recall that ${\rm L}_a(n)$ is a binary code that can correct a single edit.
We now provide a linear-time encoder that maps binary messages into ${\rm L}_a(n)$. 

\vspace{1mm}

\noindent{\bf Encoder $\mathbb{L}$}. Given $n$, set $t\triangleq\ceil{\log n}$ and $m\triangleq n-t-1$. 

{\sc Input}: $\bx\in \{0,1\}^m$\\
{\sc Output}: $\bc \triangleq \enc_{\mathbb{L}}(\bx)\in {\rm L}_a(n)$\\[-2mm]
\begin{enumerate}[(I)]
\item Set $S \triangleq \{2^{j-1} : j \in [t]\}\cup \{n\}$ and $I \triangleq [n]\setminus S$.
\item Consider $\bc'\in \{0,1\}^n$, where $\bc'|_I=\bx$ and $\bc'|_S=0$. 
Compute the difference $d'\triangleq a-{\rm Syn}(\bc') \ppmod{2n}$. 
In the next step, we modify $\bc'$ to obtain a codeword $\bc$ with ${\rm Syn}(\bc)=a \ppmod{2n}$.
\item We have the following two cases. 
\begin{itemize}
\item \underline{Suppose that $d'< n$}. 
Let $y_{t-1}\ldots y_1y_0$ be the binary representation of $d'$.
In other words, $d' = \sum_{i=0}^{t-1} y_i 2^i$. 
Then we set $c_{2^{j-1}}=y_{j-1}$ for $j\in [t]$, $c_n=0$ and $\bc|_I=\bc'|_I$ to obtain $\bc$. 
\item \underline{Suppose that $n \leq d' < 2n$}. 
We now compute the difference $d''\triangleq d'-n \ppmod{2n}$ and hence, $d'' <n$. 
Consequently, the binary representation of $d''$ is of length $t=\lceil\log n\rceil$ and 
let it be $y_{t-1}\ldots y_1y_0$. 
As before, we set $c_{2^{j-1}}=y_{j-1}$ for $j\in [t]$, $c_n=1$ and $\bc|_I=\bc'|_I$ to obtain $\bc$.
\end{itemize}
\end{enumerate} 

We illustrate Encoder $\mathbb{L}$ via an example.

\begin{example}
Consider $n=10$ and $a=0$. 
Then $t=4$ and $m=5$.
Suppose that the message is $\bx=11011$ and we compute $\enc_{\mathbb{L}}(\bx)\triangleq \bc={\color{blue}{c_1}}{\color{blue}{c_2}}c_3{\color{blue}{c_4}}c_5c_6c_7{\color{blue}{c_8}}c_9{\color{blue}c_{10}} \in {\rm L}_0(10)$. 

\begin{enumerate}[(I)]
\item Set $S=\{1,2,4,8,10\}$ and $I=\{3,5,6,7,9\}$.
\item Then we set $\bc'|_I=11011$ to obtain $\bc'={\color{blue}00}{1\color{blue}0}101{\color{blue}0}1{\color{blue}0}$. We then compute $d'=a- {\rm Syn}(\bc')= 16 \ppmod{20}$. 
\item Since $d'>10$, we compute $d''=d'-10= 6 \ppmod{20}$. The binary representation of $6$ is $0110$. 
Therefore, we set $c_1=0$, $c_2=1$, $c_4=1$, $c_8=0$. Since $d'>10$, we set $c_{10}=1$.
In summary, $\bc={\color{blue}01}{1\color{blue}1}101{\color{blue}0}1{\color{blue}1}$. We can verify that ${\rm Syn}(\bc)=10  \ppmod{20}$.
\end{enumerate}
\end{example}

\begin{theorem} Encoder $\mathbb{L}$ is correct and has redundancy $\ceil{\log n}+1$ bits. 
In other words, $\enc_{\mathbb{L}}(\bx)\in {\rm L}_a(n)$ for all $\bx\in\{0,1\}^m$.
\end{theorem}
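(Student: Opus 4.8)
The plan is to verify three claims: that $\enc_{\mathbb{L}}$ is well-defined with output in $\{0,1\}^n$; that the output $\bc$ satisfies ${\rm Syn}(\bc)\equiv a\ppmod{2n}$, hence $\bc\in {\rm L}_a(n)$; and that the redundancy equals $\ceil{\log n}+1$. Once these are in hand, the decoder $\dec^L_a$ supplied by Theorem~\ref{thm:lev} completes the picture.

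\textbf{Setting up the index set.} First I would observe that, since $t=\ceil{\log n}$, we have $t-1<\log n$ and therefore $2^{t-1}<n$. Consequently $2^0,2^1,\dots,2^{t-1}$ are $t$ distinct elements of $[n-1]$, none of which equals $n$, so $|S|=t+1$ and $|I|=n-t-1=m$. Thus writing $\bx$ on the coordinates in $I$ (and leaving the coordinates in $S$ free) is consistent and the output indeed has length $n$; moreover the redundancy is $n\log q-m=n-(n-t-1)=t+1=\ceil{\log n}+1$, which settles the redundancy claim.

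\textbf{Tracking the syndrome.} The final word $\bc$ agrees with $\bc'$ on $I$ (so $\bc|_I=\bc'|_I=\bx$) and differs from $\bc'$ only on $S$, where $\bc'$ is identically zero. Hence
\[
{\rm Syn}(\bc)-{\rm Syn}(\bc')=\sum_{j=1}^{t}2^{j-1}c_{2^{j-1}}+n\,c_n=\sum_{i=0}^{t-1}2^{i}y_{i}+n\,c_n.
\]
If $d'<n$, then $c_n=0$ and $\sum_{i=0}^{t-1}2^iy_i=d'$; the point here is that $d'<n\le 2^{t}$, so $d'$ does have a binary expansion of length at most $t$, and therefore ${\rm Syn}(\bc)\equiv {\rm Syn}(\bc')+d'\equiv a\ppmod{2n}$. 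If $n\le d'<2n$, then $c_n=1$ and $\sum_{i=0}^{t-1}2^iy_i=d''=d'-n<n\le 2^t$, so ${\rm Syn}(\bc)-{\rm Syn}(\bc')\equiv (d'-n)+n=d'\ppmod{2n}$, and again ${\rm Syn}(\bc)\equiv a\ppmod{2n}$. Since these two cases exhaust $\bbZ_{2n}$, in all cases $\bc\in {\rm L}_a(n)$.

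\textbf{Decoding and injectivity.} Given $\by\in\Bedit(\bc)$, the decoder applies $\dec^L_a$ (Theorem~\ref{thm:lev}) to recover $\bc$ and outputs $\bc|_I=\bx$; this also shows $\enc_{\mathbb{L}}$ is injective, so it is a genuine single-edit-correcting encoder. (Computing ${\rm Syn}(\bc')$ takes $O(n)$ time and the remaining arithmetic on $O(\log n)$-bit quantities is negligible, so $\enc_{\mathbb{L}}$ runs in linear time.) I do not anticipate a genuine obstacle; the only things that need care are the counting argument $|S|=t+1$ --- which is exactly what forces the redundancy to be $\ceil{\log n}+1$ rather than one bit more --- and the "no overflow" remarks that $d',d''<2^t$, both of which reduce to the single chain $2^{t-1}<n\le 2^t$.
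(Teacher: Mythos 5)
Your argument is correct and follows essentially the same route as the paper: you track that $\bc$ and $\bc'$ differ only on $S$, so ${\rm Syn}(\bc)-{\rm Syn}(\bc')={\rm Syn}(\bc|_S)$, which equals $d'$ in the case $d'<n$ and $d''+n=d'$ in the case $n\le d'<2n$, giving ${\rm Syn}(\bc)\equiv a\ppmod{2n}$. The extra checks you include (that $2^{t-1}<n$ so $|S|=t+1$, and that $d',d''<2^t$ so the binary expansions fit) are details the paper leaves implicit, and they are verified correctly.
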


\begin{proof}
It suffices to show that ${\rm Syn}(\bc)=a$.
Now, since the words $\bc'$ and $\bc$ differ at the indices corresponding to $S$,
we have that ${\rm Syn}(\bc)={\rm Syn}(\bc') +{\rm Syn}(\bc|_S)$.
When $d'<n$, ${\rm Syn}(\bc|_S)=d'$ and so, ${\rm Syn}(\bc)=a\ppmod{2n}$.
When $d'\ge n$, ${\rm Syn}(\bc|_S)=d''+n=d'$ and again, we have ${\rm Syn}(\bc)=a\ppmod{2n}$, as desired.
\end{proof}
For completeness, we state the corresponding decoder for binary code that correct a single edit.

\vspace{1mm}

\noindent{\bf Decoder $\mathbb{L}$}. For any $n$, set $m=n-\ceil{\log n}-1$.

{\sc Input}: $\by\in \{0,1\}^{n*}$\\
{\sc Output}: $\bx=\dec_{\mathbb{L}}(\by) \in\{0,1\}^m$\\[-2mm]
\begin{enumerate}[(I)]
\item Using Theorem~\ref{thm:lev}, set $\bc\triangleq \dec^L_a(\by)$.
\item Set $\bx\triangleq\bc|_I$, where $I$ is defined by Encoder $\mathbb{L}$.
\end{enumerate} 

It is not hard to use Encoder $\mathbb{L}$ to construct an efficient encoder for DNA alphabet and the output codewords can correct a single edit. For any DNA strand $\sigma$, we can use Encoder $\mathbb{L}$ to encode the upper sequence $\bU_{\bsg}$ and lower sequence $\bL_{\sigma}$ into into ${\rm L}_a(n)$. If $\bU_{\bsg}$ and $\bL_{\sigma}$ can correct a single edit, according to Proposition~\ref{prop:obs}, $\sigma$ can correct a single edit. This construction costs $2 \ceil{\log n} +2 $ bits of redundancy. 

To correct a single indel, we can modify Encoder $\mathbb{L}$ to lower the redundancy to $\ceil{\log n} + 2$ bits when $q=4$.


\section{Encoders Correcting a Single Indel}
\label{sec:indel}

\subsection{Code Construction}
Recall that in Remark~\ref{rem:observation}, we observed that when an indel occurs in $\bsg\in\D^n$,
the binary sequence $\Psi(\bsg)$ has a burst of indels of length two.
In other words, we are interested in binary codes that correct a single burst of indels of length two.
To do so, we have the following construction by Levenshtein \cite{le1965}.

\begin{definition}
For $\bx\in \{0,1\}^n$, we write $\bx$ as the concatenation of $s$ substrings $\bx=\bu_0\bu_1\ldots\bu_{s-1}$, 
where each substring $\bu_i$ contains identical bits,
while substrings $\bu_i$ and $\bu_{i+1}$ contain different bits. 
Each substring $\bu_i$ is also known as a {\em run} in $\bx$. 
Let $r_i$ be the length of the run $\bu_i$. The {\em run-syndrome} of the binary word $\bx$, denoted by ${\rm Rsyn}(\bx)$, is defined as follows.
\begin{equation}\label{rundef}
{\rm Rsyn}(\bx) = \sum_{i=1}^{s-1} i r_i.
\end{equation}
\end{definition}

\begin{example}\label{ex:rsyn-1}
The word $0010110$ has five runs, namely, $\bu_0=00$, $\bu_1=1$, $\bu_2=0$, $\bu_3=11$ and $\bu_4=0$. 
Hence, $r_1=r_2=r_4=1$, $r_0=r_3=2$ and ${\rm Rsyn}(\bx)=13$.
\end{example}

\begin{theorem}[Levenshtein \cite{le1967}]\label{2del}
For $a\in \bbZ_{2n}$,  set
\begin{equation}\label{Bcode}
\Lburst_a(n)\triangleq \{\bx \in \{0,1\}^n: {\rm Rsyn}(0\bx) = a \ppmod{2n} \}.
\end{equation}
Then code $\Lburst_a(n)$ can correct a burst of indel of length two.
Furthermore, there exists a linear-time algorithm $\dec^{\rm burst}_a$ such that 
$\dec^{\rm burst}_a(\by)=\bc$ for all $\by\in \B^{\rm burst}(\bc)$ and $\bc\in\Lburst_a(n)$.
Here, $\B^{\rm burst}(\bc)$ refers to the error ball with respect to a single burst of indel of length two
centered at $\bc$.
\end{theorem}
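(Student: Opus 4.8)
The plan is to establish that $\Lburst_a(n)$ corrects a burst of two indels by exhibiting an invariant that survives such a burst. The key observation is that a burst of two deletions (resp.\ insertions) in $\bx$ either deletes (resp.\ inserts) an entire run of length two, or shortens two runs each by one symbol, or shortens one run by two; in every case the run-structure of $0\bx$ changes in a controlled way. Concretely, I would first recall the classical bijection between a binary word $\bx$ (prepended with a leading $0$, so that the number of runs is well-defined independently of the first symbol) and the sequence of run lengths $(r_0,r_1,\ldots,r_{s-1})$. A burst of two deletions starting inside run $\bu_j$ then corresponds to the integer-vector operation of subtracting a total of $2$ from the tuple $(r_j, r_{j+1})$ (possibly deleting a run of length two entirely and thereby merging its neighbours, which adds their lengths). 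Tracking how ${\rm Rsyn}$ changes under each of these cases gives a bounded, position-dependent shift, and the modulus $2n$ is chosen exactly so that these shifts are distinguishable.

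The main steps, in order, are: (1) set up the run-length encoding of $0\bx$ and catalogue the finitely many ways a length-two burst acts on the run-length tuple; (2) for each case, compute the change in ${\rm Rsyn}(0\bx) \pmod{2n}$ as a function of the run index $j$ where the burst occurs and of the local run lengths, showing that from the received word $\by$ together with the syndrome value $a$ one can solve for $j$ and hence reconstruct $\bx$; (3) argue that deletions and insertions need not be treated separately, since the decoder first detects the length of $\by$ (one of $n-2, n, n+2$) and reduces the insertion case to the deletion case by symmetry; (4) assemble these into the linear-time decoder $\dec_a^{\rm burst}$, noting that computing ${\rm Rsyn}$, scanning runs, and solving the resulting congruence are all $O(n)$. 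Since this is a theorem attributed to Levenshtein \cite{le1967}, I would cite that source for the detailed case analysis and present here only the structure, or alternatively reproduce the short argument if a self-contained treatment is wanted.

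The hard part will be step (2): the case analysis when the burst straddles a run boundary and causes two adjacent runs to be shortened, versus when it annihilates a short run and merges its neighbours, produces different algebraic expressions for the syndrome shift, and one must verify that these do not collide modulo $2n$ — i.e.\ that the map from (burst location, burst type) to syndrome value is injective on the relevant domain once $\by$ is known. This is precisely the combinatorial heart of the Levenshtein-type argument and is where the choice of modulus $2n$ (rather than $n+1$, as for a single indel) is forced. Everything else — the reduction from insertions to deletions, the linear-time bookkeeping — is routine.
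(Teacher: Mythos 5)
The paper does not prove this theorem at all: it is imported verbatim from Levenshtein \cite{le1967}, and the only ``proof'' in the paper is the citation. Your proposal is therefore consistent with the paper's treatment --- you outline Levenshtein's run-length argument and then defer the detailed case analysis to \cite{le1967}, which is exactly what the authors do (they additionally quote, as Lemma~\ref{2relation}, the identity ${\rm Rsyn}(0\bx)=-{\rm Syn}(\Phi(\bx)) \ppmod{2n}$, but only to build their encoder, not to prove correctability). Your outline of the mechanics is also sound: a length-two burst acts on the run-length tuple of $0\bx$ in one of finitely many ways (shorten one run by two, shorten two adjacent runs by one each, or delete a length-two run and merge its neighbours, with the mirror cases for insertions), the received length $n-2$, $n$ or $n+2$ identifies the error type (length $n$ means no error, since a burst here is exactly two indels, not substitutions), and the whole decoding is $O(n)$.

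That said, be clear that what you wrote is a plan, not a proof: your step (2) --- showing that, given the received word $\by$ and the value $a$, the syndrome shift determines the burst location, i.e.\ that distinct codewords of $\Lburst_a(n)$ cannot produce the same $\by$ --- \emph{is} the theorem, and you explicitly leave it to the cited source. If a self-contained argument were required, that case analysis (and the verification that the shifts for the different cases cannot collide modulo $2n$, which is where the modulus $2n$ rather than $n+1$ is used) would have to be carried out; note also that the map $\Phi$ does not reduce a $2$-burst deletion in $\bx$ to a single deletion in $\Phi(\bx)$ (it replaces three consecutive derivative bits by their XOR), so Lemma~\ref{2relation} does not give a shortcut around that analysis. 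Likewise, ``insertions by symmetry'' shows the code is unambiguous but an explicit linear-time insertion-decoding routine still needs to be described, parallel to the deletion one. Since the paper itself relies on \cite{le1967} for all of this, none of these points is a defect relative to the paper --- but they are the substance you would have to supply if the theorem were not being quoted.
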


Using this family of codes, we have a code over $\D$ that corrects a single indel.
For $a\in \bbZ_{2n}$, set
\begin{equation}\label{eq:single-indel-D}
\C_a(n) \triangleq \left\{ \Psi^{-1}(\bc) : \bc \in \Lburst_a(2n) \right\}.
\end{equation}

To design a linear-time encoder for $\C_a(n)$, we make use of the following relation
between run-syndrome and VT-syndrome of a binary word. 

\begin{lemma}[Levenshtein \cite{le1967}]\label{2relation}
Define $\Phi:\{0,1\}^n \to\{0,1\}^n$ such that 
\begin{equation}\label{def:Phi}
\Phi(\bx)_i = 
\begin{cases}
x_i+x_{i+1} \ppmod{2} & \mbox{when }i\in[n-1],\\
x_n & \mbox{when }i=n.
\end{cases}
\end{equation}
Then $\Phi$ is an one-to-one map, and we have that
\begin{equation}\label{relation}
{\rm Rsyn}(0\bx) = -{\rm Syn}(\Phi({\bx})) \ppmod{2n}.
\end{equation}
\end{lemma}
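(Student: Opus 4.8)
The statement has two parts --- that $\Phi$ is a bijection and that ${\rm Rsyn}(0\bx) \equiv -{\rm Syn}(\Phi(\bx)) \pmod{2n}$ --- and I would treat them in that order. Bijectivity is immediate: $\Phi$ is an $\mathbb{F}_2$-linear map whose matrix is upper triangular with $1$'s on the diagonal (the clause $\Phi(\bx)_n = x_n$ pins down $x_n$, and each $\Phi(\bx)_i$, $i<n$, contains the ``diagonal'' term $x_i$). Hence $\Phi$ is invertible, with inverse computed by back-substitution: $x_n = \Phi(\bx)_n$ and $x_i = \Phi(\bx)_i + x_{i+1} \pmod 2$ for $i = n-1, \dots, 1$.

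For the syndrome identity, the first step is to rewrite ${\rm Rsyn}(0\bx)$ purely in terms of the set of transition positions $T := \{\, i \in [1,n] : (0\bx)_{i-1} \neq (0\bx)_i \,\}$, where I index $0\bx$ by $0,1,\dots,n$ and put $x_0 := 0$. Writing $T = \{t_1 < \dots < t_{s-1}\}$ (so $s$ is the number of runs), the run $\bu_j$ occupies the index block $[t_j, t_{j+1}-1]$ with the conventions $t_0 := 0$, $t_s := n+1$, so $r_j = t_{j+1} - t_j$. Plugging this into ${\rm Rsyn}(0\bx) = \sum_{j=1}^{s-1} j\, r_j$ and applying summation by parts (Abel summation) collapses the sum to the clean formula ${\rm Rsyn}(0\bx) = |T|\,(n+1) - \sum_{t \in T} t$. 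This telescoping --- together with the bookkeeping for the first run (which contains the prepended $0$) and the last run --- is the one genuinely computational step, and is where I expect the only real care to be needed.

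Next I would relate $T$ to $\Phi(\bx)$. Since $\Phi(\bx)_j = x_j \oplus x_{j+1}$ for $j<n$ is exactly the indicator of a transition of $\bx$ at $j$, and the prepended $0$ affects only the position-$1$ transition (where $(0\bx)_0 \oplus (0\bx)_1 = x_1$), an index shift gives $|T| = x_1 + \tau$ and $\sum_{t \in T} t = x_1 + \sum_{j=1}^{n-1}(j+1)(x_j \oplus x_{j+1})$, where $\tau := \sum_{j=1}^{n-1}(x_j \oplus x_{j+1})$ counts the transitions of $\bx$ itself. Substituting these and ${\rm Syn}(\Phi(\bx)) = \sum_{j=1}^{n-1} j\,(x_j \oplus x_{j+1}) + n\, x_n$ into the formula from the previous step, the $j$-weighted sums cancel and I am left with the integer identity ${\rm Rsyn}(0\bx) + {\rm Syn}(\Phi(\bx)) = n\,(x_1 + x_n + \tau)$.

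Finally, to pass to a congruence mod $2n$ it suffices to observe that $x_1 + x_n + \tau$ is even. This is the telescoping parity fact $\tau = \sum_{j=1}^{n-1}(x_j \oplus x_{j+1}) \equiv \sum_{j=1}^{n-1}(x_j + x_{j+1}) \equiv x_1 + x_n \pmod 2$, whence $x_1 + x_n + \tau \equiv 2(x_1 + x_n) \equiv 0 \pmod 2$. Therefore $n\,(x_1 + x_n + \tau) \equiv 0 \pmod{2n}$, which gives ${\rm Rsyn}(0\bx) \equiv -{\rm Syn}(\Phi(\bx)) \pmod{2n}$ and completes the proof.
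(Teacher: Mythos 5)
Your proof is correct. Note that the paper does not actually prove this lemma --- it is quoted from Levenshtein (1967) and only illustrated with an example --- so there is no in-paper argument to compare against; your write-up is a valid self-contained derivation. The key steps all check out: $\Phi$ is invertible because its matrix over $\mathbb{F}_2$ is unit upper bidiagonal (back-substitution $x_n=\Phi(\bx)_n$, $x_i=\Phi(\bx)_i+x_{i+1}$); the Abel-summation identity ${\rm Rsyn}(0\bx)=|T|(n+1)-\sum_{t\in T}t$ follows from $r_j=t_{j+1}-t_j$ with $t_0=0$, $t_s=n+1$, and correctly accounts for the prepended $0$ (a transition at position $1$ occurs iff $x_1=1$); substituting $|T|=x_1+\tau$ and $\sum_{t\in T}t=x_1+\sum_{j=1}^{n-1}(j+1)(x_j\oplus x_{j+1})$ indeed yields the integer identity ${\rm Rsyn}(0\bx)+{\rm Syn}(\Phi(\bx))=n(x_1+x_n+\tau)$, which one can sanity-check on the paper's example ($13+11=24=6\cdot 4$); and the telescoping parity $\tau\equiv x_1+x_n\pmod 2$ gives divisibility by $2n$, completing the congruence.
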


\begin{example}[Example~\ref{ex:rsyn-1} continued]
Consider $\bx=010110$. We have $\Phi(\bx)=111010$ and $-{\rm Syn}(\Phi(\bx))=-11=1\ppmod{12}$. 
On the other hand, $0\bx=0010110$ and indeed, ${\rm Rsyn}(0\bx)=13=1\ppmod{12}$. 
\end{example}

\subsection{Efficient Encoder Correcting a Single Indel}
We now present an efficient method to translate binary sequences into $\C_a(n)$
and hence, obtain a linear-time single-indel-correcting encoder over $\D$. We refer this as Encoder $\mathbb{I}$.

\vspace{1mm}

\noindent{\bf Encoder $\mathbb{I}$}. Given $n$, set $m=2n-\ceil{\log n}-2$.

{\sc Input}: $\bx\in \{0,1\}^m$\\
{\sc Output}: $\bsg = \enc_{\mathbb{I}}(\bx)\in \C_a(n)$, where $\C_a(n)$ is defined in \eqref{eq:single-indel-D}\\[-3mm]
\begin{enumerate}[(I)]
\item Observe that $m=2n-\ceil{\log 2n}-1$. Using Encoder $\mathbb{L}$, compute $\bc\in {\rm L}_{-a}(2n)$.
In other words, ${\rm Syn}(\bc)=-a \ppmod{4n}$. 
\item Compute $\bc'\triangleq\Phi^{-1}(\bc)$ as defined in \eqref{def:Phi}. Hence, ${\rm Rsyn}(0\bc')=a \ppmod{4n}$.
\item Set $\bsg \triangleq \Psi^{-1}(\bc')$.
\end{enumerate} 

\begin{example}
Consider $n=5$, $m=2n- \lceil\log{2n}\rceil -1=5$, $a=0$. We encode $\bx=11000$. 
\begin{enumerate}[(I)]
\item Encode $\bx$ to a codeword $\bc \in {\rm L}_{0}(10)$ using Encoder $\mathbb{L}$. Hence, $\bc=0110100001$. 
\item Next, we compute $\bc'=\Phi^{-1}(\bc)= 001 001 1111$. 
\item Hence, we obtain $\bsg = \Psi^{-1}(\bc')=\mathtt{ACTGG}$.
\end{enumerate}
\end{example}

\begin{theorem} Encoder $\mathbb{I}$ is correct and has redundancy $\ceil{\log n}+2$ bits. 
In other words, $\enc_{\mathbb{I}}(\bx)\in {\C}_a(n)$ for all $\bx\in\{0,1\}^m$.
\end{theorem}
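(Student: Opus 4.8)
The plan is to verify the two assertions of the theorem—correctness and redundancy—by chasing the three steps of Encoder $\mathbb{I}$ through the results already established. The redundancy is immediate: the input length is $m = 2n - \ceil{\log n} - 2$ and the output $\bsg$ lies in $\D^n$, so the redundancy is $n\log 4 - m = 2n - m = \ceil{\log n} + 2$ bits. So the substance is the correctness claim $\enc_{\mathbb{I}}(\bx)\in\C_a(n)$.

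First I would justify the parenthetical observation in Step (I): since $\ceil{\log 2n} = \ceil{\log n} + 1$, we have $m = 2n - \ceil{\log n} - 2 = 2n - \ceil{\log 2n} - 1$, which is exactly the input length that Encoder $\mathbb{L}$ expects for codewords of length $2n$ (with the role of "$n$" in that encoder played by $2n$). Hence Step (I) is a legitimate invocation of Encoder $\mathbb{L}$, and by the theorem on Encoder $\mathbb{L}$ (with modulus $2(2n) = 4n$ and target syndrome $-a$), the word $\bc\in\{0,1\}^{2n}$ satisfies ${\rm Syn}(\bc) \equiv -a \ppmod{4n}$.

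Next I would apply Lemma~\ref{2relation} with $n$ replaced by $2n$: $\Phi$ is a bijection on $\{0,1\}^{2n}$, so $\bc' \triangleq \Phi^{-1}(\bc)$ is well-defined, and the identity \eqref{relation} gives ${\rm Rsyn}(0\bc') = -{\rm Syn}(\Phi(\bc')) = -{\rm Syn}(\bc) \equiv a \ppmod{4n}$. Therefore $\bc' \in \Lburst_a(2n)$ by the definition \eqref{Bcode}. Finally, setting $\bsg \triangleq \Psi^{-1}(\bc')$ puts $\bsg$ into $\C_a(n)$ by the definition \eqref{eq:single-indel-D}, which is exactly what we want; $\Psi^{-1}$ is well-defined on $\{0,1\}^{2n}$ since $\Psi$ is the bijection $\D^n \to \{0,1\}^{2n}$. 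This completes correctness.

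I do not expect a genuine obstacle here—the theorem is essentially bookkeeping that the moduli, lengths, and sign conventions line up. The one place to be careful is the index shift: everything in Encoder $\mathbb{L}$, in Lemma~\ref{2relation}, and in Theorem~\ref{2del} is stated for binary words of length $n$, whereas here they are all applied to binary words of length $2n$, so the relevant modulus is $4n$ throughout and the "$\ceil{\log\cdot}+1$" redundancy of Encoder $\mathbb{L}$ becomes $\ceil{\log 2n}+1 = \ceil{\log n}+2$. Keeping that substitution consistent is the only thing that needs attention. (The fact that $\C_a(n)$ actually corrects a single indel, and that a linear-time decoder exists, follows from Theorem~\ref{2del}, Remark~\ref{rem:observation}, and the bijectivity of $\Psi$ and $\Phi$, but is not part of this theorem's statement.)
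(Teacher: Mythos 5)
Your proposal is correct and follows the same route as the paper: invoke Encoder $\mathbb{L}$ (at length $2n$, modulus $4n$, syndrome $-a$), apply Lemma~\ref{2relation} to conclude ${\rm Rsyn}(0\bc')\equiv a \ppmod{4n}$ so that $\bc'\in\Lburst_a(2n)$ and hence $\bsg\in\C_a(n)$, with the redundancy count being immediate. Your version simply spells out the length/modulus bookkeeping (e.g.\ $\ceil{\log 2n}=\ceil{\log n}+1$) that the paper leaves implicit.
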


\begin{proof}
Let $\bsg\triangleq \enc_{\mathbb{I}}(\bx)$. 
From Remark~\ref{rem:observation}, it suffices to show that $\bc'\triangleq\Psi(\bsg)$ belongs to $\Lburst_a(2n)$, or  ${\rm Rsyn}(0\bc')=a \ppmod{4n}$. 
This follows from Lemma~\ref{2relation} and the fact that $\bc=\Phi(\bc')$ and ${\rm Syn}(\bc)=-a \ppmod{4n}$.
\end{proof}

\begin{remark} Encoder $\mathbb{I}$ runs in linear-time and the redundancy is $\ceil{\log n}+2$ bits. As mentioned earlier, one may use the systematic $q$-ary single-indel-encoder introduced by Tenengolts \cite{te1984} for $q=4$. The redundancy of such encoder is $\log n +c$, where $7\le c \le 10$ for $n\ge 20$. In other words, in general, Encoder $\mathbb{I}$ improves the redundancy by at least four bits.

\end{remark}
For completeness, we state the corresponding decoder, Decoder $\mathbb{I}$, for DNA codes that correct a single indel.

\vspace{1mm}

\noindent{\bf Decoder $\mathbb{I}$}. For any $n$, set $m=2n-\ceil{\log n}-2$.

{\sc Input}: $\bsg\in \D^{n*}$\\
{\sc Output}: $\bx=\dec_{\mathbb{I}}(\bsg) \in\{0,1\}^m$\\[-2mm]
\begin{enumerate}[(I)]
\item Compute $\widehat{\bc'} \triangleq\Psi(\bsg)$.
\item Using Theorem~\ref{2del}, compute $\bc'\triangleq \dec^{\rm burst}_a\left(\widehat{\bc'}\right)$.
\item Set $\bc\triangleq \Phi(\bc')$.
\item Set $\bx\triangleq \bc|_I$, where $I$ is defined by Encoder $\mathbb{L}$.
\end{enumerate} 

Encoder $\mathbb{I}$ can be extended to obtain linear-time $q$-ary single-indel-correcting encoders with redundancy 
$(1/2 \log q)\log n+O(1)$. 
This improves the encoder of Abroshan \et{} that uses $(\log q+1) \log n + O(1)$ bits of redundancy \cite{abroshan}. 
Unfortunately, unlike Tenegolts' encoder \cite{te1984}, this $q$-ary encoder is not order-optimal.

\subsection{Efficient Encoder for Codes Correcting a Burst of Indels}
Recently, Schoeny \et{}  constructed binary codes that corrects a single burst of indels of length $b$
with $\log n + o(\log n)$ bits of redundancy for fixed values of $b$ \cite{Schoeny:2017}.
Here, we extend our techniques to provide linear-time encoders for the codes of Schoeny \et,
and hence obtain order-optimal linear-time burst-indel-correcting encoders for alphabet of size $q$, $q > 2$. 
In this paper, we focus on the case $q=4$. The work can be easily extended and generalized to any alphabet size. We first introduce the definition of burst of indels. 

\newpage

Let $\bx \in \Sigma^n$. We refer to a {\em $b$-burst of deletions} when exactly $b$ consecutive deletions have occurred, i.e., from $\bx$, and we obtain
a subsequence $\bx'=(x_1,x_2,\ldots,x_i,x_{i+b+1},\ldots,x_n) \in \Sigma^{n-b}$. Similarly, we refer to a {\em $b$-burst of insertions} when exactly $b$ consecutive insertions have occurred, i.e., from $\bx$, and we obtain $\bx''= (x_1,x_2,\ldots,x_j,{\color{red}{y_1,y_2,\ldots,y_b}}, x_{j+1},\ldots,x_n) \in \Sigma^{n+b}$. A {\em $b$-burst of indels} refers to either a $b$-burst of deletions or a $b$-burst of insertions has occurred. We define the {\em b-burst error ball}:
\begin{align*}
\Bindel_{\bburst}(\bx) \triangleq \{\bx\}\cup\{ \by : \by \mbox{ is obtained from $\bx$ via a $b$-burst of indels}\}.
 \end{align*}
 Let $\C\subseteq \Sigma^n$. We say that $\C$ is a {\em $b$-burst-indels-correcting code} if 
$\Bindel_{\bburst}(\bx)\cap \Bindel_{\bburst}(\by)=\varnothing$ for all distinct $\bx,\by \in \C$.

In the binary case, Schoeny \et{} represent the codewords of length $n$ in the $b$-burst-indels-correcting code as $b \times n/b$ codeword arrays,
 where $b$ divides $n$. Thus, for a codeword $\bx$, the codeword array $A_b(\bx)$ is formed by $b$ rows and $n/b$ columns, and the codeword is transmitted column-by-column. Observe that a $b$-burst deletes (or inserts) in $\bx$ exactly one bit 
 from each row of the array $A_b(\bx)$. Here, the $i$th row of the array is denoted by $A_b(\bx)_{\bf {\rm i}}$.
\begin{equation*}
A_b(\bx)=
\left[
  \begin{array}{cccccc}
    x_1 & x_{b+1} & \cdots & x_{(j-1)b+1} & \cdots & x_{(n/b-1)b+1}\\
    x_2 & x_{b+2} & \cdots & x_{(j-1)b+2} & \cdots & x_{(n/b-1)b+2}\\
    \vdots & \vdots & \ddots & \vdots & \ddots & \vdots \\
      x_b & x_{2b} & \cdots & x_{jb} & \cdots & x_{n}\\
  \end{array}
\right].
\end{equation*}

\begin{construction}[Schoeny \et{} \cite{Schoeny:2017}]\label{schoenyconst} Given $n>0,b=o(n)$, Schoeny \et{} then construct a $b$-burst-indels-correcting code of length $n$ as follows. For a codeword $\bx$, the codeword array $A_b(\bx)$ satisfies the following constraints.
\begin{itemize}
\item The first row in the array is a VT-code in which also restricts the longest run of $0$'s or $1$'s to be at most $r=\log 2n+O(1)$. 
Schoeny \et{} also provided the {\em runlength-limited encoder} which uses only one redundancy bit in order to encode binary vectors of maximum run length at most $\ceil{\log n} + 3$ (see \cite[Appendix B]{Schoeny:2017}). 

\item Each of the remaining $(b - 1)$ rows in the array is then encoded using a modified version of the VT-code, 
which they refer as {\em shifted VT (SVT) code}. 
This code corrects a single indel in each row provided the indel position is known to be within $P$ consecutive positions. 
To obtain the desired redundancy, Schoeny \et{} set $P=r+1=\log 2n+O(1)$.
\end{itemize}
\end{construction}

Formally, the following results were provided by Schoeny \et{} \cite{Schoeny:2017}.

\begin{theorem}[Schoeny \et{} \cite{Schoeny:2017}]\label{encoderRLL}
There exists a pair of linear-time algorithms $\enc_{RLL}:\{0,1\}^{n-1}\to \{0,1\}^n$ and 
$\dec_{RLL}:\{0,1\}^n\to \{0,1\}^{n-1}$ such that the following holds.
If $\enc_{RLL}(\bx)=\by$, then $\by$ is a binary vector of maximum run length at most $\ceil{\log n} + 3$ and $\dec_{RLL}(\by)=\bx$.
\end{theorem}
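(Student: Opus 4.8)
The plan is to prove this via the \emph{sequence replacement technique}. Fix $t \triangleq \ceil{\log n}$ and let $L \triangleq t + 3$ be the target run-length bound. On input $\bx \in \{0,1\}^{n-1}$ the encoder $\enc_{RLL}$ would first append a single guard bit to form a word $\bw$ of length $n$ (this accounts for the one bit of redundancy, and $|\bw|$ will stay equal to $n$ from here on), and then iterate the following replacement step as long as $\bw$ has a run of length $> L$: find the \emph{leftmost} such run, say it starts at position $p$ and is made of the bit $d$; delete the first $L$ bits of that run from $\bw$; and prepend to $\bw$ a length-$L$ \emph{metadata block} recording a ``processed'' flag, the bit $d$, the position $p$ written in $t$ bits, and a constant number of guard bits used to frame the block. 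Each step leaves $|\bw|=n$. When no run of length $> L$ survives, output $\by \triangleq \bw$; if the loop never runs, the first bit of $\by$ is set to the ``unprocessed'' flag value.

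The decoder $\dec_{RLL}$ just runs this backwards: it inspects the flag bit of $\by$; while it reads ``processed'', it parses the leading metadata block to recover $(d,p)$, deletes that block, and re-inserts a run of $L$ copies of $d$ starting at position $p$; once the flag reads ``unprocessed'' it strips the guard bit and returns the remaining $n-1$ bits. Since one replacement step of the encoder and one insertion step of the decoder are exact inverses of one another, $\dec_{RLL}(\enc_{RLL}(\bx))=\bx$ for every $\bx$, which also gives injectivity of $\enc_{RLL}$.

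Correctness of the encoder would rest on two invariants kept through the loop. First, $|\bw|$ stays $n$, and the metadata blocks inserted so far always occupy a prefix of $\bw$ that is never revisited (the leftmost over-long run always lies strictly to its right); hence after $k$ steps at least $kL$ of the $n$ coordinates are ``settled'', so the loop halts after at most $n/L = O(n/\log n)$ steps, and on halting every run of $\by$ has length $\le L = \ceil{\log n}+3$. Second, no replacement step creates a new run of length $> L$: the leftmost-first rule keeps everything to the left of the deleted window unchanged (so its runs are still $\le L$), and deleting the window from the \emph{start} of the over-long run keeps that run from fusing with its left neighbour, while the metadata block is designed, using its $O(1)$ framing bits, so that it contains no run longer than $L$ internally and fuses with neither the block it is prepended in front of nor the residual data into an over-long run.

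For linear running time I would not rescan $\bw$ from scratch each iteration; instead I would maintain the run decomposition of $\bw$ together with a monotonically advancing pointer past the last settled coordinate, updating only the $O(\log n)$ coordinates touched by each step, which yields an $O(n)$ encoder, and symmetrically an $O(n)$ decoder. The hard part will be the verification of the second invariant: one has to check, over all choices of the removed bit $d$ and of the $t$-bit position string (which may itself be an all-zero or all-one block), that the framed metadata block cannot combine with its two neighbours to exceed run length $L$. It is exactly this boundary accounting that forces the $+3$ slack over the information-theoretic $\ceil{\log n}$, and it is the part of the argument that must be carried out carefully rather than asserted.
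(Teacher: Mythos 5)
The paper does not actually prove this statement -- it is imported verbatim from Appendix~B of Schoeny \emph{et al.}~\cite{Schoeny:2017} -- so the right benchmark is that source, and your plan is essentially the same argument: the sequence replacement technique (the same tool the paper later invokes via \cite{Wijngaarden.2010, Elishco.2019} for the restricted-sum-balanced encoder). Replacing an over-long run by a fixed-length block encoding (flag, symbol, position), keeping the length at $n$, iterating until no run exceeds $\ceil{\log n}+3$, and inverting block-by-block at the decoder is exactly the structure of the cited proof; whether the pointer blocks are prepended (your choice) or appended (the usual presentation) is cosmetic.

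Two points need attention before this counts as a proof. First, a small internal inconsistency: you say you \emph{append} the guard bit, but later use ``the first bit of $\by$'' as the processed/unprocessed flag; if the guard sits at the end, overwriting the first bit destroys $x_1$. The guard must be placed at the front and serve as the flag (initially ``unprocessed''), with every metadata block beginning with the ``processed'' value. Second, the framing check you defer is genuinely tight -- the block of length $L=\ceil{\log n}+3$ carries $\ceil{\log n}+2$ payload bits (flag, deleted symbol, position), leaving exactly one free guard bit -- but it does close, and more easily than you suggest: because blocks are prepended, a new block has no left neighbour, its interior runs are trivially at most $L$ even when the position field is all-zero or all-one, and choosing the single guard bit to be the complement of the bit it abuts prevents any run from crossing the block/remainder boundary; since the guard carries no information, the decoder parses the block by its fixed layout regardless of the guard's value. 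With that verification written out (plus the routine bookkeeping that deleting a prefix of a run of length $>L$ never merges runs, and that positions fit in $\ceil{\log n}$ bits), your argument is complete and matches the cited construction.
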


\begin{definition}[Schoeny \et{} \cite{Schoeny:2017}] For $0\leq c<P$ and $d \in \{0,1\}$, the {\em shifted VT-code ${\rm SVT}_{c,d,P}(n)$} is defined as
\begin{align*}
{\rm SVT}_{c,d,P}(n) \triangleq \{\bx: {\rm Syn}(\bx) = c \ppmod{P} \text{ and } \sum_{i=1}^{n} x_i = d \ppmod{2} \}.
\end{align*}

\end{definition}

\begin{theorem}[Schoeny \et{} \cite{Schoeny:2017}]\label{thm:svt}
The code ${\rm SVT}_{c,d,P}(n)$ can correct a single indel given knowledge of the location of the deleted (or inserted) bit to within $P$ consecutive positions. 
Furthermore, there exists a linear-time algorithm $\dec_{c,d,P}^{SVT}$ such that the following holds. 
If $\bc\in{\rm SVT}_{c,d,P}(n)$, $\by\in\Bindel(\bc)$ and the deleted (or inserted) index belongs to $\bbJ\triangleq[j,j+P-1]$
for some $1\le j\le n-P$, then $\dec_{c,d,P}^{SVT}(\by,\bbJ)=\bc$. 
\end{theorem}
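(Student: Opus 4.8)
The plan is to exhibit the decoder $\dec_{c,d,P}^{SVT}$ explicitly; once we verify that it recovers $\bc$ we have established both the correction property and the linear‑time claim at once. On input $(\by,\bbJ)$ with $\bbJ=[j,j+P-1]$, the decoder first branches on the length of $\by$: if $|\by|=n$ no indel occurred and it returns $\by$; if $|\by|=n-1$ a deletion occurred; if $|\by|=n+1$ an insertion occurred. I describe the deletion case in detail, the insertion case being symmetric.

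The first key point is that the parity constraint pins down the \emph{value}, not the position, of the erroneous bit. If $\bc\in{\rm SVT}_{c,d,P}(n)$ suffered a deletion at position $p$, yielding $\by$, then $\sum_i y_i \equiv d - c_p \pmod 2$, so $a\triangleq c_p$ is recovered as $d-\sum_i y_i \bmod 2$. Every valid reconstruction is therefore obtained by inserting the \emph{known} bit $a$ into $\by$ at one of the admissible positions determined by $\bbJ$; write $\bc^{(p)}$ for the word obtained by inserting $a$ just before the $p$‑th symbol of $\by$. A one‑line index substitution gives
\[
{\rm Syn}(\bc^{(p)}) \;=\; {\rm Syn}(\by) + p\,a + \sum_{i\ge p} y_i ,
\]
so the decoder can tabulate all candidate syndromes modulo $P$ using the suffix sums of $\by$ and output the $\bc^{(p)}$ whose syndrome is $\equiv c \pmod P$.

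The crux is that this selection is unambiguous. For two admissible positions $j\le p<p'\le j+P-1$, subtracting the displayed identities gives ${\rm Syn}(\bc^{(p')})-{\rm Syn}(\bc^{(p)}) = (p'-p)a - \sum_{i=p}^{p'-1} y_i$, which equals the number of $0$'s among $y_p,\dots,y_{p'-1}$ when $a=1$ and the negated number of $1$'s when $a=0$; in both cases it is an integer of absolute value at most $p'-p\le P-1$, and it vanishes exactly when $y_p=\cdots=y_{p'-1}=a$, i.e.\ exactly when $\bc^{(p)}=\bc^{(p')}$ as words. Hence distinct candidates have distinct syndromes modulo $P$, so the single congruence ${\rm Syn}(\bc)\equiv c\pmod P$ isolates $\bc$; candidates that coincide as words cause no conflict. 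The insertion case proceeds identically: the parity first recovers the value $a$ of the inserted bit, and one then scans which symbol (necessarily of value $a$) of $\by$ to delete, the analogous syndrome identity again forcing uniqueness.

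As for complexity, ${\rm Syn}(\by)$, $\sum_i y_i$, and the suffix sums of $\by$ are all computed in $O(n)$, and the window scan costs $O(P)=O(\log n)$, so $\dec_{c,d,P}^{SVT}$ runs in $O(n)$. I expect the main obstacle to be purely bookkeeping: translating ``the indel position lies in $\bbJ$'' into the exact set of admissible insertion (resp.\ deletion) positions of $\by$, and checking the boundary behaviour at $p=j$ and $p=j+P-1$ so that the range of $p'-p$ stays within $P-1$. The algebra behind the displayed identity is routine, and, notably, no run‑length restriction on $\by$ is needed for the SVT code in isolation --- that hypothesis enters only later, when $P$ is taken to be $r+1$ in Construction~\ref{schoenyconst} so that decoding the first (run‑length‑limited) row localizes the burst to within $P$ positions and the target redundancy is met.
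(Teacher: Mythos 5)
Your proposal is correct, and it reconstructs essentially the standard argument for shifted VT codes: the parity check $\sum_i x_i \equiv d \pmod 2$ recovers the value of the indel bit, and the syndrome modulo $P$ then isolates the unique candidate word, since two distinct reconstructions within a window of $P$ consecutive positions have syndromes differing by a nonzero integer of absolute value at most $P-1$. The paper itself offers no proof of this theorem (it is imported from Schoeny \et{} \cite{Schoeny:2017}), and your argument matches the proof given there, including the observation that coinciding candidates arise exactly from runs of the bit value $a$ and therefore cause no ambiguity.
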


We now modify the construction of Schoeny \et{} to obtain a quaternary linear-time $b$-burst-indels-correcting encoder with redundancy $\log n + o(\log n)$. 
Observe that if we convert a quaternary sequence of length $n$ into binary sequence of length $2n$, 
then a $b$-burst-indels in quaternary sequence results in a $2b$-burst-indels in the corresponding binary sequence. 
Suppose that we want to encode messages into quaternary code of length $n=bN$. 

\begin{construction}\label{modified}
Let $n=bN, P,r>0, P\geq r+1$. Given $a\in \mathbb{Z}_{N}, c \in \mathbb{Z}_{P},$ and $d\in \mathbb{Z}_{2}$, let $\C_{\bburst}^{indel}(n,P,r;a,c,d)$ be the code contains all codewords $\bsg \in \D^{bN}$ such that when we view $\bx= \Psi(\bsg) \in \{0,1\}^{2bN}$ as the array $A_{2b}(\bx)$, the following constraints are satisfied.
\begin{itemize}
\item The first row $A_{2b}(\bx)_1 \in {\rm L}_a(N)$ and the longest run of $0$'s or $1$'s is at most $r$.
\item For $2\leq i\leq 2b$, the $i$th row $A_{2b}(\bx)_i \in {\rm SVT}_{c,d,P}(N)$.
\end{itemize}
\end{construction}
Clearly, $\C_{\bburst}^{indel}(n,P,r;a,c,d)$ is a $b$-burst-indels-correcting code. 
Before we provide the encoder for $\C_{\bburst}^{indel}(n,P,r;a,c,d)$, 
an encoder for ${\rm SVT}_{c,d,P}(n)$ is needed. 
One can easily modify the encoder for VT-code to obtain an efficient encoder for shifted VT-code
with redundancy $\ceil{\log P}+1$. For completeness, we describe the encoder for shifted VT-code ${\rm SVT}_{c,d,P}(n)$. 

%
%

\newpage

\noindent{\bf SVT-Encoder}. Given $n,c,d,P$, set $t\triangleq\ceil{\log P}$ and $m\triangleq n-t-1$. 

{\sc Input}: $\bx\in \{0,1\}^m$\\
{\sc Output}: $\bc \triangleq \enc_{SVT}(\bx)\in {\rm SVT}_{c,d,P}(n)$\\[-2mm]
\begin{enumerate}[(I)]
\item Set $S \triangleq \{2^{j-1} : j \in [t]\}\cup \{P\}$ and $I \triangleq [n]\setminus S$.
\item Consider $\bc'\in \{0,1\}^n$, where $\bc'|_I=\bx$ and $\bc'|_S=0$. 
Compute the difference $d'\triangleq a-{\rm Syn}(\bc') \ppmod{P}$. 
In the next step, we modify $\bc'$ to obtain a codeword $\bc$ with ${\rm Syn}(\bc)=a \ppmod{P}$.
\item Let $y_{t-1}\ldots y_1y_0$ be the binary representation of $d'$.
In other words, $d' = \sum_{i=0}^{t-1} y_i 2^i$. 
Then we set $c_{2^{j-1}}=y_{j-1}$ for $j\in [t]$.
\item Finally, we set the bit $x_P$ as the parity check bit that satisfies $x_P=d-\sum_{i\in [n]\setminus {P}} x_i\ppmod{2}$.
\end{enumerate} 

To conclude this subsection, we provide a linear-time encoder for $\C_{\bburst}^{indel}(n;a,c,d)$.

\vspace{1mm}

\noindent{\bf b-Burst-Indels-Encoder}. Given $n=bN,r=2\ceil{\log N}+4, P= r+1, a \in \mathbb{Z}_{N}, c \in \mathbb{Z}_{P}, d\in \mathbb{Z}_{2}$, set $t\triangleq\ceil{\log P}$ and $m\triangleq 2bN-\ceil{\log N}-(2b-1)(t+1)-2$.

{\sc Input}: $\bx\in \{0,1\}^m$\\
{\sc Output}: $\bc \triangleq \enc_{b-burst}^{indel}(\bx)\in \C_{b-burst}^{indel}(n;a,c,d)$\\[-2mm]
\begin{enumerate}[(I)]
\item Set $\bx_1$ be the first $(N-\ceil{\log N}-2)$ bits in $\bx$ and for $2\leq i\leq 2b$, set $\bx_i$ be the subsequence of length $(N-t-1)$ of $\bx$ such that $\bx=\bx_1 \bx_2 \ldots \bx_{2b}$.
\item Let $\by_1'=\enc_{RLL}(\bx_1)$ and use Encoder $\mathbb{L}$ as described in Subsection~\ref{template} to encode $\by_1= \enc_{\mathbb{L}}(\by_1')$.
\item For $2\leq i\leq 2b$, use SVT Encoder $\enc_{SVT}$ to encode $\by_i= \enc_{SVT}(\bx_i) \in {\rm SVT}_{c,d,P}(n)$. 
\item Finally, set $\by=\by_1||\by_2||\ldots||\by_{2b}$ and output $\bsg\triangleq \Psi^{-1}(\by)$.
\end{enumerate} 
For given $n=bN$, the $b$-Burst-Indels-Encoder costs $\ceil{\log N}+O(\log \log N) = \log n + O(\log \log n)$ bits of redundancy. 

\begin{theorem} Let $b=\Theta(1)$. The $b$-Burst-Indels-Encoder is correct. In other words, $\enc_{b-burst}^{indel}(\bx)\in \C_{b-burst}^{indel}(n,P,r;a,c,d)$ for all $\bx\in\{0,1\}^m$.
\end{theorem}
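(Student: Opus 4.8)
The plan is to reduce the claim to verifying, for the word $\by\triangleq\by_1\|\by_2\|\cdots\|\by_{2b}$ produced by the encoder, the three defining constraints of $\C_{\bburst}^{indel}(n,P,r;a,c,d)$ from Construction~\ref{modified}, applied to the array $A_{2b}(\Psi(\bsg))=A_{2b}(\by)$. First I would record the elementary fact that interleaving is inverse to forming rows of the array: for $\by_1,\dots,\by_{2b}\in\{0,1\}^N$, coordinate $(j-1)(2b)+i$ of $\by_1\|\cdots\|\by_{2b}$ is $(\by_i)_j$, so the $i$-th row of $A_{2b}(\by)$ is precisely $\by_i$. Since $\bsg=\Psi^{-1}(\by)$, it therefore suffices to show that $\by_1\in{\rm L}_a(N)$ and has longest run at most $r$, and that $\by_i\in{\rm SVT}_{c,d,P}(N)$ for $2\le i\le 2b$; then $\by$ meets the constraints of Construction~\ref{modified}, whence $\bsg\in\C_{\bburst}^{indel}(n,P,r;a,c,d)$, which (as already observed, since a $b$-burst of indels in $\bsg\in\D^{bN}$ induces a $2b$-burst of indels in $\Psi(\bsg)$) is a $b$-burst-indels-correcting code.

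For the first row, $\by_1=\enc_{\mathbb{L}}(\by_1')$ with $\by_1'=\enc_{RLL}(\bx_1)$ and $\bx_1\in\{0,1\}^{N-\ceil{\log N}-2}$, so by Theorem~\ref{encoderRLL} the word $\by_1'\in\{0,1\}^{N-\ceil{\log N}-1}$ has maximum run length at most $\ceil{\log N}+3$; this is exactly the input length required by Encoder $\mathbb{L}$ for codewords of length $N$, and by the correctness of Encoder $\mathbb{L}$ we obtain $\by_1\in{\rm L}_a(N)$. It remains to bound the longest run of $\by_1$. Encoder $\mathbb{L}$ copies $\by_1'$ onto the coordinates $I=[N]\setminus S$, where $S=\{2^{j-1}:j\in[\ceil{\log N}]\}\cup\{N\}$ has size $|S|=\ceil{\log N}+1$, and overwrites the coordinates in $S$ with new bits. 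I would argue that a maximal run of $\by_1$ contains at most $|S|=\ceil{\log N}+1$ coordinates of $S$ and at most $\ceil{\log N}+3$ coordinates of $I$: the $I$-coordinates lying in a maximal run of $\by_1$, read as indices into $\by_1'$, form a contiguous range all carrying the same bit, hence lie within a single run of $\by_1'$. This yields longest run at most $(\ceil{\log N}+1)+(\ceil{\log N}+3)=2\ceil{\log N}+4=r$, as required.

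For $2\le i\le 2b$ we have $\by_i=\enc_{SVT}(\bx_i)$ with $\bx_i\in\{0,1\}^{N-t-1}$, $t=\ceil{\log P}$, the input length prescribed by the SVT-Encoder for length $N$. As for Encoder $\mathbb{L}$, placing the binary expansion of $d'=c-{\rm Syn}(\bc')\ppmod{P}$ on the coordinates $\{2^{j-1}:j\in[t]\}$ adds exactly $d'$ to ${\rm Syn}$, so ${\rm Syn}(\by_i)\equiv c\ppmod{P}$, while the parity coordinate at position $P$ changes ${\rm Syn}$ by $0$ or by $P\equiv0\ppmod{P}$ and hence forces $\sum_\ell(\by_i)_\ell\equiv d\ppmod{2}$ without disturbing the syndrome; thus $\by_i\in{\rm SVT}_{c,d,P}(N)$. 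Combined with the first paragraph this gives $\bsg\in\C_{\bburst}^{indel}(n,P,r;a,c,d)$, and a final arithmetic check confirms $m=(N-\ceil{\log N}-2)+(2b-1)(N-t-1)=2bN-\ceil{\log N}-(2b-1)(t+1)-2$, so the domains agree. I expect the only genuinely delicate point to be the run-length estimate of the second paragraph: one must verify that the $\ceil{\log N}+1$ coordinates inserted by Encoder $\mathbb{L}$ cannot splice two runs of $\by_1'$ into one longer run — they cannot, since adjacent runs of $\by_1'$ carry opposite bits, so an inserted coordinate can extend at most one of them — and that $r=2\ceil{\log N}+4$ dominates the worst case; everything else is a direct appeal to the correctness of Encoder $\mathbb{L}$, Theorem~\ref{encoderRLL}, the routine correctness of the SVT-Encoder, and the array/interleaving correspondence.
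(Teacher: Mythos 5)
Your proposal is correct and follows essentially the same route as the paper's own proof: reduce to checking the row constraints of Construction~\ref{modified} on $A_{2b}(\Psi(\bsg))$, use Theorem~\ref{encoderRLL} plus the structure of Encoder $\mathbb{L}$ to get $\by_1\in{\rm L}_a(N)$ with longest run at most $(\ceil{\log N}+3)+(\ceil{\log N}+1)=r$, and invoke the SVT-Encoder for the remaining rows. The only difference is that you spell out details the paper leaves implicit (the interleaving/array correspondence, the careful run-splicing argument for $\by_1$, and the syndrome bookkeeping of $\enc_{SVT}$), all of which are consistent with the paper's intended argument.
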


\begin{proof}
Let $\bsg\triangleq \enc_{b-burst}^{indel}(\bx)$ and $\by= \Psi(\bsg)$. It is sufficient to show that when we view $\by$ as the array $A_{2b}(\by)$ the constraints in Construction~\ref{modified} are satisfied. Based on our encoder, $\by=\by_1||\by_2||\ldots||\by_{2b}$ and $\by_i= \enc_{SVT}(\bx_i) \in {\rm SVT}_{c,d,P}(n)$ for $2\leq i\leq 2b$. It remains to show that the first row $\by_1 \in {\rm L}_a(N)$ and the longest run of $0$'s or $1$'s is at most $r$. Observe that $\by_1'=\enc_{RLL}(\bx_1)$, which implies the longest run of $0$'s or $1$'s in $y_1'$ is at most $(\ceil{\log N}+3)$ according to Theorem~\ref{encoderRLL}. Therefore, the maximum run in $\by_1$ after $\enc_{\mathbb{L}}$ is at most $(\ceil{\log N}+3)+(\ceil{\log N}+1)=2\ceil{\log n}+4=r$ (refer to Subsection III-C, Encoder $\mathbb{L}$).
\end{proof}





\section{Encoders Correcting a Single Edit}\label{sec:edit}
In this section, we present efficient encoders for codes over $\D^n$ that can correct a single edit. Unlike indel error, the main challenge to design codes correcting edit is that when substitution error occurs in a DNA strand $\bsg$, it might not affect $\bU_\bsg$ or $\bL_\bsg$. Therefore, putting a VT constraint in either one of these sequences might not tell any information about the loss in the other sequence. We illustrate this scenario through the example below.

\begin{example}\label{editexample}
Suppose that $\bsg=\tA\tC\tA\tG\tT\tG$. Suppose a substitution error occurs at the third symbol, changing $\tA$ to $\tT$, and we received $\bsg_1=\tA\tC{\color{red}{\tT}}\tG\tT\tG$.  
On the other hand, suppose a substitution error also occurs at the third symbol, changing $\tA$ to $\tG$, and we received $\bsg_2=\tA\tC{\color{red}{\tG}}\tG\tT\tG$. We then see the change in $\bU_\bsg$ and $\bL_\bsg$.

\begin{center}
\begin{tabular}{ll}
\begin{tabular}{|c|| c | c| c| c| c|c|}
 \hline
 $\bsg$ & $\tA$ & $\tC$ & $\tA$  & $\tG$ & $\tT$ & $\tG$ \\ \hline
 $\bU_{\bsg}$ & 0 & 1 & 0 & 1 & 0 & 1 \\\hline
 $\bL_{\bsg}$ & 0 & 0 & 0 & 1 & 1 & 1\\\hline
 \end{tabular}

\begin{tabular}{|c|| c | c| c| c| c|c|}
 \hline
 $\bsg_1$ & $\tA$ & $\tC$ & ${\color{red}{\tT}}$  & $\tG$ & $\tT$ & $\tG$ \\ \hline
 $\bU_{\bsg}$ & 0 & 1 & 0 & 1 & 0 & 1 \\\hline
 $\bL_{\bsg}$ & 0 & 0 & {\color{red}{1}} & 1 & 1 & 1\\\hline
 \end{tabular}

\begin{tabular}{|c|| c | c| c| c| c|c|}
 \hline
 $\bsg_2$ & $\tA$ & $\tC$ & ${\color{red}{\tC}}$  & $\tG$ & $\tT$ & $\tG$ \\ \hline
 $\bU_{\bsg}$ & 0 & 1 & {\color{red}{1}} & 1 & 0 & 1 \\\hline
 $\bL_{\bsg}$ & 0 & 0 & 0 & 1 & 1 & 1\\\hline
 \end{tabular}
\end{tabular}
\end{center}

\end{example}

According to Proposition~\ref{prop:obs}, if $\bsg'\in \Bedit(\bsg)$ implies that $\bU_{\bsg'}\in \Bedit(\bU_\bsg)$ and $\bL_{\bsg'}\in \Bedit(\bL_\bsg)$. Therefore, a simple solution is to encode both of the sequences $\bU_\bsg$ and $\bL_\bsg$ into ${\rm L}_a(n)$ using Encoder $\mathbb{L}$. Recall that ${\rm L}_a(n)$ can detect and correct a single edit. Hence,  $\bsg=\Psi^{-1}(\bU_\bsg || \bL_\bsg)$ can correct a single edit. This simple encoder costs $2\ceil{\log n}+2$ bits of redundancy, which is at most twice the optimal. 
For completeness, we present the encoder below and refer this as the {\em Encoder A}.
We remark that Encoder A is crucial in construction of {\GC}-balanced codebooks in Section~\ref{sec:balanced}.

\subsection{First Class of Single-Edit-Correcting Codes}

\begin{theorem}\label{thm:single-edit}
Set $m=2(n-\ceil{\log n}-1)$.
There exists a linear-time single-edit-correcting encoder $\enc_{\mathbb{E}}^{A}:\{0,1\}^m \to \D^n$ 
with redundancy $2\ceil{\log n}+2$.
\end{theorem}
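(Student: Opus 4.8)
The plan is to construct $\enc_{\mathbb{E}}^{A}$ directly from Encoder $\mathbb{L}$ by applying it independently to the upper and lower sequences of the target DNA codeword, and to verify correctness via Proposition~\ref{prop:obs}(c) together with Theorem~\ref{thm:lev}. Concretely, given $\bx\in\{0,1\}^m$ with $m=2(n-\ceil{\log n}-1)$, I would split $\bx$ into two halves $\bx^{(1)},\bx^{(2)}\in\{0,1\}^{n-\ceil{\log n}-1}$, set $\bU\triangleq\enc_{\mathbb{L}}(\bx^{(1)})\in{\rm L}_a(n)$ and $\bL\triangleq\enc_{\mathbb{L}}(\bx^{(2)})\in{\rm L}_a(n)$ (here $a$ can be fixed, e.g. $a=0$, or kept as a free parameter), and output $\bsg\triangleq\Psi^{-1}(\bU||\bL)\in\D^n$. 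The decoder $\dec_{\mathbb{E}}^{A}$ on input $\bsg'\in\D^{n*}$ first forms $\bU_{\bsg'}$ and $\bL_{\bsg'}$, then runs Decoder $\mathbb{L}$ on each to recover $\bx^{(1)},\bx^{(2)}$, and concatenates.

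The key steps, in order: (1) Check that the parameters line up — Encoder $\mathbb{L}$ maps $\{0,1\}^{n-\ceil{\log n}-1}$ into ${\rm L}_a(n)$, so applying it to each half of an $m$-bit message with $m=2(n-\ceil{\log n}-1)$ produces two length-$n$ binary strings, hence a valid DNA string of length $n$ via $\Psi^{-1}$. (2) Correctness of decoding: if $\bsg'\in\Bedit(\bsg)$, then by Proposition~\ref{prop:obs}(c) we have $\bU_{\bsg'}\in\Bedit(\bU_\bsg)$ and $\bL_{\bsg'}\in\Bedit(\bL_\bsg)$; since $\bU_\bsg,\bL_\bsg\in{\rm L}_a(n)$, Theorem~\ref{thm:lev} (the Levenshtein decoder $\dec^L_a$) recovers $\bU_\bsg$ and $\bL_\bsg$ exactly, and then Decoder $\mathbb{L}$ step (II) recovers $\bx^{(1)}=\bU_\bsg|_I$ and $\bx^{(2)}=\bL_\bsg|_I$, so $\dec_{\mathbb{E}}^{A}(\bsg')=\bx$. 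Combined with $\dec_{\mathbb{E}}^{A}\circ\enc_{\mathbb{E}}^{A}=\mathrm{id}$ (the case $\bsg'=\bsg$), this shows $\enc_{\mathbb{E}}^{A}$ is a single-edit-correcting encoder. (3) Redundancy: $n\log 4 - m = 2n - 2(n-\ceil{\log n}-1) = 2\ceil{\log n}+2$. (4) Linear time: each invocation of Encoder/Decoder $\mathbb{L}$ is $O(n)$ and $\Psi,\Psi^{-1}$ are $O(n)$, so the whole encoder and decoder run in $O(n)$.

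The one subtlety worth spelling out — and the only place where there is anything to prove beyond bookkeeping — is the implication "single edit on $\bsg$ $\Rightarrow$ single edit on each of $\bU_\bsg$ and $\bL_\bsg$," i.e. Proposition~\ref{prop:obs}(c). Since that proposition is already established in the excerpt, I would simply cite it; the remaining work is purely the parameter-counting and the two-line invocation of Theorem~\ref{thm:lev}. I do not anticipate a genuine obstacle here: the construction is a clean product construction, and the redundancy $2\ceil{\log n}+2$ falls out immediately. (The reason this is not order-optimal, and why Remark~\ref{rem:observation}'s burst observation is needed for the better Encoder $\mathbb{E}^B$ in the next subsection, is precisely that we are spending a full $\ceil{\log n}$ on each coordinate sequence independently rather than exploiting the correlated structure of the error.)
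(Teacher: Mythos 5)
Your proposal is correct and follows essentially the same route as the paper: split the message into two halves, encode each half into ${\rm L}_a(n)$ via Encoder $\mathbb{L}$, interleave and apply $\Psi^{-1}$, with correctness resting on Proposition~\ref{prop:obs}(c) and Theorem~\ref{thm:lev}, and the redundancy $2\ceil{\log n}+2$ from the parameter count. Nothing further is needed.
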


\begin{proof}We describe the single-edit-correcting encoder.
Consider the message $\bx_1\bx_2\in \{0,1\}^m$ with $|\bx_1|=|\bx_2|=n-\ceil{\log n} -1$.
For $i\in [2]$, set $\bc_i=\enc_{\mathbb{L}}(\bx_i)\in \{0,1\}^n$. Then set $\enc_{\mathbb{E}}^{A}(\bx_1\bx_2)=\Psi^{-1}(\bc_1||\bc_2)$.
\end{proof}

\begin{remark}
An alternative approach to correct a single edit is to consider the quaternary Hamming Code $\C_H$
with $\log (3n+1)\approx \log n + 1.58$ bits of redundancy.
If we partition the codewords in $\C_H$ into $4n$ equivalence classes according to their VT parameters, 
we are then guaranteed a code that corrects a single edit with redundancy at most $2\log n+ 3.58$.
In contrast, the linear-time encoder in Theorem~\ref{thm:single-edit} has redundancy $2\ceil{\log n}+2$.

Furthermore, the best known linear-time encoder that maps messages into one such class is the one by Abroshan \et{} 
and the encoder introduces additional $3\ceil{\log n}+2$ redundant bits \cite{abroshan}. 
Thus, an efficient single-edit-correcting encoder obtained via the construction has redundancy approximately $4\log n+3.58$.
\end{remark}

\subsection{Order-Optimal Quaternary Codes Correcting a Single Edit}

In this subsection, we consider the quaternary alphabet $\Sigma_4=\{0,1,2,3\}$ as a subset of integers.
Our main result in this subsection is the existence of a quaternary single-edit-correcting code 
that has redundancy $\log n + O(\log \log n)$, where $n$ is the length of the codeword. 

Crucial to our construction is the {\em sum-balanced constraint}, which is defined as below. 

\begin{definition}\label{sumbalance} Let $\bx=(x_1,x_2,\ldots,x_n) \in \Sigma_4^n$. A window $\bW$ of length $k$ of $\bx$, i.e. $\bW=(x_{i+1},\ldots,x_{i+k})$ is called {\em sum-balanced} if $k<  \sum_{x_j\in W} x_j <2k$. 
A word $\bx$ is {\em $k$-sum-balanced} if every window $\bW$ of the word $\bx$ is sum-balanced whenever the window length is at least $k$.
\end{definition}

A key ingredient of our code construction is the set of all $k$-sum-balanced words. 
 \[{\rm Bal}_k(n)\triangleq \left\{ \bx\in \Sigma_4^n: \bx {\text{ is $k$-sum-balanced}}\right\}.\]

We have the following properties of ${\rm Bal}_k(n)$.
The first lemma states that whenever $k=\Omega(\log n)$, the set ${\rm Bal}_k(n)$ incurs at most one symbol of redundancy.

\begin{lemma} Given $n\ge 4$, if $k=36\log n$, then $|{\rm Bal}_k(n)| \geq 4^{n-1}$.
\end{lemma}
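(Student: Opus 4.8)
## Proof Proposal

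The plan is to bound the number of quaternary words of length $n$ that are \emph{not} $k$-sum-balanced and show this is at most $4^{n-1}(1 - 4/4) + \ldots$; more precisely, to show the bad set has size at most $3\cdot 4^{n-1}$ (or any bound leaving $|{\rm Bal}_k(n)|\ge 4^{n-1}$), it suffices to prove the \emph{density} of bad words is at most $3/4$. A word fails to be $k$-sum-balanced exactly when some window $\bW$ of length $\ell \ge k$ has $\sum_{x_j\in\bW} x_j \le \ell$ or $\ge 2\ell$. Since the average symbol value is $3/2$, a window of length $\ell$ has expected sum $3\ell/2$, and both bad events are large deviations. First I would fix a starting position $i$ and a length $\ell\ge k$, treat $x_{i+1},\dots,x_{i+\ell}$ as i.i.d. uniform on $\{0,1,2,3\}$, and apply a Chernoff/Hoeffding bound: since each $x_j\in[0,3]$, $\Pr[\sum x_j \le \ell] = \Pr[\sum x_j \le \tfrac{3\ell}{2} - \tfrac{\ell}{2}] \le \exp(-2(\ell/2)^2/(\ell\cdot 9)) = \exp(-\ell/18)$, and symmetrically for the upper tail $\Pr[\sum x_j \ge 2\ell]\le \exp(-\ell/18)$.

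Next I would take a union bound over all windows: there are at most $n$ choices of start $i$ and at most $n$ choices of length $\ell$, so
\begin{equation*}
\Pr[\bx \notin {\rm Bal}_k(n)] \le 2n^2 \sum_{\ell \ge k} \exp(-\ell/18) \le 2n^2 \cdot \frac{\exp(-k/18)}{1 - \exp(-1/18)}.
\end{equation*}
Actually it is cleaner to bound $\sum_{\ell\ge k}\exp(-\ell/18) \le C\exp(-k/18)$ for an absolute constant $C<20$, giving a bound of the form $C' n^2 \exp(-k/18)$. With $k = 36\log n$ (natural log or log base two — I would check the constant carefully; if $\log$ is base two then $k/18 = 2\log_2 n$ gives $\exp(-k/18)=n^{-2\ln 2}=n^{-1.386}$, and I would instead use $k=36\log n$ with the understanding that $e^{-k/18}$ is polynomially small of degree exceeding $2$, or simply enlarge the constant $36$ as needed). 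The point is that $C' n^2 e^{-k/18} \to 0$, and in particular for $n\ge 4$ it is at most $3/4$, so $|{\rm Bal}_k(n)| \ge \tfrac14 \cdot 4^n = 4^{n-1}$.

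The main obstacle I anticipate is getting the constant $36$ to actually work for \emph{all} $n\ge 4$ rather than just asymptotically — the union bound factor $2n^2$ (or even a cruder $n^2$) must be absorbed, and for small $n$ like $n=4$ the exponent $k=36\log 4$ is generous but one must still verify the inequality numerically or replace the crude tail bound with a sharper one (e.g.\ a direct Hoeffding bound with the variance of the uniform distribution on $\{0,1,2,3\}$, which is $5/4$ rather than $9/4$, giving exponent $\exp(-\ell^2/(4\ell\cdot 5/4)) = \exp(-\ell/5)$ and a much better constant). A secondary subtlety: the events for overlapping windows are not independent, but the union bound only needs marginal tail probabilities, so this causes no real difficulty. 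I would also double-check the edge convention in Definition~\ref{sumbalance} — whether the strict inequalities $k < \sum < 2k$ versus $\le$ matter — but since we are bounding a superset of the bad windows by allowing equality, the argument is unaffected.
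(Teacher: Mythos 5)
Your approach is the same as the paper's: pick $\bx$ uniformly at random from $\Sigma_4^n$, bound the probability that a fixed window of length $\ell\ge k$ violates the constraint by Hoeffding ($\le 2e^{-\ell/18}$, since each symbol lies in $[0,3]$ and the deviation needed is $\ell/2$), then union-bound over windows and conclude $|{\rm Bal}_k(n)|\ge \tfrac14\cdot 4^n$. The log-base worry you raise is resolved by the paper's stated convention (base $2$), so $e^{-k/18}=n^{-2\log_2 e}\approx n^{-2.885}$, and the exponent does beat the $n^2$ from the union bound.

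The gap is the quantitative one you yourself flag, and as written your bound does not close it for all $n\ge 4$: you both multiply by $n^2$ (starts \emph{and} lengths) and then additionally sum over lengths $\ell\ge k$, paying a further geometric factor $1/(1-e^{-1/18})\approx 18.5$. This yields roughly $40\,n^2e^{-k/18}$, which at $n=4$ (where $k=72$, $e^{-k/18}=e^{-4}\approx 0.018$) is about $11.7$, nowhere near $3/4$; even the corrected form $2n\sum_{\ell\ge k}e^{-\ell/18}\approx 37\,n\,e^{-k/18}$ is about $2.7$ at $n=4$. The fix, which is exactly what the paper does, is not to sum over lengths at all: since $e^{-\ell/18}$ is decreasing in $\ell$, \emph{every} window of length at least $k$ fails with probability at most $2e^{-k/18}$, and there are at most $n^2$ windows in total, so
\begin{equation*}
P\bigl(\bx\notin {\rm Bal}_k(n)\bigr)\le 2n^2e^{-k/18}=2n^{2-2\log_2 e},
\end{equation*}
and $2n^{2-2\log_2 e}\le 2\cdot 4^{-0.885}\cdot\ldots$ — concretely, at $n=4$ this equals about $0.59\le 3/4$, and it decreases in $n$, giving $|{\rm Bal}_k(n)|\ge 4^n(1-3/4)=4^{n-1}$ for all $n\ge 4$. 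So your proof needs only this tightening of the union bound (or, as you suggest, a sharper tail via the variance $5/4$) to be complete; without it, the claim holds only for $n$ beyond a moderate threshold rather than for all $n\ge 4$.
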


To prove this lemma, we require {\em Hoeffding's inequality} \cite{Hoeffding.1963}.

\begin{theorem}[Hoeffding's Inequality]
Let $Z_1, Z_2,\ldots, Z_n$ be independent bounded random variables such that $a_i\le Z_i\le b_i$ for all $i$.
Let $S_n=\sum_{i=1}^n Z_i$. For any $t>0$, we have
\begin{equation*}
P(S_n-E[S_n]\geq t) \leq e^{-{2t^2}/{\sum_{i=1}^n (b_i-a_i)^2}}.
\end{equation*}
\end{theorem}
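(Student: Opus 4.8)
The plan is to use the probabilistic method: draw a word $\bX$ uniformly at random from $\Sigma_4^n$, so that its symbols $X_1,\dots,X_n$ are i.i.d.\ uniform on $\{0,1,2,3\}$, and show that the probability that $\bX \notin {\rm Bal}_k(n)$ is strictly less than $3/4$. Since $\mathbb{P}(\bX \in {\rm Bal}_k(n)) = |{\rm Bal}_k(n)|/4^n$, this immediately yields $|{\rm Bal}_k(n)| > \tfrac{1}{4}\cdot 4^n = 4^{n-1}$. A word fails to be $k$-sum-balanced precisely when some window of length $\ell \ge k$ has sum outside $(\ell,2\ell)$, so the whole task reduces to a tail estimate for a single window sum followed by a union bound.

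First I would analyze one window. Fix a window of length $\ell \ge k$ and let $S$ be the sum of its symbols. Because each symbol has mean $3/2$, we have $\mathbb{E}[S] = 3\ell/2$, which is exactly the midpoint of $(\ell,2\ell)$; thus the window fails to be sum-balanced iff $|S-\mathbb{E}[S]| \ge \ell/2$. Each summand lies in $[0,3]$, so $\sum_i (b_i-a_i)^2 = 9\ell$, and Hoeffding's Inequality gives $\mathbb{P}(S-\mathbb{E}[S]\ge \ell/2) \le \exp\!\big(-2(\ell/2)^2/(9\ell)\big) = \exp(-\ell/18)$. Applying the same bound to the negated variables (whose range is unchanged) controls the lower tail, so the probability that this window is not sum-balanced is at most $2\exp(-\ell/18) \le 2\exp(-k/18)$.

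Next I would union-bound over all relevant windows. There are at most $n^2$ windows of length at least $k$ (at most $n$ choices of length and at most $n$ starting positions), so $\mathbb{P}(\bX \notin {\rm Bal}_k(n)) \le 2n^2\exp(-k/18)$. Substituting $k = 36\log n$ and recalling that $\log$ is base two, $\exp(-k/18) = \exp(-2\log n) = n^{-2/\ln 2}$, whence the failure probability is at most $2\,n^{\,2 - 2/\ln 2} = 2\,n^{-0.885\ldots}$. A direct check shows this quantity is already below $3/4$ at $n=4$ and is decreasing thereafter, which completes the argument.

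The computation is routine once these pieces are assembled, and I expect the only real subtlety to be bookkeeping rather than a genuine obstacle: one must confirm that the constant $36$ makes the final bound valid for \emph{every} $n\ge 4$, not merely asymptotically. It is worth noting that whenever $n < k = 36\log n$ (which holds for all moderately small $n$) the word has no window of length $\ge k$ at all, so the $k$-sum-balanced condition is vacuous and ${\rm Bal}_k(n) = \Sigma_4^n$; the uniform bound $2n^2\exp(-k/18)$ nevertheless covers these cases automatically, so no separate case analysis is strictly needed.
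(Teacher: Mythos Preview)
Your proposal does not address the stated theorem. The statement you were asked to prove is Hoeffding's Inequality itself, a classical concentration bound that the paper does not prove but merely quotes from \cite{Hoeffding.1963}. What you have written is instead a proof of the lemma immediately preceding it in the paper---namely, that $|{\rm Bal}_k(n)|\ge 4^{n-1}$ when $k=36\log n$ and $n\ge 4$---which \emph{invokes} Hoeffding's Inequality as a black box rather than establishing it.

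Read as a proof of that lemma, however, your argument is correct and follows essentially the same route as the paper's own proof: apply Hoeffding to a single window of length $\ell\ge k$ to obtain failure probability at most $2e^{-\ell/18}\le 2e^{-k/18}$, union-bound over the at most $n^2$ windows, substitute $k=36\log n$ to get $2n^{2-2/\ln 2}$, and verify this is below $3/4$ for all $n\ge 4$. Your remark that the condition is vacuous when $n<k$ is a nice observation the paper omits, but the overall structure and constants are identical.
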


\begin{proof}[Proof of Lemma 25]  
Let $\bx$ be a uniformly at random selected element from $\Sigma_4^n$.
We evaluate the probability that the first $k$-length window $\bW$ of $\bx$ 
does not satisfy the sum-balanced constraint.  Applying Hoeffding's inequality we obtain:
\begin{equation*}
P \left( \left|\sum_{x_i\in W} x_i - \frac{3k}{2}\right| \geq \frac{k}{2} \right) = 
2 P \left( \left(\sum_{x_i\in W} x_i - \frac{3k}{2}\right) \geq \frac{k}{2} \right) \leq 2 e^{-\frac{2k^2/4}{9k}}=2e^{-{k}/{18}}.
\end{equation*}
Since $f(k)=e^{-{k}/{18}}$ is decreasing in $k$, 
the probability that a $k'$-length window violates the sum-balanced constraint is at most $f(k)$ for $k'\ge k$.
Also, since there are at most $n^2$ windows, applying the union bound and setting $k=36\log n$ yield
\begin{equation*}
P(\bx \notin B_k(n)) \leq n^2 2e^{-\frac{k}{18}} = 2n^2 e^{-2 \log n} = 2 n^{2-2 \log e}.
\end{equation*}
Therefore, the size of $B_k(n)$ is at least
\begin{equation*}
|B_k(n)| \geq 4^n (1- 2 n^{2-2 \log e}).
\end{equation*}
Since $n\geq 4$, we have that $1- 2 n^{2-2 \log e} \geq 1/4$. Therefore, $|B_k(n)| \geq 4^{n-1}$.  
\end{proof}

Next, we recall that the signature of $\bx$, denoted by $\alpha(x)$, is a binary vector of length $n-1$, 
where $\alpha(x)_i=1$ if $x_{i+1}\geq x_i$, and $0$ otherwise, for $i\in[n-1]$. 
It is well-known that if a single deletion occurs in $\bx$, resulting in $\by$, then $\alpha(\by)$ can be obtained by deleting a single symbol from $\alpha(\bx)$.
The following lemma provides an upper bound on the length of a run of a $k$-balanced word and its signature.

\begin{lemma}\label{maxrun}
Let $\bx \in {\rm Bal}_k(n)$.
Then the length of a run in $\bx$ is at most $(k-1)$ while the length of a run in $\alpha(\bx)$ is strictly less than $2(k-1)$. 
\end{lemma}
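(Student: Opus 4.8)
The plan is to prove the two bounds separately, starting with the claim about runs in $\bx$ and then deducing the bound on runs in $\alpha(\bx)$ from it.

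First I would bound the length of a run in $\bx \in {\rm Bal}_k(n)$. Suppose toward a contradiction that $\bx$ contains a run of length at least $k$, say $x_{i+1}=x_{i+2}=\cdots=x_{i+k}=v$ for some symbol $v\in\Sigma_4=\{0,1,2,3\}$. Consider the window $\bW=(x_{i+1},\ldots,x_{i+k})$ of length exactly $k$. Since $\bx$ is $k$-sum-balanced, $\bW$ must satisfy $k < \sum_{x_j\in\bW} x_j < 2k$, i.e. $k < kv < 2k$, which forces $1 < v < 2$. Since $v$ is an integer, this is impossible. Hence every run in $\bx$ has length at most $k-1$. (If one wants a run of length exactly $k$ of a symbol $v\in\{0,3\}$ this is immediately absurd; the slightly more careful phrasing above handles $v\in\{1,2\}$ as well, since even a constant run of $1$'s or $2$'s gives sum $k$ or $2k$, violating the strict inequalities.)

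Next I would bound runs in the signature $\alpha(\bx)$. A run in $\alpha(\bx)$ of $1$'s of length $\ell$ corresponds to a weakly increasing stretch $x_{i+1}\le x_{i+2}\le\cdots\le x_{i+\ell+1}$ in $\bx$, and similarly a run of $0$'s of length $\ell$ corresponds to a strictly decreasing stretch (recall $\alpha(x)_j=0$ iff $x_{j+1}<x_j$). In either case, partition this monotone stretch of $\bx$ into maximal constant runs of $\bx$; by the first part, each such constant piece has length at most $k-1$. A monotone (weakly increasing or strictly decreasing) sequence over the $4$-symbol alphabet can change value at most $3$ times, so it consists of at most $4$ constant pieces — but more to the point, I should count positions: the monotone stretch spanning a run of $\alpha(\bx)$ of length $\ell$ has $\ell+1$ symbols of $\bx$. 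I would instead argue directly: any window of $\bx$ of length $2k-1$ whose signature entries are all equal would, being monotone, contain a constant sub-run of length at least $\lceil (2k-1)/2\rceil \ge k$ (by pigeonhole among the at most two "long" pieces once we note that in a decreasing sequence over $\{0,1,2,3\}$ spanning $\ge 2k-1$ terms some value repeats at least... ) — here I'd be more careful and simply observe that a strictly decreasing sequence over a $4$-letter alphabet has length at most $4$, so a run of $0$'s in $\alpha(\bx)$ has length at most $3 < 2(k-1)$ for $k\ge 3$, while a run of $1$'s corresponds to a weakly increasing stretch, which decomposes into at most $4$ constant runs each of length $\le k-1$, giving total length $\le 4(k-1)$ symbols of $\bx$ and hence a run in $\alpha(\bx)$ of length $\le 4(k-1)-1 < 2(k-1)$ only if... this is too weak, so the real argument must use that consecutive constant runs of $\bx$ inside a monotone stretch are separated by strict jumps and there are at most $3$ jumps, combined with sum-balance applied to the length-$2k$-ish window to kill the case of two long runs.

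The main obstacle I expect is exactly this last count: getting the bound $2(k-1)$ (rather than something like $4(k-1)$) for runs of $\alpha(\bx)$. The key observation that makes it work is that a monotone stretch of $\bx$ can contain at most \emph{two} distinct values in a way that matters for long runs — more precisely, if a run of $1$'s in $\alpha(\bx)$ has length $\ge 2(k-1)$, the corresponding weakly increasing window $\bW$ of $\bx$ has length $\ge 2k-1$, and since it is weakly increasing with at most $3$ value-changes, by pigeonhole at least one constant sub-run has length $\ge \lceil (2k-1)/4 \rceil$; to force a contradiction with the first part I actually need a window of length about $4k$, so instead I would apply the sum-balanced constraint itself to $\bW$: a weakly increasing window of length $2k-1$ over $\{0,1,2,3\}$ with all but a bounded prefix/suffix constant has its sum pinned too close to $0\cdot(\text{length})$ or $3\cdot(\text{length})$ to satisfy $k' < \mathrm{sum} < 2k'$. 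I would formalize this by showing that in a weakly increasing window of length $L \ge 2(k-1)$, either the first $\ge k$ entries are all equal (a constant run of length $\ge k$ in $\bx$, contradiction) or the value increases within the first $k-1$ positions, and symmetrically for the last $k-1$ positions, pinning the structure enough that some length-$\ge k$ sub-window is constant. Working out this case analysis cleanly — ensuring the threshold comes out to exactly $2(k-1)$ and not a larger constant multiple of $k$ — is the delicate part; everything else is immediate from Definition~\ref{sumbalance}.
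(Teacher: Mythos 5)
Your first claim (runs in $\bx$ have length at most $k-1$) is proved exactly as in the paper and is fine, and your observation that a run of $0$'s in $\alpha(\bx)$ corresponds to a strictly decreasing stretch of length at most $4$ is also correct. But the second half of your argument never closes: you decompose the weakly increasing stretch underlying a run of $1$'s into constant runs and only obtain a bound of roughly $4(k-1)$, you acknowledge this is too weak, and the remainder is a sketch of possible case analyses ("this is the delicate part") rather than a proof. As written, the bound $2(k-1)$ for runs in $\alpha(\bx)$ is not established.

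The missing idea is to apply the sum-balanced constraint not to constant runs but to two \emph{blocks} of the monotone stretch. A run of $1$'s in $\alpha(\bx)$ corresponds to a weakly increasing segment of $\bx$ of the form $0^{t_1}1^{t_2}2^{t_3}3^{t_4}$, which splits into the low block $0^{t_1}1^{t_2}$ (symbols in $\{0,1\}$) and the high block $2^{t_3}3^{t_4}$ (symbols in $\{2,3\}$). If the low block had length $s\ge k$, its sum would be at most $s$, contradicting the requirement that every window of length $s\ge k$ has sum strictly greater than $s$; if the high block had length $s\ge k$, its sum would be at least $2s$, contradicting the requirement that the sum be strictly less than $2s$. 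Hence each block has length at most $k-1$, the increasing segment has length at most $2(k-1)$, and the corresponding run of $1$'s in $\alpha(\bx)$ has length at most $2k-3<2(k-1)$, which together with your bound of $3$ for runs of $0$'s gives the lemma. Your instinct that "sum-balance applied to the length-$2k$-ish window" is needed was pointing in the right direction, but the clean way to use it is this two-block split, not a pigeonhole on constant runs.
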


\begin{proof}
Let $\bx \in {\rm Bal}_k(n)$.
We first show that the length of a  run in $\bx$ is at most $(k-1)$.
Suppose otherwise that the run in $\bx$ is $(x_{i+1},\ldots,x_{i+t})$ where $t \geq k$. 
Since $\bx \in {\rm Bal}_k(n)$, we have $ x_{i+1}+x_{i+2}+\cdots+x_{i+t} = t x_{i+1} \in (t, 2t)$. 
We have a contradiction since $x_{i+1}\in\{0,1,2,3\}$. 

Let $W_{(0,1)}$ be a window in $\bx$ that contains only $0$ and $1$ symbols. 
We claim that the length of $W_{(0,1)}$ is at most $(k-1)$. 
Otherwise, assume that the size is $s$ where $s\geq k$. 
Then the sum of symbols in this window is at most $s$. 
We then get a contradiction since $\bx \in {\rm Bal}_k(n)$ and the sum of such symbols is strictly more than $s$. Similarly, let $W_{(2,3)}$ be a window in $\bx$ that contains only $2$ and $3$ symbols.
Then the length of $W_{(2,3)}$ is at most $(k-1)$. 

We look at runs in the signature $\alpha(\bx)$.
First, the run of zeroes in $\alpha(\bx)$ is at most three and this happens when the corresponding window is $(3,2,1,0)$.
Next, we look at a run of ones in $\alpha(\bx)$.
Now, such a run is obtained when there is a subsequence $\by$ in $\bx$ of the form $\by=0^{t_1}1^{t_2}2^{t_3}3^{t_4}$. As shown earlier, the window $0^{t_1}1^{t_2}$ has length at most $(k-1)$ and the window $2^{t_3}3^{t_4}$ has length at most $(k-1)$. Hence, the length of $\by$ is at most $(2k-2)$, which the corresponding run of ones in 
$\alpha(\bx)$ is at most $2(k-2)$. 
%
%
\end{proof}
 
We are now ready to present our main code construction for this subsection.

\begin{construction}\label{constr:orderoptimal}
Given $k<n$, set $P=5k$.
For $a\in \mathbb{Z}_{4n+1}$, $b \in \mathbb{Z}_{P}$, $c \in \mathbb{Z}_{2}$ and $d\in \mathbb{Z}_{7}$,
set $\C^B(n;a,b,c,d)$ as follows.
\begin{align*}
\C^B(n;k,a,b,c,d) = \Big \{ \bx \in {\rm Bal}_k(n):\, {\rm Syn}(\bx)= a \ppmod{4n + 1},
\, \alpha(\bx) \in {\rm SVT}_{b,c,P}(n-1), \text{ and }  \sum_{i=1}^n x_i = d \ppmod{7} \Big \}.
\end{align*}
\end{construction}

In what follows, we first prove the correctness of Construction~\ref{constr:orderoptimal} by providing an efficient decoder that can correct a single edit in linear time. 
Subsequently, in Theorem~\ref{thm:orderoptimal}, we show that $\C$ is order-optimal with a suitable choice of $k$.

The decoding operates as follows.
\begin{itemize}
\item First, the decoder decides whether a deletion, insertion or substitution has occurred. 
Note that this information can be recovered by simply observing the length of the received word.
\item If the length of the output word is equal to $n$, then we conclude that at most a single substitution error has occurred and Lemma~\ref{decodesub} provides the procedure to the substitution error (if it exists).
\item Otherwise, the output vector has length $n-1$ (or $n+1$). We then conclude that a single deletion (or insertion) has occurred and we proceed according to Lemma~\ref{decodedel}.
\end{itemize}


\begin{lemma}\label{decodesub}
The code $\C^B(n;a,b,c,d)$ corrects a single substitution in linear time.
\end{lemma}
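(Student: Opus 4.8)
The plan is to show that when the received word $\by$ has length $n$, the single substitution (if any) can be located and corrected using only the VT-syndrome ${\rm Syn}$ and the sum-modulo-$7$ check. First I would observe that a substitution at position $i$ changing $x_i$ to $x_i'$ alters ${\rm Syn}$ by $i(x_i'-x_i)$ and alters $\sum_j x_j$ by $(x_i'-x_i)$. Thus from the received word we can compute $\Delta_1\triangleq a-{\rm Syn}(\by)\ppmod{4n+1}$ and $\Delta_2\triangleq d-\sum_j y_j\ppmod 7$; if $\Delta_1=0$ then no substitution occurred (here I would note $\sum_j x_j\in(n,2n)$ because $\bx$ is trivially $n$-sum-balanced, so it is not possible to have $\Delta_1=0$ while still having made an error — or more carefully, since $|i(x_i'-x_i)|\le 3n<4n+1$, the quantity $i(x_i'-x_i)$ is determined as an integer in $[-3n,3n]$ by its residue mod $4n+1$, and it is zero iff no error occurred).

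The key step is recovering $i$ and $x_i'-x_i$ separately. Writing $e\triangleq x_i'-x_i\in\{-3,\dots,3\}\setminus\{0\}$, we know the integer value $m\triangleq ie$ (it lies in $[-3n,3n]$ and is recovered exactly from $\Delta_1$), and we know $e\ppmod 7$ from $\Delta_2$. Since $e\in[-3,3]\setminus\{0\}$ and these seven-ish values are distinct modulo $7$, the residue $\Delta_2$ pins down $e$ exactly; then $i=m/e$ recovers the position, and we set the corrected symbol to $y_i-e$. I would check that this arithmetic is consistent: $m$ must indeed be divisible by $e$ and the quotient must lie in $[1,n]$, and $y_i-e$ must lie in $\{0,1,2,3\}$ — all automatic when a genuine single substitution occurred. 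All of this is clearly computable in $O(n)$ time.

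The main subtlety — and where I'd be most careful — is the boundary between ``substitution'' and ``no error'': the decoder must correctly output $\bx$ unchanged when $\by=\bx$, which is handled by the case $\Delta_1=0$. A secondary point worth a sentence is why ${\rm Syn}$ is taken modulo $4n+1$ rather than something smaller: we need $i e$ to be recoverable as an exact integer, and since $|ie|\le 3n$, a modulus exceeding $2\cdot 3n$ would suffice, but $4n+1$ is the natural choice that also meshes with the deletion-correction analysis in Lemma~\ref{decodedel} (so I would not belabor this). Note that the $k$-sum-balanced property and the SVT constraint on $\alpha(\bx)$ play no role in this lemma — they are only needed for the indel case — so the proof here is short: set up the two difference equations, argue $e$ is determined mod $7$, recover $i=m/e$, correct, and note linear-time computability.
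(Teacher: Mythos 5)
Your overall strategy is the same as the paper's (the mod-$7$ sum check pins down the error value, and the VT syndrome mod $4n+1$ locates it), but one step as you have written it is genuinely wrong: you claim that $m=ie$ "lies in $[-3n,3n]$ and is recovered exactly from $\Delta_1$" \emph{before} you know $e$. An interval of length $6n+1$ cannot be resolved by a residue modulo $4n+1$, so $\Delta_1$ alone does not determine $m$ as an integer. Concretely, take $n$ odd and a substitution at position $i=(n+1)/2$ with $e=-2$: then $ie=-(n+1)\equiv 3n \ppmod{4n+1}$, and both $-(n+1)$ and $3n$ lie in $[-3n,3n]$. If your decoder picks the representative $3n$, the subsequent divisibility/range check for $i=m/e$ fails and you have specified no fallback; your own closing remark ("a modulus exceeding $2\cdot 3n$ would suffice") already concedes that your recovery of $m$ needs a modulus larger than $6n$, which $4n+1$ is not.

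The repair is exactly the ordering the paper uses: first determine $e=x_i'-x_i$ from the mod-$7$ check (as you do, the nonzero values $\pm1,\pm2,\pm3$ are distinct mod $7$), and only then solve $je\equiv a-{\rm Syn}(\by)\ppmod{4n+1}$ for $j\in[1,n]$. For a \emph{fixed} known $e$ this congruence has at most one solution in $[1,n]$, since $j e\equiv j'e$ forces $(j-j')e=0$ because $|(j-j')e|\le 3(n-1)<4n+1$; hence the position, and then the symbol, are recovered uniquely. (Your error-detection step via $\Delta_1=0$, justified by $|ie|\le 3n<4n+1$, is fine; the parenthetical appeal to sum-balancedness is unnecessary, and you are right that the $k$-sum-balanced and SVT constraints are not needed for the substitution case.) With the order of the two recovery steps swapped, your argument matches the paper's proof.
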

\begin{proof}
Suppose that the original codeword is $\bx$ and  we receive a vector $\by$ of length $n$. 
The decoder proceeds as follows.
\begin{itemize}
\item {\bf Step 1. Error detection}. Compute $d'\triangleq \sum_{i=1}^n y_i \ppmod{7}$. If $d'=d$, then we conclude that there is no error and $\by$ is the original codeword $\bx$. Otherwise, a substitution error occurs. 
Henceforth, we assume that it occurs in position $j$. 
\item {\bf Step 2.} Next, we compute $y_j-x_j=d'-d \ppmod{7}$. Since, $x_j, y_j\in\{0,1,2,3\}$, we can determine the value of $y_j -x_j$ as integers.
\item {\bf Step 3. Error location}. Compute $a'={\rm Syn}(\by) \ppmod{4n + 1}$ and clearly, we have $a'-a= j (y_j-x_j) \ppmod{4n + 1}$. 
Now, since $1\le j\le n$, we uniquely determine the index $j$ with the value of $y_j-x_j$ from Step 2. 
\item {\bf Step 4. Recovering the symbol}.
Finally, given the error location $j$, we recover $x_j$ using $y_j$ and $y_j-x_j$.
\end{itemize}
It is easy to see that all the decoding steps run in $O(n)$ time. 
Hence, $\C^B(n;a,b,c,d)$ corrects a single substitution in linear time.
\end{proof}

It remains to describe the decoding procedure for correcting a single deletion.
To this end, we have the following technical lemma that that characterizes words
 whose deletion balls intersect and whose syndromes are the same.

\begin{lemma}\label{distance}
Let $\bx$ and $\bz$ be two words such that the following hold.
\begin{enumerate}[{\rm (B1)}]
\item $\bx$ and $\bz$ belongs to ${\rm Bal}_k(n)$.
\item ${\rm Syn}(\bx)={\rm Syn}(\bz) =a \pmod{4n+1}$.
\item $\sum_{i=1}^n x_i= \sum_{i=1}^n z_i =d \pmod{7}$.
\item $\B^{\rm indel}(\bx)\cap\B^{\rm indel}(\bz)$ is non-empty.
\end{enumerate}
Suppose that  $\by\in \B^{\rm indel}(\bx)\cap\B^{\rm indel}(\bz)$.
If $\by$ is obtained from $\bx$ by deleting $x_i$ and obtained from $\bz$ by deleting $z_j$,
then we have that $|i-j|\le k$.
\end{lemma}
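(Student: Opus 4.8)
The plan is to reduce the problem to a statement about the two deleted symbols and the segment of the word that lies between them, and then to exploit the syndrome condition (B2) together with the sum condition (B3) and the $k$-sum-balanced property (B1) to force $|i-j|\le k$. First I would set up notation: assume without loss of generality $i\le j$, let $\by$ be obtained from $\bx$ by deleting $x_i$ and from $\bz$ by deleting $z_j$, and observe that since $\by$ has length $n-1$ and agrees with $\bx$ off position $i$ and with $\bz$ off position $j$, the two words $\bx$ and $\bz$ must agree on the prefix $[1,i-1]$ and on the suffix $[j+1,n]$, while on the window $[i,j]$ the entries of $\bz$ are a ``left shift by one'' of the entries of $\bx$: concretely $z_\ell=x_{\ell+1}$ for $i\le \ell\le j-1$ and $z_j=x_i$ (this is the standard analysis of when two deletion balls intersect — the common subsequence pins everything down outside a single interval, and inside that interval the two words are cyclic-type shifts of a common run structure).

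Next I would compute ${\rm Syn}(\bx)-{\rm Syn}(\bz)\pmod{4n+1}$. Using the shift relation on $[i,j]$, all terms outside $[i,j]$ cancel, and the difference telescopes to something of the form $\sum_{\ell=i}^{j-1}(x_{\ell+1}\cdot \ell - x_{\ell+1}\cdot(\ell)) $ — more precisely, writing it out, ${\rm Syn}(\bx)-{\rm Syn}(\bz)= i\cdot x_i - \sum_{\ell=i+1}^{j} \ell\, x_\ell + \sum_{\ell=i}^{j-1}\ell\, x_{\ell+1} + j\, x_i$, which after re-indexing collapses to $\big(\sum_{\ell=i+1}^{j} x_\ell\big) - (j-i)x_i + (\text{correction})$; I would simplify carefully to show the difference equals $S - (j-i+1)x_i$ up to sign, where $S\triangleq\sum_{\ell=i}^{j}x_\ell$ is the sum over the window, or something of that shape. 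The key point is that by (B2) this quantity is $\equiv 0\pmod{4n+1}$, and since its absolute value is crudely bounded by $3n+3<4n+1$, it must be exactly zero as an integer. By (B3), $\sum x_i\equiv\sum z_i\pmod 7$ forces an analogous small equality that rules out the degenerate wrap-around and confirms the window sum has the expected value.

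The main obstacle — and where the $k$-sum-balanced hypothesis enters — is turning the integer identity into the bound $|i-j|\le k$. The identity says roughly that the window $W=(x_i,\ldots,x_j)$ of length $j-i+1$ has sum exactly $(j-i+1)x_i$, i.e.\ its average symbol value equals $x_i\in\{0,1,2,3\}$. If $j-i+1\ge k$, then by Definition~\ref{sumbalance} the window is sum-balanced, so its sum lies strictly between $j-i+1$ and $2(j-i+1)$, forcing the average to lie strictly in $(1,2)$; combined with the fact that the average equals the integer $x_i$, this is a contradiction. Hence $j-i+1< k$, which gives $|i-j|\le k-2<k$, as claimed. I expect the bookkeeping in the syndrome telescoping (getting the sign and the constant exactly right, and checking the degenerate cases where $x_i=z_j$ coincides with a boundary symbol) to be the fiddly part, but no genuinely hard idea is needed there; the real content is the sum-balanced contradiction, which is short once the identity is in hand. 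I would also double-check the edge case $i=j$ (nothing to prove) and the case where the common subsequence can be aligned in more than one way (the interval $[i,j]$ is then a single run, the window sum is a multiple of the run symbol, and the same averaging argument applies verbatim).
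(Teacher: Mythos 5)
Your proposal is correct and follows essentially the same route as the paper's proof: both turn the syndrome condition (B2) into a congruence whose magnitude is less than $4n+1$, hence an exact integer identity equating a window sum of the codeword to (window length) times the deleted symbol $m$, and then invoke the $k$-sum-balanced property (B1) to force $1<m<2$, a contradiction, whenever the window length is at least $k$. One small correction: the equality of the deleted symbols $x_i=z_j$ is not a consequence of the ``standard analysis'' of intersecting deletion balls (e.g.\ $\bx=12$, $\bz=23$ share the deletion $2$ with distinct deleted symbols) — it follows from the mod-$7$ check (B3) together with $x_i,z_j\in\{0,1,2,3\}$, which is exactly the paper's first step and which you invoke only vaguely.
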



\begin{proof}
From (B3) and the fact that $x_i$ and $z_j$ belongs to $\Sigma_4$, we have that the deleted symbols are the same. In other words,  $x_i=z_j$ and we set this value to be $m$.

Without loss of generality, assume that $i>j$. We compute the syndrome of $\by$
\[{\rm Syn}(\by) = \sum_{t=1}^{n-1} t y_t = a' \ppmod{4n+1},\]
and consider the quantity
\begin{equation}\label{sample}
(a-a') \ppmod{4n + 1}  = \sum_{t=1}^{n} t x_t - \sum_{t=1}^{n-1} t y_t = im + \sum_{t=i}^{n-1} y_{t}.
\end{equation}

On the other hand, if we compute the same quantity using $\bz$, we have that 
\begin{equation}\label{sample2}
(a-a') \ppmod{4n + 1}  = jm + \sum_{t=j}^{n-1} y_{t}.
\end{equation}

Now, we know that $y_t=z_{t+1}$ for $t\ge j$. 
Subtracting \eqref{sample2} from \eqref{sample} and setting $i-j=b$,
we have that 
\[bm = \sum_{t=j+1}^i z_t \ppmod{4n+1}.\]

Let $b=i-j$. Since $bm < 4n + 1$ and $0\leq z_t\leq 3$ for all $t$, 
we have that $bm = \sum_{t=j+1}^{i} z_{t}$. 
Suppose to the contrary that $i-j=b \geq k$. 
Then since $\bz\in{\rm Bal}_k(n)$, we have that $b < \sum_{t=j+1}^i z_t < 2b$.
Hence, we have that $b<bm<2b$, contradicting the fact that $m$ is an integer.
Therefore, $i-j\le k$.
\end{proof}

Finally, we present the deletion-correcting procedure.

\begin{lemma}\label{decodedel}
The code $\C^B(n;a,b,c,d)$ corrects a single deletion or single insertion in linear time.
\end{lemma}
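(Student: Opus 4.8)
The plan is to mimic the substitution-correcting decoder of Lemma~\ref{decodesub}, but now using the full arsenal of constraints in Construction~\ref{constr:orderoptimal}: the $k$-sum-balanced property, the VT-syndrome ${\rm Syn}(\bx)=a \ppmod{4n+1}$, the shifted-VT constraint on $\alpha(\bx)$, and the checksum $\sum x_i = d \ppmod 7$. First I would treat the deletion case (insertion is dual). Suppose $\bx$ is transmitted, a symbol $x_i$ is deleted, and $\by\in\Sigma_4^{n-1}$ is received. From the received length the decoder already knows a deletion occurred. Using the checksum modulo $7$, compute $d-\sum_{t} y_t \ppmod 7$; since $0\le x_i\le 3$ this pins down the integer value $m=x_i$ of the deleted symbol. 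This is exactly the role condition (B3) plays in Lemma~\ref{distance}.

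Next I would localize the deletion to a short window. The key is Lemma~\ref{distance}: if $\by$ arises from two codewords $\bx,\bz\in\C^B$ by deleting $x_i$ and $z_j$ respectively, then $|i-j|\le k$. Concretely, for the received $\by$, let $i_0$ be the leftmost index at which $\by$ can be interpreted as $\bx$ with a deletion consistent with syndrome $a$; the lemma guarantees every valid interpretation lies in the window $[i_0, i_0+k]$, i.e. within $P=5k$ consecutive positions (the slack of $5k$ comfortably covers the run-length considerations). Then, since $\alpha(\by)$ is obtained from $\alpha(\bx)$ by a single symbol deletion and $\alpha(\bx)\in{\rm SVT}_{b,c,P}(n-1)$, Theorem~\ref{thm:svt} lets us recover $\alpha(\bx)$ exactly, given that the deletion location in the signature is known to within $P$ consecutive positions — which follows from Lemma~\ref{maxrun}, bounding runs in $\alpha(\bx)$ by $2(k-1)<P$, so a window of width $k$ in $\bx$ maps to a signature-window of width $<P$. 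Once $\alpha(\bx)$ is known, and we know the deleted symbol value $m$ and a $k$-window containing the deletion, we can recover $\bx$: within that window $\alpha(\bx)$ plus the surrounding symbols of $\by$ determine where a symbol of value $m$ must be inserted to make the signature match, and the VT-syndrome $a$ resolves any remaining ambiguity among the $\le k$ candidate positions exactly as in the proof of Lemma~\ref{distance}.

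So the steps, in order, are: (1) detect deletion from length; (2) recover the deleted symbol's value via the mod-$7$ checksum; (3) use the argument of Lemma~\ref{distance} to confine the deletion position to $P=5k$ consecutive coordinates; (4) invoke the SVT decoder (Theorem~\ref{thm:svt}) on $\alpha(\by)$ to reconstruct $\alpha(\bx)$; (5) use $\alpha(\bx)$ together with the known symbol value to reduce to at most $k$ candidate insertion positions, then use the VT-syndrome mod $4n+1$ to select the unique correct one; (6) verify every step is $O(n)$. The insertion case is handled symmetrically (an insertion in $\bx$ is a deletion when read in reverse, or one simply repeats the argument with deletion balls replaced by insertion balls), and Lemma~\ref{decodesub} already handles substitution, so together $\C^B$ corrects a single edit in linear time.

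The main obstacle I expect is step (3)–(4): making precise the claim that knowing the deletion location in $\bx$ to within $k$ positions translates into knowing the deletion location in the signature $\alpha(\bx)$ to within $P$ positions, and checking that $P=5k$ is the right amount of slack. This requires carefully combining the run-length bound of Lemma~\ref{maxrun} (runs in $\alpha(\bx)$ have length $<2(k-1)$) with the fact that deleting $x_i$ changes $\alpha$ only in a neighborhood of index $i$ whose size is governed by the adjacent runs; one must ensure the union of the ambiguity window from Lemma~\ref{distance} and this run-induced spread still fits inside $P=5k$. The bookkeeping to confirm the $O(n)$ running time of locating $i_0$ and of the candidate search is routine but must be stated.
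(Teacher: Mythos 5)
Your proposal matches the paper's decoder step for step: detect the deletion from the length, recover the deleted symbol's value via the mod-$7$ check, confine the candidate deletion positions to a width-$k$ window using Lemma~\ref{distance}, translate this into a window of width less than $P=5k$ in the signature using the run bounds of Lemma~\ref{maxrun}, apply the SVT decoder of Theorem~\ref{thm:svt} to recover $\alpha(\bx)$, and finally reconstruct $\bx$. The bookkeeping you flagged as the main obstacle is exactly the paper's computation $2(k-1)+k+2(k-1)<5k=P$, so the slack does work out, and your extra use of the VT syndrome in the last reconstruction step is a harmless variant (the paper simply notes that $\alpha(\bx)$, the received word and the symbol $m$ already determine $\bx$).
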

\begin{proof}
Since the decoding process for correcting an insertion is similar to correcting a deletion, we only present the case of a deletion.
Now, let $\by$ be the result of a deletion occurring to $\bx \in \C^B(n;a,b,c,d)$ at position $i$. 
To recover $\bx$, the decoder proceeds as follows.
\begin{itemize}
\item {\bf Step 1. Identifying the deleted symbol}. From the constraint $\sum_{t=1}^n x_t \equiv d \ppmod{7}$, we compute the deleted symbol $m\triangleq d-\sum_{t=1}^{n-1} y_t \ppmod{7}.$
\item {\bf Step 2. Localizing the deletion}. 
\begin{itemize}
\item {\bf Localizing the deletion in $\bx$}.
Using \eqref{sample} or \eqref{sample2}, we compute the possible deleted positions.
Specifically, set $a'={\rm Syn}(\by)\ppmod{4n+1}$ and we compute $\mathbb{J}=\{1\leq j\leq n: a'+jm+\sum_{t=j}^{n-1} y_t=a \ppmod{4n+1} \} $. 
According to Lemma~\ref{distance}, we have $|i-j| \leq k$ for all $i,j \in \mathbb{J}$. 
\item {\bf Localizing the deletion in $\alpha(\bx)$}.
For $j\in \bbJ$, set $\by^+_j$ to be word obtained by inserting $m$ at index $j$.
Suppose that $\alpha(\by)$ can be obtained by deleting a single symbol from $\alpha(\by^+_j)$ at position $j'$. Then we add $j'$ to the set $\bbJ'_{j}$. We set $\bbJ' \triangleq \bigcup_{j\in \bbJ} \bbJ'_j$.
We claim that $\bbJ'\subseteq [j'_{min}, j'_{min}+P-1]$ where $j'_{min}$ is the smallest index.

Indeed, Lemma~\ref{maxrun} states that the longest run in $\bx$ is at most $(k-1)$ and the longest run in $\alpha(\bx)$ is less than $2(k-1)$. Therefore, the interval containing $\bbJ'$ is of length at most 
$2(k-1)+k+2(k-1)<5k=P$.
\end{itemize}

\item {\bf Step 3. Recovering $\alpha(\bx)$}. 
Since $\alpha(\bx) \in SVT_{b,c,P}(n-1)$,
we apply $\dec_{c,d,P}^{SVT}$ (from Theorem~\ref{thm:svt}) to $\alpha(\by)$ and $\bbJ'$
 to recover $\alpha(\bx)$.
\item {\bf Step 4. Recovering $\bx$}. 
From recovered $\alpha(\bx)$ and symbol $m$, we can determine $\bx$ uniquely.
\end{itemize}
It is easy to see that all the decoding steps run in $O(n)$. Hence, $\C(n;a,b,c,d)$ can correct a single deletion or single insertion in linear time.
\end{proof}
Combining the results of Lemma~\ref{decodesub} and Lemma~\ref{decodedel} we have the following theorem.

\begin{theorem}\label{thm:orderoptimal}
The code $\C^B(n;a,b,c,d)$ corrects a single edit in linear-time. 
There exist $a,b,c,d$ such that the size of $\C^B(n;a,b,c,d)$ is at least 
\begin{equation*}
|\C(n;a,b,c,d)|\geq \frac{|{\rm Bal}_k(n)|}{35 (4n+1) k }.
\end{equation*}
When $k=36\log_4 n$, we have that the redundancy is at most $\log_4 n+ O(\log \log n)$ bits.
\end{theorem}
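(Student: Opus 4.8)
The plan is to establish the two claims separately: first the correctness (that $\C^B(n;a,b,c,d)$ corrects a single edit in linear time), then the size bound, and finally the redundancy calculation obtained by optimizing over $k$. For the first claim, I would simply assemble the decoding algorithm already described: the decoder reads the length of the received word to decide between the substitution case (length $n$) and the indel case (length $n\pm 1$), then invokes Lemma~\ref{decodesub} in the former case and Lemma~\ref{decodedel} in the latter. Both lemmas already guarantee linear-time decoding, so correctness follows immediately by combining them; there is essentially nothing new to prove here beyond noting that the two cases are exhaustive and mutually exclusive.

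For the size bound I would use an averaging (pigeonhole) argument over the parameters $(a,b,c,d)$. The code $\C^B(n;k,a,b,c,d)$ is the set of words in ${\rm Bal}_k(n)$ satisfying three independent congruence-type constraints: ${\rm Syn}(\bx)=a\pmod{4n+1}$, $\alpha(\bx)\in{\rm SVT}_{b,c,P}(n-1)$ (which is two constraints, modulo $P=5k$ and modulo $2$), and $\sum_i x_i=d\pmod 7$. These partition ${\rm Bal}_k(n)$ into at most $(4n+1)\cdot P\cdot 2\cdot 7 = (4n+1)\cdot 5k\cdot 14 = 70k(4n+1)$ classes. Hmm — the claimed denominator is $35(4n+1)k$, which is exactly half of this; I would need to check whether the SVT parity bit $c$ and the mod-$7$ check can be merged, or whether one of the $\Sigma_4$-parameters ranges over a set of half the naive size (e.g. because $\sum x_i\pmod 7$ only takes values compatible with $k$-sum-balancedness, or the signature parity is determined). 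In any case, by pigeonhole there exist parameters for which the class has size at least $|{\rm Bal}_k(n)|/(35(4n+1)k)$.

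For the redundancy, I would set $k=36\log_4 n$ so that Lemma~25 gives $|{\rm Bal}_k(n)|\ge 4^{n-1}$. Then
\[
|\C^B(n;a,b,c,d)| \;\ge\; \frac{4^{n-1}}{35(4n+1)\cdot 36\log_4 n},
\]
and the redundancy in $4$-ary symbols is
\[
n-\log_4|\C^B| \;\le\; 1 + \log_4\!\big(35(4n+1)\cdot 36\log_4 n\big) \;=\; \log_4 n + O(\log\log n),
\]
since $\log_4(4n+1)=\log_4 n+O(1)$ and the remaining factors contribute $O(\log\log n)$. Converting to bits multiplies by $2$ but does not change the order term, so the redundancy is $\log_4 n+O(\log\log n)$ as claimed (equivalently $\tfrac12\log_2 n+O(\log\log n)$ bits, i.e.\ order-optimal with $K=1$ in the base-$n$ count).

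The main obstacle I anticipate is reconciling the exact constant in the denominator: I must verify that the four families of constraints really do carve ${\rm Bal}_k(n)$ into only $35(4n+1)k$ (not $70(4n+1)k$) classes, which requires tracking precisely how the ranges of $c\in\mathbb{Z}_2$ and $d\in\mathbb{Z}_7$ interact with the structure of $k$-sum-balanced words and their signatures. Everything else — the case split for the decoder, the pigeonhole step, and the asymptotic estimate of $\log_4$ of the denominator — is routine given the lemmas already established.
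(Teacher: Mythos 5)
Your proposal follows the paper's proof exactly: correctness is obtained by combining Lemma~\ref{decodesub} and Lemma~\ref{decodedel} via the length-based case split, the size bound by pigeonholing ${\rm Bal}_k(n)$ over the syndrome classes, and the redundancy by taking $k=36\log n$ so that $|{\rm Bal}_k(n)|\ge 4^{n-1}$. The factor-of-two discrepancy you flag ($35(4n+1)k$ versus the naive count of $70(4n+1)k$ classes generated by $a,b,c,d$) is an imprecision in the paper's stated constant rather than a flaw in your method --- the paper's own one-line pigeonhole argument does not account for the parity parameter $c$ either --- and it does not affect the $\log_4 n+O(\log\log n)$ redundancy conclusion.
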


\begin{proof}
It remains to demonstrate the property of the code size.
The lower bound is obtained using pigeonhole principle. 
Setting $k=36\log_4 n$, we have that $|{\rm Bal}_k(n)|\ge 4^{n-1}$ and hence the redundancy is $\log_4 n+ O(\log \log n)$.
\end{proof}

\begin{remark}
Observe that $\C^B(n;a,b,c,d)$ requires at least 
$1+\log\left(35(4n+1)k\right)\ge 13 +\log n$ bits of redundancy.
In contrast, the encoder $\enc_{\mathbb{E}}^{A}$ from Theorem~\ref{thm:single-edit} requires at most $4+2\log n$ bits of redundancy.
Therefore, even though $\C^B(n;a,b,c,d)$ is order-optimal, the encoder $\enc_{\mathbb{E}}^{A}$ incurs less redundant bits whenever $n\le 2^9=512$.
\end{remark}

\vspace{2mm}

Finally, we provide an efficient encoder that encodes binary messages into a quaternary codebook that corrects a single edit.
While the codebook in general is not the same as $\C^B(n;a,b,c,d)$, the decoding procedure is similar and 
the number of redundant bits is similar to that of $\C^B(n;a,b,c,d)$.

A high-level description of the encoding procedure is as follows.
\begin{enumerate}[(I)]
\item {\bf Enforcing the $k$-sum-balanced constraint}. 
Given an arbitrary message $\bx$ over $\Sigma_4$ of length $m-1$, we encode $\bx$ to a word $\bz\in {\rm Bal}_k(m)$.
However, it is not straightforward to efficiently code for this constraint. 
Hence, we instead impose a tighter constraint on the sum, but we impose the constraint only on windows of length exactly $k$. 
We provide the formal definition of {\em restricted-sum-balanced} in Definition~\ref{rsumbalance}.
\item {\bf Appending the syndromes}. Using $\bz$, we then compute its VT syndrome $a\triangleq{\rm Syn}(\bz) \ppmod{4n + 1}$, 
 SVT syndrome $b\triangleq {\rm Syn}(\alpha(\bz))\ppmod{P}$ and $c\triangleq \sum_{i=1}^n \alpha(\bz)_i \ppmod{2}$,
 and the check $d\triangleq \sum_{i=1}^n z_i  \ppmod{7}$.
 Finally, we append the quaternary representations of $a$, $b$, $c$ and $d$ to the end of $\bz$ with a {\em marker} between $\bz$ and these syndromes. 
 It turns out we can modify the procedures given in Lemma~\ref{decodesub} and Lemma~\ref{decodedel}
 and correct a single edit in linear time. We do so in Theorem~\ref{thm:encoderB}.
\end{enumerate}

\vspace{2mm}

\noindent{\bf Enforcing the $k$-sum-balanced constraint}. As mentioned earlier, instead of encoding into $k$-sum-balanced words, we require the words to have the following property.

\begin{definition}\label{rsumbalance} Let $\bx=(x_1,x_2,\ldots,x_n) \in \Sigma_4^n$. A window $\bW$ of length $k$ of $\bx$, i.e. $\bW=(x_{i+1},\ldots,x_{i+k})$ is called {\em restricted-sum-balanced} if $5k/4<  \sum_{x_j\in W} x_j <7k/4$. 
A word $\bx$ is {\em $k$-restricted-sum-balanced} if every window $\bW$ of length exactly $k$ of the word $\bx$ is restricted-sum-balanced.
\end{definition}

The following lemma states that a  $k$-restricted-sum-balanced is also  $(4k)$-sum-balanced.

\begin{lemma} 
Let ${\rm Bal}_k^{*}(n)=\{\bx \in \Sigma_4^n: \text{ $\bx$ is $k$-restricted-sum-balanced}\}.$ 
We have that ${\rm Bal}_k^{*}(n) \subseteq {\rm Bal}_{4k}(n)$.
\end{lemma}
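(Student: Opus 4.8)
The plan is to show that if $\bx$ fails to be $(4k)$-sum-balanced, then some window of length exactly $k$ must violate the restricted-sum condition, which contradicts $\bx \in {\rm Bal}_k^{*}(n)$. So suppose $\bx \notin {\rm Bal}_{4k}(n)$; then there is a window $\bW = (x_{i+1}, \ldots, x_{i+\ell})$ of length $\ell \ge 4k$ with $\sum_{x_j \in \bW} x_j \le \ell$ or $\sum_{x_j \in \bW} x_j \ge 2\ell$. By symmetry (replacing each $x_j$ by $3 - x_j$ swaps the two cases), it suffices to treat the case $\sum_{x_j \in \bW} x_j \le \ell$.

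The key step is an averaging argument over the consecutive length-$k$ subwindows of $\bW$. Write $\ell = qk + s$ with $0 \le s < k$; partition the first $qk$ coordinates of $\bW$ into $q$ disjoint consecutive blocks $\bW_1, \ldots, \bW_q$, each of length exactly $k$. Since each symbol is nonnegative, $\sum_{t=1}^{q} \bigl(\sum_{x_j \in \bW_t} x_j\bigr) \le \sum_{x_j \in \bW} x_j \le \ell \le qk + s < (q+1)k$. Hence the average block sum is strictly less than $(q+1)k/q = k(1 + 1/q)$. Because $\ell \ge 4k$ we have $q \ge 4$ (indeed $q = \lfloor \ell/k \rfloor \ge 4$), so $1 + 1/q \le 5/4$, and therefore some block $\bW_t$ satisfies $\sum_{x_j \in \bW_t} x_j < 5k/4$. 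This is a window of length exactly $k$ violating the restricted-sum-balanced lower bound, contradicting $\bx \in {\rm Bal}_k^{*}(n)$.

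For the complementary case $\sum_{x_j \in \bW} x_j \ge 2\ell$, the same block decomposition gives $\sum_{t=1}^{q} \bigl(\sum_{x_j \in \bW_t} x_j\bigr) \ge 2\ell - 3s \ge 2qk + 2s - 3s = 2qk - s > 2qk - k = (2q-1)k$ (using that the $s$ leftover symbols contribute at most $3s$), so the average block sum exceeds $(2q-1)k/q = k(2 - 1/q) \ge k \cdot 7/4$ when $q \ge 4$; thus some block has sum $> 7k/4$, again a contradiction. (Alternatively, one simply invokes the $x_j \mapsto 3 - x_j$ symmetry, which carries the restricted-sum window condition $5k/4 < \cdot < 7k/4$ to itself and the sum-balanced condition to itself, reducing this case to the previous one.)

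The main obstacle — and it is a mild one — is getting the constant right: one must make sure the averaging over $q \ge 4$ blocks pushes the average block sum past the tighter thresholds $5k/4$ and $7k/4$ rather than merely past $k$ and $2k$, and that the leftover $s < k$ symbols cannot spoil the inequality. The factor $4$ relating $k$ and $4k$ is exactly what is needed: it forces $q \ge 4$, hence $1/q \le 1/4$, which is precisely the slack between the restricted thresholds $(5/4, 7/4)$ and the ordinary thresholds $(1, 2)$. No step requires anything beyond nonnegativity of the symbols and elementary arithmetic.
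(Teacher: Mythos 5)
Your proposal is correct and is essentially the paper's argument in contrapositive form: the paper directly takes any window of length $\ell = Mk+N \ge 4k$ in a $k$-restricted-sum-balanced word, splits it into $M \ge 4$ consecutive blocks of length $k$ plus a remainder of length $N<k$, and sums the per-block bounds $5k/4 < \cdot < 7k/4$ (using $0 \le x_j \le 3$ for the leftover symbols) to conclude $\ell < \sum < 2\ell$, which is exactly the same decomposition and arithmetic slack ($1/q \le 1/4$) that your averaging argument uses. So the proposal is fine; no gap.
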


\begin{proof}
Let $\bx \in {\rm Bal}_k^{*}(n)$ and $\bW$ be a length of $\bx$ that length at least $4k$. 

Suppose that the length of $\bW$ is $Mk+N$ for some $M\geq4$ and $0\leq N<k$. 
Hence, we write $\bW\triangleq\bz_1\bz_2\ldots \bz_M \bz_{M+1}$ 
where each $\bz_i$ is a window of length $k$ for $1\leq i\leq M$ and 
the window $\bz_{M+1}$ is of length $N$. 

Since $\bx \in B_k^{*}(n)$, we then have $\bz_i$ is $k$-restricted-sum-balanced for $1\leq i\leq M$. Therefore, $5k/4 < \sum_{x_j \in \bz_i} x_j < 7k/4$ for $1\leq i\leq M$. 
Hence, for $M\geq 4$ and $0\leq N<k$, we have
\[
\sum_{x_j \in W} x_j 
> \sum_{i=1}^{M} \left(\sum_{x_t \in \bz_i} x_t\right) 
> 5Mk/4 =Mk+ Mk/4 \geq Mk + k > Mk +N,\]
and
\[
\sum_{x_j \in W} x_j 
< \sum_{i=1}^{M} \left(\sum_{x_t \in \bz_i} x_t\right) + 3N
< 7Mk/4 + 3N = 2Mk+ 2N+(N-Mk/4) < 2Mk+2N.\]
This shows that $\bW$ is sum-balanced. Hence, $\bx \in B_{4k}(n)$.
\end{proof}

Now, we may modify the sequence replacement techniques \cite{Elishco.2019,Wijngaarden.2010} to encode for the restricted-sum-balanced constraint.

\begin{theorem}\label{thm:rsb-encoder}
Suppose that $k\ge 72 \log n$. 
Then there is a pair of efficient algorithms $\enc_{\rsb{k}}:\Sigma_4^{n-1}\to \Sigma_4^{n}$ 
and $\dec_{\rsb{k}}:{\rm Im}(\enc_{\rsb{k}})\to \Sigma_4^{n-1}$
such that $\enc_{\rsb{k}}(\bx)\in {\rm Bal}^*_k(n)$ and $\dec_{\rsb{k}}\circ \enc_{\rsb{k}}(\bx)=\bx$ for all $\bx\in \Sigma_4^{n-1}$.
\end{theorem}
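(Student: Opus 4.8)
The plan is to design the pair $(\enc_{\rsb{k}}, \dec_{\rsb{k}})$ via a variant of the sequence-replacement (a.k.a.\ bit-stuffing / substitution) technique adapted to the quaternary alphabet and to the window-sum constraint. First I would observe that, since we only need to control windows of length \emph{exactly} $k$, a word $\bx\in\Sigma_4^{n}$ fails to be $k$-restricted-sum-balanced precisely when there is some position $i$ with $\sum_{j=i+1}^{i+k} x_j \le 5k/4$ or $\ge 7k/4$; call such a window a \emph{bad window}. The encoder scans $\bx$ from left to right, and whenever it detects a bad window starting at the current position, it deletes that length-$k$ block, records the location $i$ (which costs $\lceil\log n\rceil$ quaternary symbols, i.e.\ $\lceil \log_4 n\rceil$ symbols) together with a short description of how to regenerate the deleted block, and appends this bookkeeping information at the tail of the word inside a reserved "safe" region that is itself guaranteed to contain no bad window (for instance a region written using only the symbols $1$ and $2$, whose length-$k$ sums all lie strictly between $5k/4$ and $7k/4$). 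The process is iterated until the whole prefix is clean; because every replacement step strictly removes one bad window while the appended data is balanced by design, the iteration terminates, and the key accounting is that each removal frees $k$ symbols while consuming only about $\log_4 n + O(1)$ symbols of overhead, so under the hypothesis $k\ge 72\log n$ — in fact any $k = \omega(\log n)$ suffices — the total length does not grow, and we can pad out to exactly $n$ symbols with balanced filler.

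The decoder reverses this: it reads the tail bookkeeping region, and for each recorded pair (location, block-description) it re-inserts the deleted length-$k$ block at the stored position, undoing the replacements in last-in-first-out order, finally stripping the filler to recover $\bx\in\Sigma_4^{n-1}$. Both directions are clearly linear time: each symbol of $\bx$ is touched a bounded number of times, and the running window sum is maintained incrementally in $O(1)$ per shift.

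The main obstacle — and the step I would spend the most care on — is the \emph{termination and length accounting} argument: I must argue that the appended bookkeeping region, while itself forced to avoid bad windows, still has enough room to absorb all the correction data generated over the whole run, and that re-scanning after a replacement cannot cause an unbounded cascade of new bad windows. The clean way to handle this is a potential-function / amortization argument: define a potential counting (roughly) the number of bad windows plus a suitable multiple of the length of the "dirty" prefix, show each replacement step strictly decreases it, and show the total overhead appended is at most (number of replacement steps)$\times(\log_4 n + O(1))$, which is bounded by $n\cdot(\log_4 n + O(1))/k = o(n)$ when $k\ge 72\log n$; combined with the one symbol of slack between length $n-1$ and length $n$, this closes the argument. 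A secondary technical point is ensuring the boundary between the message portion and the appended region is unambiguously recognizable by the decoder (a marker, or a length prefix written in the safe alphabet), and that inserting this marker does not itself create a bad window straddling the boundary — handled by padding the junction with a short safe-alphabet buffer of length $O(k)$, which is again absorbed by the $o(n)$ budget. I would also note in passing that the constant $72$ (versus the $36$ in Lemma~25) is exactly what is needed so that after the earlier lemma's factor-of-$2$ loss in passing from $k$-restricted to $4k$-sum-balanced, the resulting code still satisfies the $\Omega(\log n)$ hypothesis of the size lemma.
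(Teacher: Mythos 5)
Your overall plan (sequence replacement, as in Wijngaarden--Immink and Elishco et al., which is exactly what the paper invokes without proof) is the right framework, but your accounting contains a genuine gap that breaks the argument as written. In the classical applications the removed phrase costs nothing to describe: an RLL-violating run is all zeros, and a repeated window is a copy of an earlier one, so a pointer of $O(\log n)$ symbols suffices and each replacement really does free $k-O(\log n)$ symbols. Here that is false. A bad window is an \emph{arbitrary} element of the set of length-$k$ quaternary words whose sum lies outside $(5k/4,7k/4)$, and by the same Hoeffding computation used in the paper this set has size roughly $4^{k}\cdot 2e^{-k/72}$ --- still exponentially large in $k$. Since your map must be injective, the bookkeeping for each replacement must contain an index of the deleted block, which costs about $k-(k/72)\log_4 e$ symbols, not "$O(1)$" as your phrase "a short description of how to regenerate the deleted block" suggests. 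The true net gain per replacement is therefore only the entropy deficit of bad windows, roughly $(k/72)\log_4 e$ symbols, minus the $\lceil\log_4 n\rceil+O(1)$ pointer/marker cost. This is where the hypothesis $k\ge 72\log n$ is actually consumed: it is precisely what makes the net gain positive, and your claims that "each removal frees $k$ symbols while consuming only about $\log_4 n+O(1)$" and that "any $k=\omega(\log n)$ suffices" do not follow from the argument you give. Your closing rationalization of the constant $72$ (as compensating a "factor-of-$2$ loss" toward Lemma~25, which in fact passes from $k$-restricted to $4k$-sum-balanced) is likewise not the role it plays; it comes from Hoeffding with deviation $k/4$, i.e.\ the bound $2e^{-k/72}$ on the fraction of bad windows, which drives both the counting above and the abundance of valid words.

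Two secondary points also need repair. First, your proposed "safe" tail alphabet $\{1,2\}$ does not by itself guarantee restricted-sum-balance: an all-$1$ stretch of length $k$ has sum $k<5k/4$ and an all-$2$ stretch has sum $2k>7k/4$; you need a genuinely balanced filler (e.g.\ an alternating $1,2$ pattern with sum $3k/2$ per window), and the same care is needed for windows straddling the junctions created by deletions and by the marker. Second, the termination/cascade argument you defer to a potential function is only sound once the per-step length decrease is shown to be strictly positive, which again requires the corrected entropy-deficit accounting above; with it, termination in at most $O\bigl(n/(k/72)\bigr)$ steps is immediate and the decoder's last-in-first-out reversal works as you describe. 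The paper itself gives no proof of this theorem (it only cites the sequence-replacement literature), so the comparison is with the standard argument; against that standard, your proposal is missing its central quantitative ingredient.
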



\vspace{2mm}

\noindent{\bf Appending the syndromes}. Set $k=72\log n$ in Theorem~\ref{thm:rsb-encoder}.
Suppose that the message  $\bx \in \Sigma_4^{n-1}$ is encoded to $\by=(y_1,y_2,\ldots,y_n) \in {\rm Bal}_k^{*}(n)$,
or, $\by\triangleq \enc_{\rsb{k}}(\bx)$.

Let $k'=4k$ and $P=5k'$. The encoder computes the following four sequences over $\Sigma_4$.
\begin{itemize}
\item Set the {\em VT syndrome} $a\triangleq{\rm Syn}(\by) \ppmod{4n + 1}$. Let $\bR_1$ be the quaternary representations of $a$ of length $\ceil{\log_4(4n+1)}$. 
\item Set the {\em SVT syndrome} $b\triangleq {\rm Syn}(\alpha(\by))\ppmod{P}$. Let $\bR_2$ be the quaternary representations of $b$ of length $\ceil{\log_4 P}$.
\item Set the {\em parity check} of $\alpha(\by)$, i.e. $c\triangleq \sum_{i=1}^n \alpha(\by)_i \ppmod{2}$. Let $\bR_3 \equiv c$. 
\item Set the {\em check} $d\triangleq \sum_{i=1}^n y_i  \ppmod{7}$. Let $\bR_4$ be the quaternary representations of $d$ of length 2. 
\end{itemize}

 Let $r$ be the smallest symbol in $\Sigma_4\setminus\{y_n\}$ and let the marker $\bM=(r,r)$. 
 We append $\bM,\bR_1,\bR_2,\bR_3,\bR_4$ to $\by$ and output the codeword $\by\bM\bR_1\bR_2\bR_3\bR_4$ of length $N=(n+\ceil{\log_4(4n+1)}+\ceil{\log_4 P}+5)$.

Finally, we summarize our encoding procedure and demonstrate its correctness.

\begin{theorem}\label{thm:encoderB}
Given $n$, $a$, $b$, $c$, $d$, set $k=72\log n$, $k'=4k$, $P=5k'$ and $N=(n+\ceil{\log_4(4n+1)}+\ceil{\log_4 P}+5)$.
We define $\enc_{\mathbb{E}}^{B}:\Sigma_4^{n-1} \to \Sigma_4^N$ as follows.
Set $\by=\enc_{\rsb{k}}(\bx)$ as in Theorem~\ref{thm:rsb-encoder} and let $\bM, \bR_1, \ldots, \bR_4$ be as defined above.  
If we set $\enc_{\mathbb{E}}^{B}(\bx)\triangleq \by\bM\bR_1\bR_2\bR_3\bR_4$,
then the code defined by $\enc_{\mathbb{E}}^{B}$ corrects a single edit in linear time.
Therefore, the redundancy of $\enc_{\mathbb{E}}^{B}$ is $\log n+ O(\log \log n)$ bits.
\end{theorem}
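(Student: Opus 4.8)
The plan is to separate the (routine) redundancy bookkeeping from the (substantive) construction of a linear-time decoder, and to obtain the single-edit-correcting property as a byproduct of the decoder's correctness. For the redundancy I would simply note that a message carries $2(n-1)$ bits while the codeword $\by\bM\bR_1\bR_2\bR_3\bR_4$ lies in $\Sigma_4^N$, so the redundancy equals $2(N-n+1)=2\bigl(\ceil{\log_4(4n+1)}+\ceil{\log_4 P}+6\bigr)$ bits; since $P=5k'=20k=1440\log n$, the term $\ceil{\log_4 P}$ is $O(\log\log n)$, giving redundancy $\log n+O(\log\log n)$. The encoder is linear-time because $\enc_{\rsb{k}}$ is (Theorem~\ref{thm:rsb-encoder}), the four syndromes are obtained in one pass over $\by$, and the marker symbol $r$ is read off from $y_n$.

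For the decoder, the guiding observation is that $\enc_{\mathbb{E}}^{B}$ is a ``systematic'' version of Construction~\ref{constr:orderoptimal}, so I would re-run the decoding procedures of Lemmas~\ref{decodesub} and~\ref{decodedel} with their scale parameter $k$ replaced by $k'=4k$. This re-parametrization is legitimate because the restricted-sum-balance lemma gives $\by=\enc_{\rsb{k}}(\bx)\in{\rm Bal}^*_k(n)\subseteq{\rm Bal}_{k'}(n)$: by Lemma~\ref{maxrun} the runs of $\by$ then have length at most $k'-1$ and those of $\alpha(\by)$ length less than $2(k'-1)$, and the counting of Lemma~\ref{distance} shows that a single deletion inside $\by$ yields a candidate set of deleted coordinates of diameter at most $k'$; pushing these candidates through $\alpha$ as in Lemma~\ref{decodedel} produces a set contained in an interval of length at most $2(k'-1)+k'+2(k'-1)<5k'=P$, which is exactly the window width for which ${\rm SVT}_{b,c,P}(n-1)$ is guaranteed to correct (Theorem~\ref{thm:svt}). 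With the deleted symbol recovered from the mod-$7$ check, this restores $\alpha(\by)$ and hence $\by$; a substitution is handled exactly as in Lemma~\ref{decodesub}, since it always changes both the mod-$7$ sum and the VT syndrome mod $4n+1$ and is therefore located and corrected.

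Next I would assemble the full decoder. Given the received word $\bw$, its length determines the error type. Using the marker $\bM=(r,r)$ --- placed at nominal position $n+1$ with $r\neq y_n$ (so the $\by$--$\bM$ transition is always a genuine change) and followed by a block of fixed length $L\triangleq\ceil{\log_4(4n+1)}+\ceil{\log_4 P}+3$ --- the decoder recovers the boundary between the $\by$-block and the trailing syndrome block up to the one-coordinate shift a single edit can induce. This leaves a bounded number of candidate parsings, of two types: either the edit lies in the $\by$-block, whence the syndrome block is intact and we correct the $\by$-block by the Lemmas~\ref{decodesub}/\ref{decodedel} routine with the read-off $a,b,c,d$; or the edit lies in the trailing $L+2$ coordinates, whence $\by$ is literally the length-$n$ prefix of $\bw$ and the output is $\dec_{\rsb{k}}(\by)$. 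Each candidate is then verified against all defining constraints (the VT syndrome mod $4n+1$, the SVT constraint on $\alpha$, the parity, the mod-$7$ check, and membership in ${\rm Im}(\enc_{\rsb{k}})$), and the decoder outputs the message of the unique candidate that survives.

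The crux is precisely this uniqueness, which is equivalent to the code being single-edit-correcting, and it is the step I expect to be hardest. I would prove it directly: if $\bc_1\neq\bc_2$ are codewords and $\by\in\Bedit(\bc_1)\cap\Bedit(\bc_2)$, split on whether the edit carrying $\bc_i$ to $\by$ falls in the $\by$-block or the appendix of $\bc_i$. When both fall in the $\by$-blocks, a contradiction follows from Lemmas~\ref{decodesub}--\ref{decodedel} with parameter $k'$. The delicate case is when at least one edit falls in the appendix or straddles the marker: there the two $\by$-blocks differ in length by at most one and agree outside $O(\log n)$ coordinates, and one must show that the marker together with the bounded-run structure of $\by$ (Lemma~\ref{maxrun}) forces the $\by$-blocks --- and then, via their syndromes, the whole codewords --- to coincide. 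This marker/boundary bookkeeping, i.e.\ ruling out that an edit can simultaneously masquerade as a $\by$-block edit of $\bc_1$ and an appendix edit of $\bc_2$, is the main obstacle; once it is in place, the $O(n)$ running time of the decoder is immediate, as every step is a single scan or a call to a linear-time subroutine ($\dec_{\rsb{k}}$, $\dec^{SVT}_{c,d,P}$).
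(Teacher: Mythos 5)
Your high-level architecture is the same as the paper's (marker-based parsing, the containment ${\rm Bal}^*_k(n)\subseteq{\rm Bal}_{4k}(n)$ so that Lemmas~\ref{decodesub} and~\ref{decodedel} can be rerun with $k'=4k$ and $P=5k'$, and the redundancy bookkeeping, all of which you handle correctly), but there is a genuine gap at exactly the point you label ``the main obstacle'': you never prove that the decoder can decide whether the single edit lies in the data block $\by$ or in the appendix $\bM\bR_1\bR_2\bR_3\bR_4$, and you replace this with an unproven uniqueness-of-the-surviving-candidate claim. That claim is essentially equivalent to the theorem itself, so deferring it leaves the proof incomplete; it is also precisely the step the marker was designed to make easy, and the paper's proof consists of spelling out the resulting \emph{deterministic} tests rather than running a list-and-verify procedure.

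Concretely, because $\bM=(r,r)$ with $r\neq y_n$: if the received word has length $N-1$, then positions $n$ and $n+1$ are equal (both equal to $r$) if and only if the deletion occurred among the first $n$ symbols, since otherwise the length-$n$ prefix is untouched and position $n$ carries $y_n\neq r$ while position $n+1$ carries $r$; for length $N+1$ one compares positions $n+1$ and $n+2$ in the same way; for length $N$ one first checks whether the two marker symbols are identical (a substituted marker symbol destroys their equality, so equality certifies an intact marker), and then compares the recomputed mod-$7$ check and VT syndrome of the length-$n$ prefix against the received $\bR'_4$ and $\bR'_1$ in that order. A substitution inside $\by$ necessarily falsifies \emph{both} comparisons (the symbol change is nonzero mod $7$, and $j\cdot(y_j-x_j)\not\equiv 0\pmod{4n+1}$ for $1\le j\le n$), whereas a single substitution inside the appendix can corrupt at most one of $\bR_1,\bR_4$ and hence falsifies at most one comparison. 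In every branch the decoder therefore knows either that $\by$ is literally the length-$n$ prefix, or that the appendix is intact so that Lemmas~\ref{decodesub} and~\ref{decodedel} (with parameter $k'$) apply with trusted syndromes. This short case analysis is what your proposal is missing; without it, the claim that the unique surviving candidate is the transmitted codeword --- i.e., single-edit correction --- is not established.
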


\begin{proof}
It remains to provide the corresponding decoder and show that it corrects a single edit in linear time. 
Suppose that we receive $\bz'$. The idea is to recover $\by$ as the first $n$ symbols in $\bz'$ and then use the decoder in Theorem~\ref{thm:rsb-encoder} to recover the information sequence $\bx$. First, the decoder decides whether a deletion, insertion or substitution has occurred. Note that this information can be recovered by simply observing the length of the received word. The decoding operates as follows.

\begin{enumerate}[(i)]
\item {\bf If the length of $\bz'$ is $N$}, we conclude that at most a single substitution error has occurred. Let $\bz'=(z'_1,\ldots,z'_N)$. The decoder sets $\by'$ as the first $n$ symbols of $\bz'$, the marker $\bM=(z'_{n+1}, z'_{n+2})$, $\bR'_1$ as the next $\ceil{\log_4(4n+1)}$ symbols, $\bR'_2$ as the next $\ceil{\log_4 P}$ symbols, $\bR'_3$ as the next symbol and the last two symbols as $\bR'_4$. The decoder proceeds as follows.

\begin{itemize}
\item {\bf Checking the marker}. 
If $z'_{n+1}, z'_{n+2}$ are not identical, then the substitution occurred here. The decoder conclude that $\by \equiv \by'$. On the other hand, if $z'_{n+1}, z'_{n+2}$ are identical, the decoder concludes that there is no error in the marker. The decoder computes $a'\triangleq{\rm Syn}(\by') \ppmod{4n + 1}$, $b'\triangleq {\rm Syn}(\alpha(\by'))\ppmod{P}$, $c'\triangleq \sum_{i=1}^n \alpha(\by')_i \ppmod{2}$ and $d'\triangleq \sum_{i=1}^n y'_i  \ppmod{7}$, and proceeds to the next step.

\item {\bf Comparing the check}.
 If $\bR'_4$ corresponds the quaternary representation of $d'$, the decoder concludes that $\by \equiv \by'$. Otherwise, it proceeds to the next step. 

\item {\bf Comparing the VT syndrome}.
If $\bR'_1$ corresponds to the quaternary representations of $a'$, the decoder concludes that $\by \equiv \by'$. Otherwise, there must be a substitution in the $\by'$. Hence, there is no error in $\bR'_1, \bR'_2, \bR'_3,$ and $\bR'_4$. The decoder follows the steps in Lemma~\ref{decodesub} to recover $\by$ from $\by'$. 
\end{itemize}

\item {\bf If the length of $\bz'$ is $(N-1)$}, we conclude that a single deletion has occurred. Suppose $\bz'=(z'_1,\ldots,z'_{N-1})$. The decoder proceeds as follows.
\begin{itemize}
\item {\bf Localizing the deletion.} If $z'_n$ and $z'_{n+1}$ are different, the decoder concludes that there is no deletion in $\by$ and sets $\by$ as the first $n$ symbols of $\bz'$. Otherwise, there is a deletion in the first $n$ symbols. Hence, there is no error in $\bR_1, \bR_2, \bR_3,$ and $\bR_4$. 
\item {\bf Recovering $\by$.} The decoder sets $\by'$ as the first $(n-1)$ symbols in $\bz'$ and follows the steps in Lemma~\ref{decodedel} to recover $\by$ from $\by'$.
\end{itemize}

\item {\bf If the length of $\bz'$ is $(N+1)$}, 
we conclude that a single insertion has occurred. Suppose $\bz'=(z'_1,\ldots,z'_{N+1})$. The decoder proceeds as follows.
\begin{itemize}
\item {\bf Localizing the insertion.} If $z'_{n+1}$ and $z'_{n+2}$ are identical, the decoder sets $\by$ as the first $n$ symbols of $\bz'$. On the other hand, if $z'_{n+1}$ and $z'_{n+2}$ are different, the decoder sets $\by'$ as the first $(n+1)$ symbols of $\bz'$ and there is no error in $\bR_1, \bR_2, \bR_3,$ and $\bR_4$. 
\item {\bf Recovering $\by$.} The decoder follows the steps in Lemma~\ref{decodedel} to recover $\by$ from $\by'$.
\end{itemize}
\end{enumerate}
It is easy to see that all the decoding steps run in $O(n)$ time.
\end{proof}

%

\subsection{Edit Error in DNA Storage Channel}\label{sec:ntedit}

There are recent works that characterize the error probabilities by analyzing data from certain experiments 
\cite{exp1, Organick:2018}. 
Specifically, Heckel \et{} \cite{exp1} studied the substitution errors and computed conditional error probabilities for mistaking a certain nucleotide for another.
 They also compared their data with experiments from other research groups and 
 observed that the probabilities of mistaking $\tT$ for a $\tC$ ($\tT\to \tC$) and 
 $\tA$ for a $\tG$ ($\tA\to \tG$) are significantly higher than substitution probabilities.

Motivated by the study, we consider an alternative error model where substitution errors only occur between certain nucleotides. We refer this error as a {\em nucleotide edit}. The edits that we define earlier is referred as {\em general edit}. In the nucleotide-edit model, besides a single deletion, insertion, a substitution happens only when
\begin{equation*}
\tA \rightarrow \{\tC, \tG\} , \tT \rightarrow \{\tC, \tG\}, \tC \rightarrow \{\tA, \tT\}, \text{ and } \tG \rightarrow \{\tA, \tT\}.  
\end{equation*}

Now, recall that the one-to-one correspondence between $\D=\{\tA,\tT,\tC,\tG\}$ and two-bit sequences is:
\[ \tA \leftrightarrow 00,\quad \tT \leftrightarrow 01,\quad\tC \leftrightarrow 10,\quad\tG \leftrightarrow 11.\]

Then a nucelotide-edit (substitution) in a symbol occurs if and only if the corresponding first bit is flipped.

\begin{example}\label{editexample2}
Suppose that $\bsg=\tA\tC\tA\tG\tT\tG$. Suppose a substitution error occurs at the third symbol, changing $\tA$ to $\tC$, and we received $\bsg_1=\tA\tC{\color{red}{\tC}}\tG\tT\tG$.  
On the other hand, suppose a substitution error also occurs at the third symbol, changing $\tA$ to $\tG$, and we received $\bsg_2=\tA\tC{\color{red}{\tG}}\tG\tT\tG$. We then see that there is exactly one substitution in $\bU_\bsg$ which is also at the third symbol.
\begin{center}
\begin{tabular}{ll}
\begin{tabular}{|c|| c | c| c| c| c|c|}
 \hline
 $\bsg$ & $\tA$ & $\tC$ & $\tA$  & $\tG$ & $\tT$ & $\tG$ \\ \hline
 $\bU_{\bsg}$ & 0 & 1 & 0 & 1 & 0 & 1 \\\hline
 $\bL_{\bsg}$ & 0 & 0 & 0 & 1 & 1 & 1\\\hline
 \end{tabular}

\begin{tabular}{|c|| c | c| c| c| c|c|}
 \hline
 $\bsg_1$ & $\tA$ & $\tC$ & ${\color{red}{\tC}}$  & $\tG$ & $\tT$ & $\tG$ \\ \hline
 $\bU_{\bsg}$ & 0 & 1 & {\color{red}{1}} & 1 & 0 & 1 \\\hline
 $\bL_{\bsg}$ & 0 & 0 & 0 & 1 & 1 & 1\\\hline
 \end{tabular}

\begin{tabular}{|c|| c | c| c| c| c|c|}
 \hline
 $\bsg_2$ & $\tA$ & $\tC$ & ${\color{red}{\tG}}$  & $\tG$ & $\tT$ & $\tG$ \\ \hline
 $\bU_{\bsg}$ & 0 & 1 & {\color{red}{1}} & 1 & 0 & 1 \\\hline
 $\bL_{\bsg}$ & 0 & 0 & {\color{red}{1}} & 1 & 1 & 1\\\hline
 \end{tabular}
\end{tabular}
\end{center}
\end{example}

We now constructed order-optimal nucleotide-edit-correcting codes $\C^{\rm nt}(n;a,b,c)$ for DNA storage as follows.
\begin{construction}\label{nucleotide construction} For given $n,r,P>0, P\geq r+1$, $a\in \mathbb{Z}_{n}, b \in \mathbb{Z}_{P}, c \in \mathbb{Z}_{2}$, let $\C^{\rm nt}_{a,b,c}(n,r,P)$ be the set of all DNA strands $\bsg$ of length $n$ satisfying the following constraints.

\begin{itemize}
\item {\bf Upper-array constraints}.
\begin{itemize}
\item The upper-array $\bU_\bsg$ is a codeword in ${\rm L}_a(n)$. 
\item The longest run of $0$'s or $1$'s in $\bU_\bsg$ is at most $r$. 
\end{itemize}
\item {\bf Lower-array constraint.} The lower array $\bL_\bsg$  is a codeword in ${\rm SVT}_{b,c,P}(n)$.
\end{itemize}
\end{construction}
\begin{theorem}
The code $\C_{a,b,c}^{\rm nt}(n,r,P)$ corrects a single nucleotide edit in linear time.
\end{theorem}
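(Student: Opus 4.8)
The plan is to exhibit a linear-time decoder $\dec$ and verify it on each possible error type. Given a received word $\bsg'\in\D^{n*}$, the decoder first inspects its length: length $n$ indicates at most a single nucleotide substitution, length $n-1$ a single deletion, and length $n+1$ a single insertion. In all three cases it decomposes $\bsg'$ into its upper and lower binary sequences $\bU_{\bsg'},\bL_{\bsg'}$ and recovers $\bU_\bsg$ first. This is possible because $\bU_\bsg\in{\rm L}_a(n)$ and, by Theorem~\ref{thm:lev}, $\dec^L_a$ corrects a single edit, while a single nucleotide edit disturbs $\bU_\bsg$ by at most a single edit: a deletion or insertion at the very same position by Remark~\ref{rem:observation}, or, in the substitution case, a single flipped bit (a nucleotide substitution always flips the first bit of the affected symbol, and may also flip its second bit). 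Hence $\dec^L_a(\bU_{\bsg'})=\bU_\bsg$.

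\emph{Substitution case (length $n$).} Once $\bU_\bsg$ is in hand, the unique coordinate $j$ at which $\bU_\bsg$ and $\bU_{\bsg'}$ differ is the error position (if they agree everywhere, $\bsg'=\bsg$ and we are done). The substitution may also have changed $\bL_\bsg$, but if so only at this same coordinate $j$, so the remaining task is merely to decide whether it did. I would use the parity constraint $\sum_i(\bL_\bsg)_i\equiv c\ppmod{2}$ built into ${\rm SVT}_{b,c,P}(n)$: if $\sum_i(\bL_{\bsg'})_i\equiv c\ppmod{2}$ then $\bL_\bsg=\bL_{\bsg'}$, and otherwise $\bL_\bsg$ is $\bL_{\bsg'}$ with coordinate $j$ flipped. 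Then $\bsg=\Psi^{-1}(\bU_\bsg\,||\,\bL_\bsg)$.

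\emph{Indel case (length $n-1$ or $n+1$).} By Remark~\ref{rem:observation}, the deletion (respectively insertion) occurred at one common position $i$ in both $\bU_\bsg$ and $\bL_\bsg$. Having recovered $\bU_\bsg$, I would form the set $J$ of all candidate indel positions in $\bU_\bsg$ that are compatible with $\bU_{\bsg'}$; this set is contained in a single run of $\bU_\bsg$, so $|J|\le r<P$, and $i\in J$. Thus $i$ lies inside some window $\bbJ$ of $P$ consecutive coordinates, and since $\bL_\bsg\in{\rm SVT}_{b,c,P}(n)$, Theorem~\ref{thm:svt} yields $\bL_\bsg=\dec^{SVT}_{b,c,P}(\bL_{\bsg'},\bbJ)$; then $\bsg=\Psi^{-1}(\bU_\bsg\,||\,\bL_\bsg)$.

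\emph{Complexity and the main obstacle.} Every ingredient — the length test, $\dec^L_a$, scanning the relevant run of $\bU_\bsg$ to build $J$, the parity check, $\dec^{SVT}_{b,c,P}$, and reassembling via $\Psi^{-1}$ — runs in $O(n)$ time, so $\dec$ is linear-time. The step I expect to need the most care is the substitution case: because a nucleotide substitution can corrupt both $\bU_\bsg$ and $\bL_\bsg$, one must first localize the error through the single-edit decoder on $\bU_\bsg$ (this is exactly where the fact that the first bit is always flipped is essential) and only then use the scalar parity of $\bL_\bsg$ to repair the lower sequence at the now-known coordinate. The indel case is comparatively routine, combining Remark~\ref{rem:observation}, the run-length bound $r$ on $\bU_\bsg$, and the SVT localization guarantee under the hypothesis $P\ge r+1$.
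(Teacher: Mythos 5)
Your proposal is correct and follows essentially the same route as the paper's proof: recover $\bU_\bsg$ with the single-edit decoder for ${\rm L}_a(n)$, then use the run-length bound $r$ (with $P\ge r+1$) to localize an indel, or the fact that a nucleotide substitution always flips the upper bit to localize a substitution, and finally repair $\bL_\bsg$ via the ${\rm SVT}_{b,c,P}(n)$ structure. Your explicit use of the parity check of the SVT code to decide whether the lower bit was also flipped is just a concrete instance of the paper's ``make use of the syndrome'' step, so no substantive difference or gap remains.
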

\begin{proof}
Suppose $\bsg'$ is the received word.
According to Proposition~\ref{prop:obs}, $\bsg'\in \Bedit(\bsg)$ implies that $\bU_{\bsg'}\in \Bedit(\bU_\bsg)$ and $\bL_{\bsg'}\in \Bedit(\bL_\bsg)$. Since $\bU_\bsg \in{\rm L}_a(n)$, where ${\rm L}_a(n)$ can correct a single edit, we can recover $\bU_\bsg$ in linear time. 
For $\bL_\bsg$, we have two cases.
\begin{itemize}
\item If $\bsg'$ is of length $n-1$ (or $n+1$), then a deletion or insertion has occurred.
Now, we are able to identify the location of deletion (or insertion) in $\bU_\bsg$, which belongs to a run of length at most $r$. Hence, we can locate the error in $\bL_\bsg$ within $(r+1)$ positions. 
Since $\bL_\bsg \in {\rm SVT}_{b,c,P}(n)$ with $P\geq r+1$, we are able to recover uniquely $\bL_\bsg$.
\item If $\bsg'$ is of length $n$, then a nucleotide substitution has occurred. 
Therefore, it is necessary that a bit flip occur in the $\bU_\bsg$ and hence, we can locate the error.
To correct the corresponding position in $\bL_\bsg$, we simply make use of the syndrome in ${\rm SVT}_{b,c,P}(n)$.\qedhere
\end{itemize}
\end{proof}

Next, we present an efficient encoder that maps messages into $\C_{a,b,c}^{\rm nt}(n,r,P)$.

\noindent{\bf Nucelotide-Edit-Encoder}. Given $n,r=2\ceil{\log n}+4,P= r+1, a \in \mathbb{Z}_{n}, b \in \mathbb{Z}_{P}, c\in \mathbb{Z}_{2}$, set $t\triangleq\ceil{\log P}+1$ and $m\triangleq 2n-\ceil{\log n}-t-2$.

{\sc Input}: $\bx\in \{0,1\}^m$\\
{\sc Output}: $\bsg \triangleq \enc_{\mathbb{E}}^{{\rm nt}}(\bx)\in \C_{a,b,c}(n,r,P)$\\[-2mm]
\begin{enumerate}[(I)]
\item Set $\bx_1$ be the first $(n-\ceil{\log n}-2)$ bits in $\bx$ and $\bx_2$ be the last $(n-t)$ bits in $\bx$.
\item Let $\by_1'=\enc_{RLL}(\bx_1)$ and use Encoder $\mathbb{L}$ as described in Subsection~\ref{template} to encode $\by_1= \enc_{\mathbb{L}}(\by_1') \in {\rm L}_a(n)$.
\item Use SVT-Encoder $\enc_{SVT}$ to encode $\by_2=  \enc_{SVT}(\bx_2)\in {\rm SVT}_{b,c,P}(n)$. 
\item Finally, set $\by=\by_1\by_2$ and output $\bsg\triangleq \Psi^{-1}(\by)$.
\end{enumerate}

\begin{theorem} The Nucelotide-Edit-Encoder is correct. In other words, $\enc_{\mathbb{E}}^{{\rm nt}}(\bx)\in \C_{a,b,c}(n,r,P)$ for all $\bx\in\{0,1\}^m$. The redundancy of our encoder is $\log n+\log \log n + O(1)$.
\end{theorem}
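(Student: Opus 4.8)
The plan is to verify that $\bsg\triangleq\enc_{\mathbb{E}}^{{\rm nt}}(\bx)$ satisfies the three defining constraints of Construction~\ref{nucleotide construction}, so that $\bsg\in\C^{\rm nt}_{a,b,c}(n,r,P)$, and then to read off the redundancy from the input length. Read step (IV) as producing the DNA word $\bsg$ with upper sequence $\bU_\bsg=\by_1$ and lower sequence $\bL_\bsg=\by_2$ (that is, $\by=\by_1\,||\,\by_2$). First I would check the length bookkeeping that makes every sub-encoder applicable: $|\bx_1|=n-\ceil{\log n}-2$, so $\by_1'=\enc_{RLL}(\bx_1)$ has length $n-\ceil{\log n}-1$, which is exactly the input length Encoder $\mathbb{L}$ needs in order to output a word of length $n$; and $|\bx_2|=n-t=n-\ceil{\log P}-1$ is exactly the input length of $\enc_{SVT}$ for output length $n$. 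Granting this, $\bU_\bsg=\by_1\in{\rm L}_a(n)$ and $\bL_\bsg=\by_2\in{\rm SVT}_{b,c,P}(n)$ are immediate from the output guarantees of Encoder $\mathbb{L}$ and the SVT-Encoder, which disposes of two of the three constraints.

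The remaining constraint---that the longest run of $0$'s or $1$'s in $\bU_\bsg=\by_1$ be at most $r$---is the main obstacle. By Theorem~\ref{encoderRLL}, $\by_1'$ has maximum run $R\le\ceil{\log n}+3$, and Encoder $\mathbb{L}$ alters $\by_1'$ only on the positions of its parity set $S=\{2^{j-1}:j\in[t]\}\cup\{n\}$. For a maximal run of $\by_1$ on an interval $[\alpha,\beta]$, deleting the indices in $[\alpha,\beta]\cap S$ breaks $[\alpha,\beta]$ into arcs, each untouched by Encoder $\mathbb{L}$ and hence a sub-run of $\by_1'$ of length at most $R$; moreover, if two powers of two $2^{j}<2^{j+1}$ both lie in $[\alpha,\beta]$ then the arc strictly between them has length $2^{j}-1\le R$, so every power of two in $[\alpha,\beta]$ other than the largest is at most $R+1$. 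Chasing these two facts bounds $\beta-\alpha+1$ by $O(\log n)$ (concretely by about $3R$). Note that the value $2\ceil{\log n}+4$ written into the encoder is what one gets from the estimate ``a run can grow by at most $|S|$''; that estimate is over-optimistic in general---the parity positions $1,2,4,\ldots$ cluster near the start of the word---so a rigorous argument must use their geometric spacing as above, and yields $r=\Theta(\log n)$ with a somewhat larger constant. With any such $r$ (and $P=r+1$) one has $\bsg\in\C^{\rm nt}_{a,b,c}(n,r,P)$, whereupon the preceding theorem gives the linear-time single-nucleotide-edit decoder; the message is then recovered by reading off $\by_1=\bU_\bsg$, $\by_2=\bL_\bsg$ and inverting $\enc_{\mathbb{L}}$, $\enc_{RLL}$, $\enc_{SVT}$, each in linear time.

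Finally, the redundancy: the output is a word over $\D$ of length $n$, i.e.\ $2n$ bits, while the message has $m=2n-\ceil{\log n}-t-2$ bits, so the redundancy is $2n-m=\ceil{\log n}+t+2=\ceil{\log n}+\ceil{\log P}+3$. Since $P=r+1=\Theta(\log n)$, we have $\ceil{\log P}=\log\log n+O(1)$, giving redundancy $\log n+\log\log n+O(1)$ as claimed. This bound is insensitive to the precise $\Theta(\log n)$ value of $r$, so the imprecision in the run-length estimate does not affect the statement of the theorem.
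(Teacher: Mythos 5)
Your decomposition (check the lengths, verify the ${\rm L}_a(n)$ constraint for $\bU_\bsg=\by_1$, the ${\rm SVT}_{b,c,P}(n)$ constraint for $\bL_\bsg=\by_2$, and the run-length constraint, then count redundancy) is exactly the paper's, and your reading of step (IV) as the interleaving $\by_1||\by_2$ is the intended one. The genuine problem is your treatment of the run-length constraint. You claim that the encoder's stated value $r=2\ceil{\log n}+4$ rests on an ``over-optimistic'' estimate and that a rigorous argument forces a larger constant (roughly $3R$), and you then conclude membership in $\C^{\rm nt}_{a,b,c}(n,r,P)$ only for such a larger $r$. Since the theorem is about the encoder as defined---with $r=2\ceil{\log n}+4$ and $P=r+1$ hard-wired into both the encoder and the code $\C_{a,b,c}(n,r,P)$---your argument proves a statement about a modified encoder, not the statement at hand.

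Moreover, your objection is unfounded: the simple estimate is rigorous, and it is precisely the paper's proof. Let $[\alpha,\beta]$ be the positions of a run of $\by_1=\enc_{\mathbb{L}}(\by_1')$, and recall $\by_1|_I=\by_1'$ with $S=[n]\setminus I$, $|S|=\ceil{\log n}+1$. The positions in $[\alpha,\beta]\cap I$ carry bits of $\by_1'$ that sit in consecutive coordinates of $\by_1'$ (any element of $I$ lying between two elements of $[\alpha,\beta]\cap I$ is itself in the interval) and are all equal, since the whole interval is a run of $\by_1$; hence they lie inside a single run of $\by_1'$ and $|[\alpha,\beta]\cap I|\le \ceil{\log n}+3$ by Theorem~\ref{encoderRLL}. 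Trivially $|[\alpha,\beta]\cap S|\le |S|=\ceil{\log n}+1$, and the clustering of $1,2,4,\ldots$ near the start of the word is irrelevant to this count. Therefore $\beta-\alpha+1\le(\ceil{\log n}+3)+(\ceil{\log n}+1)=2\ceil{\log n}+4=r$, exactly as the paper computes. Your error is charging each arc between consecutive $S$-positions separately with up to $R$: the arcs' union is one contiguous constant block of $\by_1'$, so the total message contribution is at most $R$, not (number of arcs) times $R$. The remainder of your proposal---the applicability of $\enc_{RLL}$, $\enc_{\mathbb{L}}$, $\enc_{SVT}$, and the redundancy count $\ceil{\log n}+\ceil{\log P}+3=\log n+\log\log n+O(1)$---is correct and matches the paper.
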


\begin{proof}
Let $\bsg\triangleq \enc_{\mathbb{E}}^{{\rm nt}}(\bx)$. 
Based on our encoder, $\bU_\bsg=\by_1$ and $\by_1= \enc_{\mathbb{L}}(\by_1') \in {\rm L}_a(n)$. In addition, $\by_1'=\enc_{RLL}(\bx_1)$, which implies the longest run of $0$'s or $1$'s in $y_1'$ is at most 
$(\ceil{\log n}+3)$ according to Theorem~\ref{encoderRLL}. 
Therefore, the maximum run in $\bU_\bsg$ after $\enc_{\mathbb{L}}$ is at most $(\ceil{\log n}+3)+(\ceil{\log n}+1)=2\ceil{\log n}+4=r$. Since $P= r+1$, it satisfies the upper-array constraints in Construction~\ref{nucleotide construction}. On the other hand, based on our encoder, $\bL_\bsg=\by_2$ and $\by_2=  \enc_{SVT}(\bx_2)\in {\rm SVT}_{b,c,P}(n)$. It implies $\bL_\bsg$ also satisfies the lower-array constraint in Construction~\ref{nucleotide construction}. Therefore, $\enc_{\mathbb{E}}^{{\rm nt}}(\bx)\in \C_{a,b,c}(n,r,P)$ for all $\bx\in\{0,1\}^m$. The redundancy of our encoder is $\ceil{\log n}+2+\ceil{\log(2\ceil{\log n}+5)}+1=\log n+\log \log n + O(1)$.
\end{proof}

For completeness, we state the corresponding Nucleotide-Edit-Decoder, $\dec_{\mathbb{E}}^{{\rm nt}}(\bsg)$, for DNA codes that correct a single nucleotide edit.

\vspace{2mm}

\noindent{\bf Nucelotide-Edit-Decoder}. For given $n,a,b,c,$ $r=2\ceil{\log n}+4,P=r+1,t=\ceil{\log P}+1$, set $m= 2n-\ceil{\log n}-t-2$.

{\sc Input}: $\bsg\in \D^{n*}$\\
{\sc Output}: $\bx=\dec_{\mathbb{E}}^{{\rm nt}}(\bsg) \in\{0,1\}^m$\\[-2mm]
\begin{enumerate}[(I)]
\item Compute $\hat{\by_1}=\bU_\bsg$ and $\hat{\by_2}=\bL_\bsg$.
\item Compute $\by_1'\triangleq \dec^L_a(\hat{\by_1})$ to correct a single edit in $\hat{\by_1}$.
\item Compute $\by_2\triangleq \dec_{c,d,P}^{SVT}(\hat{\by_2})$ of length $(n-t)$.
\item Compute $\by_1= \dec_{RLL}(\by_1')$ of length $(n-\ceil{\log n}-2)$.
\item Output $\by_1\by_2$ of length $m= 2n-\ceil{\log n}-t-2$.
\end{enumerate}

\section{\GC-Balanced Encoder Correcting Single Edit}
\label{sec:balanced}

We modify the single-edit-correcting Encoder $\mathbb{L}$ in Section~\ref{sec:indel} to obtain a \GC-balanced single-edit-correcting encoder.
Our modification makes use of the celebrated Knuth's balancing technique \cite{knu1986}.

Knuth's balancing technique is a linear-time algorithm that maps a binary message $\bx$
to a balanced word $\bz$ of the same length by flipping the first $k$ bits of $\bx$.
The crucial observation demonstrated by Knuth is that such an index $k$ always exists and 
$k$ is commonly referred to as the {\em balancing index}.
Formally, we have the following theorem.

\begin{theorem}[Knuth \cite{knu1986}]
There exists a pair of linear-time algorithms $\enc^{K}:\{0,1\}^n\to \{0,1\}^n\times[n]$ and 
$\dec^{K}:\{0,1\}^n\times [n]\to \{0,1\}^n$ such that the following holds.
If $\enc^{K}(\bx)=(\bz,k)$, then $\bz$ is balanced and $\dec^K(\bz,k)=\bx$.
\end{theorem}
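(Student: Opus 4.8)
The plan is to follow Knuth's original discrete intermediate-value argument. For $\bx=(x_1,\dots,x_n)\in\{0,1\}^n$ and $k\in\{0,1,\dots,n\}$, let $\bx^{(k)}$ denote the word obtained from $\bx$ by complementing its first $k$ bits (so $\bx^{(0)}=\bx$), and set $w(k)\triangleq\weight(\bx^{(k)})$, the number of ones in $\bx^{(k)}$. First I would record three elementary facts: $w(0)=\weight(\bx)$; $w(n)=n-\weight(\bx)$; and $|w(k)-w(k-1)|=1$ for every $k\in[n]$, since passing from $\bx^{(k-1)}$ to $\bx^{(k)}$ complements exactly the single bit in position $k$.

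Next, assuming $n$ is even (which is necessary for a balanced word of length $n$ to exist at all), I would invoke the discrete intermediate value theorem: the integer $n/2$ lies between $w(0)$ and $w(n)=n-w(0)$, and since consecutive values of $w$ differ by exactly $1$, there must exist an index $k$ with $w(k)=n/2$. Taking $k$ to be the smallest such index, one checks $k\ge 1$: if $\weight(\bx)=n/2$ then $w(n)=n/2$ so $k=n$ is valid, while if $\weight(\bx)\ne n/2$ then $w(0)\ne n/2$ and the first hit occurs at some $k\ge 1$; in all cases $k\in[n]$. Define $\enc^{K}(\bx)\triangleq(\bx^{(k)},k)$ and $\bz\triangleq\bx^{(k)}$, so by construction $\bz$ is balanced.

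For the algorithmic claims: the encoder computes $w(0),w(1),w(2),\dots$ incrementally in a single left-to-right pass, using $w(k)=w(k-1)+1$ when $x_k=0$ and $w(k)=w(k-1)-1$ when $x_k=1$, stopping at the first $k$ with $w(k)=n/2$ and then emitting $\bx^{(k)}$; this runs in $O(n)$ time. The decoder, on input $(\bz,k)$, complements the first $k$ bits of $\bz$ and outputs the result; since complementing a fixed prefix is an involution, $(\bx^{(k)})^{(k)}=\bx$, hence $\dec^{K}(\enc^{K}(\bx))=\bx$, again in a single $O(n)$ pass.

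There is essentially no hard step here; the only points requiring care are the implicit evenness of $n$ and the verification that the balancing index may be confined to $[n]=[1,n]$ rather than to $\{0,1,\dots,n\}$, both of which are handled in the argument above. (The separate question of representing the index $k$ itself with few additional bits — which is what the subsequent \GC-balanced construction actually needs — is orthogonal to the present statement and will be treated afterwards.)
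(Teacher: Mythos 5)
Your proposal is correct: the paper states this result as a citation to Knuth (1986) and offers no proof of its own, and your prefix-complementation argument with the discrete intermediate-value step (weight changes by exactly one per flipped bit, so $n/2$ is attained for some $k\in[n]$, with the already-balanced case handled by $k=n$) is precisely the classical argument behind the cited theorem, together with the obvious $O(n)$ encoder/decoder. Your closing remark is also consistent with the paper: representing the index $k$ compactly is exactly the part the paper sidesteps by storing $k$ in the lower sequence $\bL_\bsg$ instead of appending a balanced suffix.
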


To represent the balancing index, Knuth appends $\bz$ with a short balanced suffix of length $\ceil{\log n}$ and 
so, a lookup table of size $n$ is required. 
In constrast, since we only require the upper sequence $\bU_\bsg$ to be balanced,
we simply store the balancing index $k$ in the lower sequence $\bL_\bsg$.
Consequently, we do not need a look up table for the balancing indices.

\vspace{1mm}

\noindent{\bf \GC-Balanced Encoder}. Given $n$, set $t=\ceil{\log n}$ and $m=2n-3t-2$.

{\sc Input}: $\bx\in \{0,1\}^n$, $\by\in \{0,1\}^{n-3t-2}$ and so, $\bx\by\in\{0,1\}^m$\\
{\sc Output}: $\bsg = \enc_{\tG\tC}(\bx\by)$\\[-3mm] 

\begin{enumerate}[(I)]
\item Apply Knuth's balancing technique to obtain $(\bz,k)\triangleq \enc^K(\bx)$.
\item Compute $d \triangleq {\rm Syn}(\bz) \ppmod{2n}$ and let $\bd$ be the binary representation of $d$ of length $t+1$.
On the other hand, let $\bk$ be the binary representation of the balancing index $k$ of length $t$.
\item Next, we append $\bd$ and $\bk$ to $\by$ to obtain a binary word of length $n-t-1$.
Using Encoder $\mathbb{L}$, we compute $\bc\triangleq\enc_{\mathbb{L}}(\by\bd\bk)$.
\item Finally, we set $\bsg\triangleq \Psi^{-1}(\bz||\bc)$.
\end{enumerate}

\begin{example}Consider $n=16$ and $a=0$. So,  $t=4$ and $m=2n-3t-2=18$.
Let $\bx= 1111 1111 0000 1111$ and $\by = 01$, and so the message is
$1111 1111 0000 1111 01$.
\begin{enumerate}[(I)]
\item Knuth's method yields $\bz= 0000 1111 0000 1111$ with index $k =4$. 
\item So, $d={\rm Syn}(\bz)= 20 \ppmod{2n}$.  The binary representations of $d$ and $k$ are
$\bd = 10100$  and $\bk = 0100$, respectively.
\item Hence, we have $\by\bd\bk= 01 10100 0100$ and use Encoder 1 to compute
$\bc=\enc_{\mathbb{L}}(\by\bd\bk)= 11 0 1 110 1 1000100 1$.
\item So, the DNA codeword is $\bsg={\tt TTAT GGCG TAAA GCCG}$.
\end{enumerate}
\end{example}

To demonstrate that the map $\enc_{\tG\tC{\rm balanced}}$ corrects a single edit,
we provide an explicit decoding algorithm. 
\vspace{1mm}

\noindent{\bf \GC-Balanced Decoder}. For any $n$, set $m=2n-3\ceil{\log n}-2$.

{\sc Input}: $\bsg\in \D^{n*}$\\
{\sc Output}: $\bx\by=\dec_{\tG\tC}(\bsg) \in\{0,1\}^m$\\[-2mm]
\begin{enumerate}[(I)]
\item Compute $\hat{\bz}=\bU_\bsg$ and $\hat{\bc}=\bL_\bsg$.
\item Compute $\bc\triangleq \dec^L_a(\hat{\bc})$ to correct a single edit in $\hat{c}$.
\item Remove the suffices $\bd$ and $\bk$ to obtain $\by$ and figure out the syndrome $d$ and balancing index $k$.
\item Using the syndrome $d$, compute $\bz\triangleq\dec^L_d(\hat{\bz})$ to correct a single edit in $\hat{\bz}$.
\item Using the index $k$, compute $\bx\triangleq\dec^K(\bz,k)$.
\end{enumerate} 

\begin{theorem}
The map $\enc_{\tG\tC}$ is a $\tG\tC$-balanced single-edit-correcting encoder with redundancy
$3\ceil{\log n}+2$.
\end{theorem}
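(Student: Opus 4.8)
The plan is to establish three things in turn: that every output $\bsg=\enc_{\tG\tC}(\bx\by)$ is \GC-balanced, that the \GC-Balanced Decoder recovers $\bx\by$ from any word of $\Bedit(\bsg)$, and that the redundancy and running time match the claim. The balancedness part I would dispatch first, and it is immediate: by Step~(IV) of the encoder $\bsg=\Psi^{-1}(\bz\,||\,\bc)$, so $\bU_\bsg=\bz$ where $(\bz,k)=\enc^K(\bx)$; Knuth's theorem says $\bz$ is balanced, and Proposition~\ref{prop:obs}(a) then gives that $\bsg$ is \GC-balanced.

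The substantive part is verifying the decoder. First I would record the reduction that a single edit on $\bsg$ induces at most a single edit on \emph{each} of the binary sequences $\bU_\bsg=\bz$ and $\bL_\bsg=\bc$: a single nucleotide substitution flips at most one bit in each of the upper and lower sequences, while a single indel is handled by Remark~\ref{rem:observation}. Thus $\hat\bz\triangleq\bU_{\bsg'}\in\Bedit(\bz)$ and $\hat\bc\triangleq\bL_{\bsg'}\in\Bedit(\bc)$, which is precisely what the decoder is handed in Step~(I). Then I would trace the decoder's bootstrapping logic: since $\bc=\enc_{\mathbb{L}}(\by\bd\bk)\in{\rm L}_a(n)$ for the fixed parameter $a$ of Encoder~$\mathbb{L}$, Theorem~\ref{thm:lev} gives $\dec^L_a(\hat\bc)=\bc$; applying $\dec_{\mathbb{L}}$ recovers $\by\bd\bk$, hence $\by$, the syndrome $d$ (the length-$(t+1)$ block $\bd$) and the balancing index $k$ (the length-$t$ block $\bk$). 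Now that $d$ is in hand, since $d\equiv{\rm Syn}(\bz)\ppmod{2n}$ by construction we have $\bz\in{\rm L}_d(n)$, so a second application of Theorem~\ref{thm:lev} gives $\dec^L_d(\hat\bz)=\bz$, and finally $\dec^K(\bz,k)=\bx$, so the decoder outputs $\bx\by$. I would also check the routine length accounting in passing: $|\by\bd\bk|=(n-3t-2)+(t+1)+t=n-t-1$ is exactly the input length Encoder~$\mathbb{L}$ expects for codewords of length $n$, and $2^{t+1}\ge 2n$, $2^t\ge n$, so $\bd,\bk$ suffice to store $d$ and $k$.

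The redundancy is then a one-line count: the codeword has $n$ nucleotides, i.e.\ $2n$ bits of capacity, and the message has $m=2n-3t-2$ bits, so the redundancy is $2n-m=3t+2=3\ceil{\log n}+2$. Linearity of the encoder and the decoder follows because every component --- Knuth's algorithm, $\enc_{\mathbb{L}}$ and $\dec_{\mathbb{L}}$, the Levenshtein decoder $\dec^L$ of Theorem~\ref{thm:lev}, $\dec^K$, and the maps $\Psi,\Psi^{-1}$ together with the extraction of $\bU_\bsg,\bL_\bsg$ --- runs in $O(n)$ time.

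I expect the only genuinely delicate point to be the \emph{order} of the decoding steps, which I would want to state explicitly: the VT parameter $d$ of the Levenshtein class containing $\bz$ is itself part of the payload, so it must be protected (inside $\bc$) and recovered before $\hat\bz$ can be decoded; this is exactly why $d$ is carried in the lower sequence, and correctness hinges on $\bc$ being decodable with no side information, which is ensured by $\bc\in{\rm L}_a(n)$. The accompanying ``one edit on $\bsg$ $\Rightarrow$ one edit on each of $\bU_\bsg,\bL_\bsg$'' reduction --- split cleanly into the substitution and indel cases --- is the other thing to write out with care.
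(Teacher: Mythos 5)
Your proposal is correct and follows the same route as the paper: the paper's proof likewise points to the \GC-Balanced Decoder for edit correction and to the fact that $\bU_\bsg=\bz$ is Knuth-balanced for the \GC-balance claim, with you simply spelling out the decoder's bootstrapping (decode $\bc\in{\rm L}_a(n)$ first, extract $d$ and $k$, then decode $\hat\bz$ via ${\rm L}_d(n)$ and invert Knuth's flip) in more detail than the paper does.
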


\begin{proof}
The $\tG\tC$-Balanced Decoder shows that the $\tG\tC$-Balanced Encoder corrects a single edit. 
Hence, it remains to verify that any codeword $\bsg=\enc_{\tG\tC{\rm balanced}}(\bx\by)$ is $\tG\tC$-balanced.
This follows from the fact that $\bU_\bsg=\bz$ is the binary balanced word obtained from balancing $\bx$.
\end{proof}

%
%

\section{Conclusion}
We provide order-optimal quaternary encoders for codes that can correct a single either indel or edit. The encoders map binary messages into quaternary codes in linear-time and the redundancy is at most $\log n+O(\log \log n)$. Moreover, in the case for indel, our encoder uses only $\lceil\log{n}\rceil +2$ redundant bits,  improving the best known encoder of Tenengolts (1984) by at least four bits. We also present linear-time encoders for \GC-balanced codes that can correct a single edit. 
The redundancy of this encoder is $3\lceil\log{n}\rceil+2$.
\vspace{3mm}


\end{document}